\newcounter{cdc}
\title{The Fundamental Limitations of Learning Linear-Quadratic Regulators}
\author[1]{Bruce D. Lee}
\author[1]{Ingvar Ziemann}
\author[2]{Anastasios Tsiamis}
\author[3]{Henrik Sandberg}
\author[1]{Nikolai Matni}
\affil[1]{Department of Electrical and Systems Engineering, University of Pennsylvania}
\affil[2]{Automatic Control Laboratory, ETH Zurich}
\affil[3]{Division of Decision and Control Systems, KTH Royal Institute of Technology}
\date{}    
\begin{document}

\maketitle

\begin{abstract}
We present a local minimax lower bound on the excess cost of designing a linear-quadratic controller from offline data. The bound is valid for any offline exploration policy that consists of a stabilizing controller and an energy bounded exploratory input. The derivation leverages a relaxation of the minimax estimation problem to Bayesian estimation, and an application of Van Trees' inequality.  We show that the bound aligns with system-theoretic intuition. In particular, we demonstrate that the lower bound increases when the optimal control objective value increases. We also show that the lower bound increases when the system is poorly excitable, as characterized by the spectrum of the controllability gramian of the system mapping the noise to the state and the $\calH_\infty$ norm of the system mapping the input to the state. We further show that for some classes of systems, the lower bound may be exponential in the state dimension, demonstrating exponential sample complexity for learning the linear-quadratic regulator offline.  
\end{abstract}
\section{Introduction}
\label{s: introduciton}

Reinforcement Learning (RL) has demonstrated success in a variety of domains, including robotics \citep{levine2016end} and games \citep{silver2017mastering}. However, it is known to be very data intensive, making it challenging to apply to complex control tasks. This has motivated efforts by both the machine learning and control communities to understand the statistical hardness of RL in analytically tractable settings, such as the tabular setting \citep{azar2017minimax} and the linear-quadratic control setting \citep{simchowitz2020naive}. 
Such studies provide insights into the fundamental limitations of RL, and the efficiency of particular algorithms. 


There are two common problems of interest for understanding the statistical hardness of RL from the perspective of learning a linear-quadratic regulator (LQR): online LQR, and offline LQR. Online LQR models an interactive problem in which the learning agent attempts to minimize a regret-based objective, while simultaneously learning the dynamics \citep{abbasi2011regret}.  Offline LQR  models a two-step pipeline, where data from the system is collected, and then used to design a controller \citep{dean2019sample}. Guarantees in the online setting are in the form of regret bounds, whereas the offline setting focuses on Probably Approximately Correct (PAC) guarantees. The high data requirements of RL often render offline approaches the only feasible option for physical systems \cite{levine2020offline}. Despite this fact, recent years have seen greater efforts to provide lower bounds for the online LQR problem \citep{ziemann2022regret, tsiamis2022statistical}. Meanwhile, lower bounds in the offline LQR setting are conspicuously absent. Motivated by this fact, we derive lower bounds for designing a linear-quadratic controller from offline data. 


\textbf{Notation: } The Euclidean norm of a vector $x$ is denoted by $\norm{x}$. The quadratic norm of a vector $x$ with respect to a matrix $P$ is denoted $\norm{x}_P = \sqrt{x^\top P x}$. For a matrix $A$, the spectral norm is denoted $\norm{A}$ and the Frobenius norm is denoted $\norm{A}_F$. The spectral radius of a square matrix $A$ is denoted $\rho(A)$. A symmetric, positive semidefinite matrix $A = A^\top$ is denoted $A \succeq 0$, and a symmetric, positive definite matrix is denoted $A \succ 0$. Similarly, $A \succeq B$ denotes that $A-B$ is positive semidefinite. The eigenvalues of a symmetric positive definite matrix $A \in \R^{n \times n}$ are denoted $\lambda_1(A), \dots, \lambda_n(A)$, and are sorted in non-ascending order. We also denote $\lambda_1(A) = \lambda_{\max}(A)$, and $\lambda_{n}(A) = \lambda_{\min}(A)$. For a matrix $A$, the vectorization operator $\VEC A$ maps $A$ to a column vector by stacking the columns of $A$. The kronecker product of $A$ with $B$ is denoted $A \otimes B$. Expectation and probability with respect to all the randomness of the underlying probability space are denoted $\E$ and $\bfP$, respectively. Conditional expectation and probability given the random variable $X$ are denoted by $\E[\cdot\vert X]$ and $\bfP[\cdot\vert X]$. For an event $\calG$, $\mathbf{1}_\calG$ denotes the indicator function for $\calG$.
For a matrix $A\in\R^{\dx \times \dx}$ and a symmetric matrix $Q \in \R^{\dx \times \dx }$, we denote the solution $P$ to the discrete Lyapunov equation, $A^\top P A - P + Q = 0$, by $\dlyap(A, Q)$.  If we also have $B \in \R^{\du \times \du}$ and   
 $R\in\R^{\du\times \du}$, $R\succ 0$, we denote the solution $P$ to the discrete algebraic Riccati equation $Q + A^\top P A - A^\top P B(B^\top P B + R)^{-1}B^\top P A = 0$ by $\dare(A, B, Q, R)$. We use the indexing shorthand $[K] := \curly{1,\dots,K}$.


\paragraph{Problem Formulation}
Let $\theta \in \mathbb{R}^{d_\Theta}$ be an unknown parameter. We study the fundamental limitations to learning to control the following parametric system model:
\begin{equation}
\begin{aligned}
\label{eq:lds_ac}
X_{t+1} \!=\! A(\theta) X_t \!+\! B(\theta) U_t \! +\! W_t, \, X_0 \!=\! 0, 
\quad t\!=\!0,1,\dots.
\end{aligned}
\end{equation}
The noise process $W_t$ is assumed to be \iid\  mean zero Gaussian with fixed covariance matrices $\Sigma_W\succ 0$. The matrices $A(\theta) \in \mathbb{R}^{\dx\times\dx}$ and $B(\theta)\in\mathbb{R}^{\dx\times\du}$
are known continuously differentiable functions of the unknown parameter. The system $(A(\theta), B(\theta))$ is assumed to be stabilizable. 

We assume that the learner is given access to $N \in \mathbb{N}$ experiments $(X_{0,n},\dots,X_{T-1,n}), n\in[N]$ from \eqref{eq:lds_ac} of length $T \in \mathbb{N}$. The input signal during these experiments is  
\begin{align}
    \label{eq: input}
    U_{t,n} = F X_{t,n} + \tilde U_{t,n},
\end{align}
where $F$ renders the system stable\footnote{Access to a stabilizing controller is often assumed unstable system identification \cite{ljung1998system}. Open-loop unstable identification leads to poor conditioning.}, i.e. $\rho(A(\theta)+B(\theta) F) <1$. Meanwhile, $\tilde U_{t,n}$ is an exploration component with energy budget $ \sigma_{\tilde u}^2 NT$,\footnote{The choice to place a budget on the exploratory input $\tilde U_{t,n}$ rather than the total input $U_{t,n}$ is for ease of exposition. The energy of the exploratory input is bounded by the total budget, which is sufficient for our bounds.} where $ \sigma_{\tilde u}\in \mathbb{R}_+$. More precisely, $\tilde U_{t,n}$ may be selected as a function of past observations $(X_{0,n},\dots,X_{t,n})$, past trajectories $(X_{0,m},\dots,X_{T-1,m}),m<n$ and possible auxiliary randomization, while being constrained to an energy budget
\begin{align}
\label{eq:budgeteq}
   \frac{1}{NT} \sum_{n=1}^N \sum_{t=0}^{T-1} \E_\theta \tilde U_{t,n}^\top  \tilde U_{t,n} \leq \sigma_{\tilde u}^2 .
\end{align}
 This formulation allows both open- and closed-loop experiments, but normalizes the average exploratory input energy to $ \sigma_{\tilde u}^2$. The subscript $\theta$ on the expectation denotes that the system is rolled out with parameter $\theta$. For a fixed parameter $\theta$, we denote the data collected from these experiments by the random variable $\calZ := \{\{(X_{t,n}, U_{t,n}\}_{t=0}^{T-1}\}_{n=1}^N$. 

The learner deploys a policy $\pi$ which is a measureable function of the $N$ offline experiments and the current state.  
In particular, the learner maps the offline data and the current state to the control input, $U_t = \pi(X_t; \calZ)$. This is the case if the learner outputs a non-adaptive state feedback controller designed with the offline data. The goal of the learner is to minimize the cost defined by: 
\begin{align*}
V_T^{\pi}(\theta) \!&:= \!\frac{1}{T}\! \E_\theta^\pi \! \left[ \sum_{t=0}^{T-1}\! \left(  X_t^\T Q X_t +U_t^\T R U_t\right) \!+\! X_T^\T Q_T(\theta) X_T
\right]\!.
\end{align*}
The expectation is over both the offline experiments, and a new evaluation rollout. Single subscripts on the states and actions, $X_t$ and $U_t$, refer to the evaluation rollout at time $t$. 
The superscript on the expectation denotes that the inputs applied in the evaluation rollout follow the policy $U_t= \pi(X_t; \calZ)$. Note that due to the dependence of the terminal cost $Q_T(\theta)$ on the unknown parameter $\theta$, the learner does not explicitly know the cost function it is minimizing. This is not an issue: it simply means that the learner must infer the objective function from the collected data.

The following assumption guarantees the existence of a static state feedback controller that minimizes $V_T^\pi(\theta)$.
\begin{assumption}
    We assume $(A(\theta), B(\theta))$ is stabilizable, $(A(\theta), Q^{1/2})$ is detectable, and $R \succ 0$ and that $Q_T(\theta) = P(\theta)$, where $P(\theta) = \dare(A(\theta),B(\theta),Q,R)$.
\end{assumption}
 Under this assumption, the optimal policy for the known system is $U_t = K(\theta) X_t$, where $K(\theta)$ is the LQR:
\begin{align*}
    K(\theta) = -(B(\theta)^\top P(\theta) B(\theta) + R)^{-1} B(\theta)^\top P(\theta) A(\theta). 
\end{align*}
In light of this, we focus on the case in which the search space of the learner is the class of linear time-invariant state feedback policies where the gain is a measurable function of the past $N$ experiments\footnote{This assumption is not critical, and may be removed without significantly changing the result. See the proof of the main result in \cite{ziemann2022regret} for details on how to remove this assumption.}. This set is denoted $\Pi_{\mathsf{lin}}$.

The stochastic LQR cost $V_T^\pi(\theta)$ may be represented in terms of the gap between the control actions taken by the policy $\pi$ and the optimal policy, as shown below.
\begin{lemma}[Lemma 11.2 of \cite{soderstrom2002discrete}]
    \label{lem: stochastic lqr representation}
    We have that
    \[
        V_T^\pi(\theta) = \trace(P(\theta) \Sigma_W) + \frac{1}{T}\sum_{t=0}^{T-1}\!\E_{\theta}^\pi \norm{U_t- K(\theta)X_t }^2_{\Psi(\theta)},
    \]
    where $\Psi(\theta) := B^\T(\theta) P(\theta)B(\theta)+R$.
\end{lemma}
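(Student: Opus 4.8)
This is the standard completion-of-squares identity for the finite-horizon stochastic LQR cost, and I would establish it by a dynamic-programming telescoping argument anchored at the Riccati solution $P := P(\theta)$. Write $K := K(\theta)$ and $\Psi := \Psi(\theta) = B^\top P B + R$, suppressing the $\theta$-dependence of $A, B$. From $K = -\Psi^{-1} B^\top P A$ one has $B^\top P A = -\Psi K$ and $A^\top P B \Psi^{-1} B^\top P A = K^\top \Psi K$, so the discrete algebraic Riccati equation $P = Q + A^\top P A - A^\top P B \Psi^{-1} B^\top P A$ rearranges to $Q + A^\top P A - P = K^\top \Psi K$. Using these two identities, a direct expansion (collecting the $u$-quadratic, $u$--$x$ cross, and $x$-quadratic terms) yields the pointwise identity
\[
x^\top Q x + u^\top R u + (Ax+Bu)^\top P (Ax+Bu) - x^\top P x = \norm{u - Kx}_{\Psi}^2, \qquad \forall\, x \in \R^{\dx},\ u \in \R^{\du}.
\]

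Next I would instantiate this with $x = X_t$ and $u = U_t$ along the evaluation rollout, so that $A X_t + B U_t = X_{t+1} - W_t$, and take $\E_\theta^\pi$. The one probabilistic input needed is that in the evaluation rollout the control $U_t = \pi(X_t; \calZ)$ is a function of $X_t$ and of the offline data $\calZ$ only, hence is measurable with respect to $\mathcal{F}_t := \sigma(\calZ, W_0, \dots, W_{t-1})$, while the fresh noise $W_t$ is independent of $\mathcal{F}_t$ with $\E W_t = 0$ and $\E W_t W_t^\top = \Sigma_W$. Conditioning on $\mathcal{F}_t$ therefore kills the cross term and gives $\E_\theta^\pi (A X_t + B U_t)^\top P (A X_t + B U_t) = \E_\theta^\pi X_{t+1}^\top P X_{t+1} - \trace(P \Sigma_W)$; all these expectations are finite since $T$ is finite and the dynamics are linear with Gaussian noise. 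Substituting into the expectation of the pointwise identity gives, for each $t$,
\[
\E_\theta^\pi\!\left[ X_t^\top Q X_t + U_t^\top R U_t \right] + \E_\theta^\pi X_{t+1}^\top P X_{t+1} - \E_\theta^\pi X_t^\top P X_t - \trace(P \Sigma_W) = \E_\theta^\pi \norm{U_t - K X_t}_{\Psi}^2 .
\]

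Finally, summing over $t = 0, \dots, T-1$, the terms $\E_\theta^\pi X_{t+1}^\top P X_{t+1} - \E_\theta^\pi X_t^\top P X_t$ telescope to $\E_\theta^\pi X_T^\top P X_T - \E_\theta^\pi X_0^\top P X_0 = \E_\theta^\pi X_T^\top P X_T$ (using $X_0 = 0$), and the trace terms sum to $T\,\trace(P \Sigma_W)$. Invoking $Q_T(\theta) = P(\theta)$ to identify $\E_\theta^\pi X_T^\top P X_T$ with the terminal cost, and dividing by $T$, delivers exactly
\[
V_T^\pi(\theta) = \trace(P(\theta) \Sigma_W) + \frac{1}{T} \sum_{t=0}^{T-1} \E_\theta^\pi \norm{U_t - K(\theta) X_t}_{\Psi(\theta)}^2 .
\]
The only place where any care is required is the conditioning argument for the noise cross-term — in particular that $U_t$ does not anticipate $W_t$, which holds because $\calZ$ is collected offline and is independent of the evaluation rollout; everything else is algebra driven by the Riccati equation.
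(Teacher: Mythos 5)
Your proof is correct and is exactly the standard completion-of-squares/telescoping argument; the paper itself gives no proof, deferring to Lemma 11.2 of S\"oderstr\"om, whose derivation proceeds the same way (Riccati-driven pointwise identity, conditioning to kill the noise cross-term, telescoping with $X_0=0$ and $Q_T(\theta)=P(\theta)$). Your explicit remark that $U_t$ does not anticipate $W_t$ because $\calZ$ is independent of the fresh evaluation noise is precisely the one probabilistic point that needs to hold, and it does in this setting.
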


Using the above lemma, the objective of the learner may be restated from minimizing $V_T^\pi(\theta)$ to minimizing the excess cost: 
\begin{equation}
\begin{aligned}
\label{eq: suboptality quadratic form}
\! \mathsf{EC}_T^\pi(\theta) \! &:=\! V_T^\pi(\theta) \!-\!\inf_{\hat \pi} V_T^{\hat \pi}(\theta)\!=\!\frac{1}{T}\!\sum_{t=0}^{T-1}\!\E_{\theta}^\pi \! \norm{U_t\!-\! K(\theta)X_t }^2_{\Psi(\theta)}\!.
\end{aligned}
\end{equation}
The second equality follows from the representation of the stochastic LQR cost in \Cref{lem: stochastic lqr representation} by cancelling the constant terms. Note that the infimum in the second term is given access to the true parameter value $\theta$, and will therefore be attained by the optimal LQR controller. In particular, it does not rely upon the offline experimental data. We denote this optimal policy by $\pi_{\theta}(X_t; \calZ) = K(\theta) X_t$.   

Our objective is to lower bound the excess cost for any learning agent in the class $\Pi_{\mathsf{lin}}$. To this end, we introduce the $\e$-local minimax excess cost:
\begin{align}
    \label{eq: minimax suboptimality}
    \mathcal{EC}^{\mathsf{lin}}_T(\theta,\e) := \inf_{\pi \in \Pi_{\mathsf{lin}}} \sup_{\|\theta'-\theta\|\leq \e} \mathsf{EC}_T^\pi(\theta').
\end{align}
To motivate this choice,
first note that if we were instead interested in an excess cost bound for only a single value of $\theta$ that holds for all estimators, the optimal policy would trivially be the LQR, $\pi(X_t, \calZ) = K(\theta)X_t$. This policy would result in a lower bound of zero. By instead requiring that the learner perform well on all parameter instances in a nearby neighborhood, we remove the possibility of the trivial solution, and can achieve meaningful lower bounds. The emphasis of the nearby neighborhood in \eqref{eq: minimax suboptimality} is essential. As the local neighborhood defined by the ball of radius $\e$, $\calB(\theta,\e) = \curly{\theta'\vert \norm{\theta'-\theta}\leq \e}$, becomes sufficiently small,  we are still able to provide instance-specific lower bounds for a single parameter value $\theta$. Therefore, the $\e$-local minimax excess cost is a much stronger notion than the standard \emph{global} minimax excess cost,  $\inf_{\pi \in \Pi_{\mathsf{lin}}} \sup_{\theta'} \mathsf{EC}_T^\pi(\theta')$, as it does not require our estimator to perform well on \emph{all possible} parameter values but only those in a small (possibly infinitesimal) neighborhood. Indeed, the global minimax excess cost for learning the optimal controller of the class of unknown stable scalar systems is infinite, as shown in \Cref{cor: global minimax}, and illustrated in \Cref{fig: hard to control instance}.

\begin{figure}
    \centering
    \ifnum\value{cdc}>0{ 
    \includegraphics[width=0.35\textwidth]{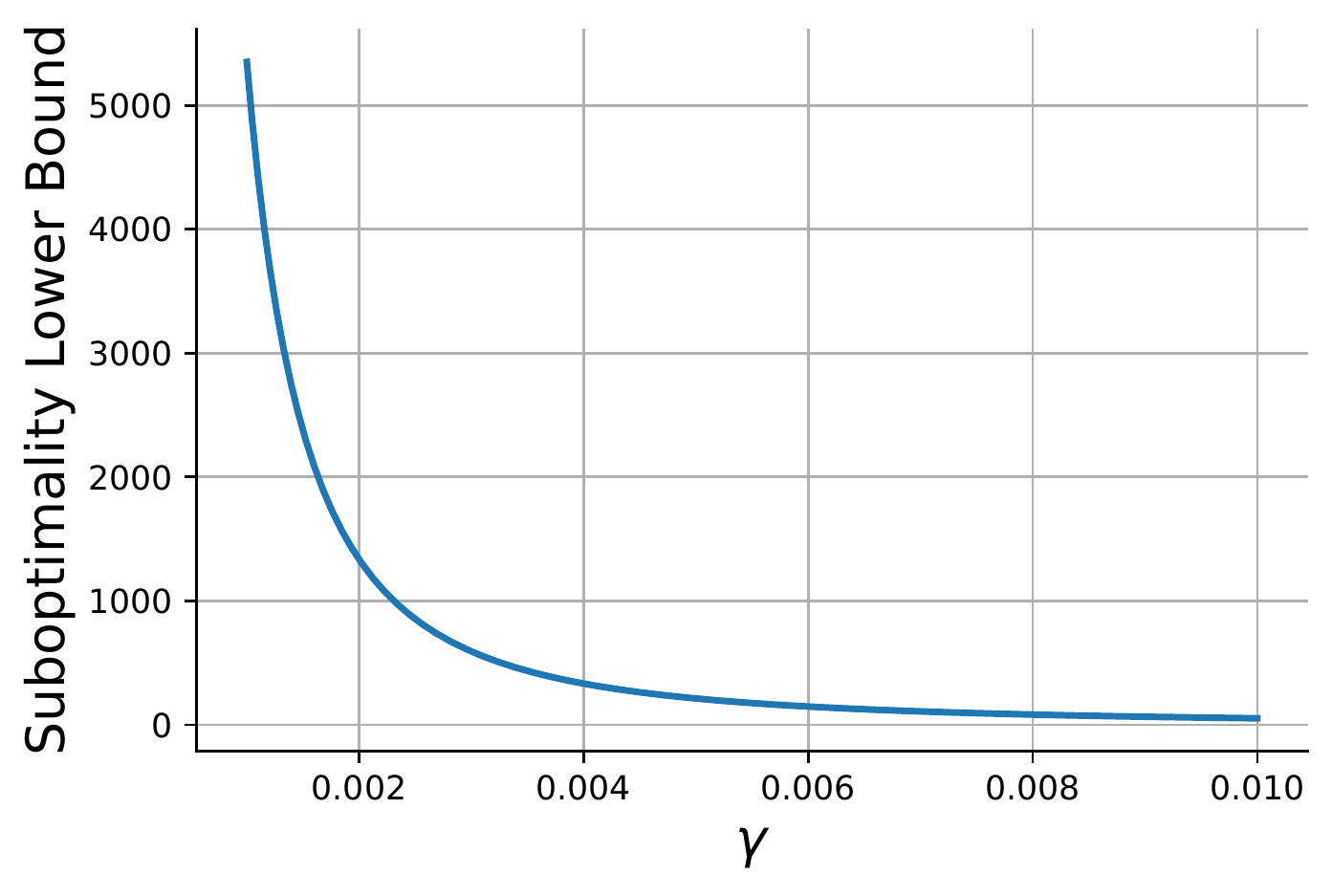}
    }\else{
     \includegraphics[width=0.5\textwidth]{figures/lower_bound_minimax_complexity.pdf}
    }\fi
    \vspace{-0.1in}
    \caption{Consider the scalar system $X_{t+1} = a X_t + b U_t + W_t$. We plot a lower bound arising from \Cref{cor: asymptotic lower bound} letting $V = \bmat{0 & 1}^\top$ for the system $a = 1- \gamma$, $b = \gamma$ as $\gamma$ ranges from $10^{-3}$ to $10^{-2}$ with $F = 0$, $ \sigma_{\tilde u}^2 =1$, $\Sigma_W = 1$. As $\gamma \to 0$ the optimally regulated system approaches marginal stability and  controllability is lost. The problem of learning a controller therefore becomes challenging as $\gamma \to 0$, which is reflected by our excess cost lower bound; it approaches $\infty$. This illustrates the observation that systems which are difficult to control are also difficult to learn to control. This plot illustrates the result \Cref{cor: global minimax} in demonstrating that the global minimax excess cost is uninformative. 
    }
    \ifnum\value{cdc}>0{\vspace{-.25in}}\else{\vspace{-0.15in}}\fi
    \label{fig: hard to control instance}
\end{figure}

Our focus in obtaining the lower bound on $\mathcal{EC}^{\mathsf{lin}}_T(\theta,\e)$ is to gain an understanding of what system-theoretic quantities render the learning problem statistically challenging. To this end, our lower bound depends on familiar system-theoretic quantities, such as $P(\theta)$. The covariance of the state under the optimal LQR controller also appears in our analysis. Under the optimal LQR controller, the covariance of the state converges to the stationary covariance as $T \to \infty$: 
 \ifnum\value{cdc}>0{
 \begin{align*}
     \Sigma_X(\theta) &= \lim_{t \to \infty} \E^{\pi_\theta}_{\theta} \brac{X_t X_t^\top}  \\
        &= \dlyap(\paren{A(\theta) + B(\theta) K(\theta)}^\top, \Sigma_W).
 \end{align*}
 }\else{
\[
    \Sigma_X(\theta) := \lim_{t \to \infty} \E^{\pi_\theta}_{\theta} \brac{X_t X_t^\top} =  \dlyap(\paren{A(\theta) + B(\theta) K(\theta)}^\top, \Sigma_W).  
\]
}\fi



\subsection{Contributions}

    Our main contribution is the following theorem. For the formal statements, see \Cref{thm: finite data bound} and \Cref{cor: asymptotic lower bound}. 
\begin{theorem}[Main result, Informal]
\label{thm: main result informal}
    The $\e$-local minimax excess cost is lower bounded as
    \[
        \textrm{excess cost} \geq  \frac{\textrm{system-theoretic condition number}}{\textrm{\# data points} \times \textrm{signal-to-noise ratio}}.
    \]  
\end{theorem}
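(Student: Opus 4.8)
The plan is to convert the control objective into a matrix estimation problem, relax the local minimax to a Bayesian problem along a one-parameter sub-family of $\theta$, and invoke the Van Trees inequality. First I would specialize to $\Pi_{\mathsf{lin}}$, so that $\pi(X_t;\calZ) = \hat K(\calZ)X_t$, and use \Cref{lem: stochastic lqr representation} to write, for any $\theta'$,
\[
\mathsf{EC}_T^\pi(\theta') \;=\; \frac{1}{T}\sum_{t=0}^{T-1}\E_{\theta'}\trace\!\big((\hat K - K(\theta'))\,\E_{\theta'}[X_tX_t^\top\mid\calZ]\,(\hat K - K(\theta'))^\top\Psi(\theta')\big),
\]
where $\hat K = \hat K(\calZ)$ and $\E_{\theta'}[X_tX_t^\top\mid\calZ]$ is the covariance of the evaluation rollout under $\hat K$. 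Since $X_0 = 0$, the one-step noise term forces $\E_{\theta'}[X_tX_t^\top\mid\calZ]\succeq\Sigma_W$ for every $t\geq 1$ regardless of whether $\hat K$ stabilizes, so $\mathsf{EC}_T^\pi(\theta')$ is at least $\tfrac{T-1}{T}\E_{\theta'}\trace\big((\hat K - K(\theta'))\Sigma_W(\hat K - K(\theta'))^\top\Psi(\theta')\big)$; when $\hat K$ stabilizes and $T\to\infty$ one may replace $\Sigma_W$ by the larger stationary covariance $\Sigma_X(\theta')$, which drives the asymptotic corollary. In either case it suffices to lower bound the Bayes risk of estimating the matrix functional $\theta'\mapsto K(\theta')$ under a known positive-definite quadratic loss.

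Next I would relax $\sup_{\norm{\theta'-\theta}\leq\e}$ to an average. Fix a direction $V\in\R^{d_\Theta}$ and a smooth prior $\lambda$ supported in $[-\e/\norm{V},\,\e/\norm{V}]$ whose density vanishes to first order at the endpoints (so that the integration by parts underlying Van Trees is valid), and restrict attention to the curve $s\mapsto\theta + sV$, with $\calZ$ drawn at parameter $\theta+sV$ for $s\sim\lambda$. Then $\mathcal{EC}^{\mathsf{lin}}_T(\theta,\e) \geq \int\mathsf{EC}_T^\pi(\theta+sV)\,\lambda(s)\,ds$. Scalarizing the loss by projecting the gain error onto unit vectors $a,b$ that are eigenvectors of $\Psi(\theta)$ and of $\Sigma_W$ (eigenvalues $\psi,\sigma$), the integrand dominates $\psi\,\sigma\,\langle a,(\hat K - K(\theta+sV))b\rangle^2$ up to an $O(\e)$ correction, and the scalar Van Trees inequality gives
\[
\int\mathsf{EC}_T^\pi(\theta+sV)\,\lambda(s)\,ds \;\geq\; \psi\,\sigma\cdot\frac{\big(\int \partial_s\langle a,K(\theta+sV)b\rangle\,\lambda(s)\,ds\big)^2}{\int \mathcal{I}_s(\calZ)\,\lambda(s)\,ds + \mathcal{I}(\lambda)},
\]
where $\mathcal{I}_s(\calZ)$ is the Fisher information the offline data carry about $s$ and $\mathcal{I}(\lambda)$ is the Fisher information of $\lambda$, of order $\norm{V}^2/\e^2$. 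To leading order the numerator is $\psi\sigma\,\langle a,\dot K b\rangle^2$ with $\dot K = \sum_i V_i\,\partial_{\theta_i}K(\theta)$; differentiating the Riccati expression for $K(\theta)$ exhibits the dependence of $\dot K$ on $P(\theta)$ and its sensitivity, and optimizing over $a,b,V$ turns this numerator into the system-theoretic condition number of \Cref{thm: main result informal}.

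The crux, and the step I expect to be the main obstacle, is to upper bound $\mathcal{I}_s(\calZ)$ uniformly over the curve and over \emph{every} admissible exploration scheme obeying the energy constraint \eqref{eq:budgeteq}, including adaptive, closed-loop choices of $\tilde U_{t,n}$. For the Gaussian model \eqref{eq:lds_ac} the information is $\mathcal{I}_s(\calZ) = \sum_{t,n}\E\norm{(\dot A + \dot B F)X_{t,n} + \dot B\,\tilde U_{t,n}}_{\Sigma_W^{-1}}^2$, with $\dot A,\dot B$ the directional derivatives of $A,B$ along $V$; by Cauchy--Schwarz it is controlled by $\sum_{t,n}\E\norm{X_{t,n}}^2$ and by $\sum_{t,n}\E\norm{\tilde U_{t,n}}^2 \leq \sigma_{\tilde u}^2 NT$. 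The second moments of the exploration states must then be charged to the input energy: writing $X_{t,n}$ as the output of the closed-loop map $(zI - (A(\theta)+B(\theta)F))^{-1}$ driven by $B(\theta)\tilde U$ plus process noise, the deterministic $\ell_2\!\to\!\ell_2$ gain bound applied pathwise shows the exploratory contribution is at most $\norm{(zI - (A(\theta)+B(\theta)F))^{-1}B(\theta)}_{\calH_\infty}^2\,\sigma_{\tilde u}^2 NT$, while the noise contribution is at most $NT\cdot\trace\big(\dlyap((A(\theta)+B(\theta)F)^\top,\Sigma_W)\cdot(\dot A + \dot B F)^\top\Sigma_W^{-1}(\dot A + \dot B F)\big)$ since the noise-driven covariance is monotone in the horizon. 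The appearance here of the $\calH_\infty$ norm of the input-to-state map and of the closed-loop controllability gramian is exactly what makes weakly-actuated, poorly-excitable systems statistically hard. The delicate point is that this must hold for arbitrary, possibly adversarially adaptive $\tilde U$, so one must argue through the energy/gain interpretation of the $\calH_\infty$ norm rather than by expanding the trajectory, which would be too lossy to expose the excitability dependence.

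Finally, choosing the length scale of $\lambda$ proportional to $\e$ so that $\mathcal{I}(\lambda)$ is of order $\norm{V}^2/\e^2$, combining with the bound $\int\mathcal{I}_s(\calZ)\lambda(s)\,ds \leq C\cdot NT\,\sigma_{\tilde u}^2\cdot(\text{excitation factor})$, and optimizing the resulting ratio over $V$ and over the scalarization vectors $a,b$ yields a bound of the form
\[
\mathcal{EC}^{\mathsf{lin}}_T(\theta,\e) \;\geq\; c\cdot\frac{\big(\text{gradient of }K\text{ at }\theta\text{, weighted by }\Psi(\theta)\text{ and the state covariance}\big)}{NT\,\sigma_{\tilde u}^2\cdot(\text{excitation factor}) \;+\; \norm{V_\star}^2/\e^2},
\]
which, after identifying the numerator with the system-theoretic condition number and the denominator with the number of data points times a signal-to-noise ratio, is \Cref{thm: main result informal}; the finite-data bookkeeping is carried out in \Cref{thm: finite data bound} and the $T\to\infty$, $\e\to0$ limit in \Cref{cor: asymptotic lower bound}.
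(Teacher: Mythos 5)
Your proposal is correct and follows essentially the same route as the paper: reduce the excess cost to a weighted quadratic loss on $\hat K(\calZ)-K(\theta')$ via \Cref{lem: stochastic lqr representation}, lower-bound the state covariance by $\Sigma_W$ (or by $\Sigma_X$ asymptotically), relax the local supremum to a Bayesian average along a restricted perturbation subspace, apply Van Trees, and upper-bound the Fisher information by charging the exploratory contribution to the $\calH_\infty$ gain times the energy budget and the noise contribution to the closed-loop controllability gramian — which is exactly the content of \Cref{lem: weak duality lb,lem: application of van trees,lem: coordinate transformation,lem: fisher bound}. The only cosmetic differences are that you scalarize via eigenvectors and a one-dimensional curve where the paper keeps the matrix Van Trees inequality over a $k$-dimensional subspace, and your replacement of $\Sigma_W$ by $\Sigma_X$ is asserted informally where the paper's \Cref{thm: finite data bound} makes it rigorous through the high-probability event $\calE$ and a Markov-inequality bootstrap.
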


In the above bound, the system-theoretic condition number depends on familiar system-theoretic quantities such as the covariance of the state under the optimal controller, and the solution to the Riccati equation. The signal-to-noise ratio depends on how easily the system is excited via both the exploratory input, and the noise. This signal-to-noise ratio may be quantified in terms of the controllability gramian of the system, as well as the exploratory input budget.



We also study several consequences of the above result by restricting attention to the setting where all system parameters are unknown, i.e. $\VEC \bmat{A(\theta) & B(\theta)} = \theta$.  In this setting, \Cref{thm: main result informal} may be reduced to  $\mathcal{EC}^{\mathsf{lin}}_T(\theta,\e) \geq \frac{c(\theta,\e)}{NT}$, where $c(\theta,\e)$ is easily interpretable. In particular, we may reach the following conclusions: \\
\vspace{-8pt}
\begin{blockquote}
    $\bullet$ For classes of system where the operator norm of system-theoretic matrices such as the controllability gramian and the solution to the Riccati equation are constant with respect to dimension, we may take
    $c(\theta, \e) \propto \du \dx.$
    \ifnum\value{cdc}>0{
    We demonstrate that this is the optimal dependence on system dimension for this class of 
    systems. 
    }\else{Combining results from \cite{mania2019certainty} and \cite{tu2022learning} demonstrates that when $\du \leq \dx$, the upper bound on the excess cost is also proportional to $\frac{\dx \du}{NT}$. In particular, our bound is optimal in the dimension for underactuated systems when the remaining system-theoretic quantities are constant with respect to dimension.
     }\fi
     \\
    $\bullet$ There exist classes of systems for which we may take $c(\theta, \e) \propto \exp(\dx)$. 
    This demonstrates that the excess cost of a learned LQR controller may grow exponentially in the dimension. \\
    $\bullet$ 
    \ifnum\value{cdc}>0{We may take $c(\theta, \e)$ to grow with iterpretable system-theoretic quantities, such as the eigenvalues of both the solution to the Riccati equation, $P(\theta),$ and the state covariance under the optimal controller, $\Sigma_X(\theta)$.}
    \else{The lower bound grows in an interpretable manner with familiar system-theoretic quantities. In particular, we may take $c(\theta, \e)$ to grow with the eigenvalues of both the solution to the Riccati equation, $P(\theta),$ and the state covariance under the optimal controller, $\Sigma_X(\theta)$. }\fi 
    This suggests that the problem of learning to control a system with a small gap from the optimal controller is data intensive when controlling the underlying system is hard. 
\end{blockquote}

\subsection{Related Work}

\begin{figure*}[t]
    \centering
    \vspace{-24pt}
    \includegraphics[width=0.8\textwidth]{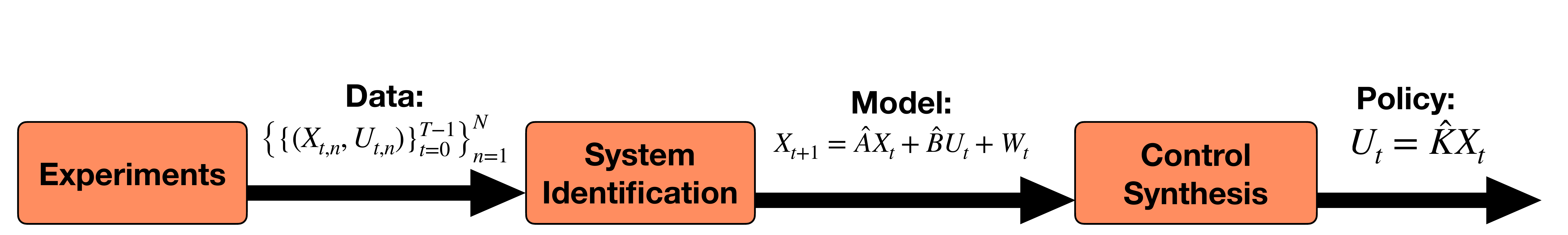}
    \vspace{-0.1in}
    \caption{A classic model-based pipeline for learning a controller from data. }
    \label{fig: control from data pipeline}
    \ifnum\value{cdc}>0{\vspace{-.25in}}\else{\vspace{-0.15in}}\fi
\end{figure*}

\paragraph{System Identification} System identification is often a first step in designing a controller from experimental data, and has a longstanding history. The text \cite{ljung1998system} covers classic asymptotic results. Control oriented identification was studied in \cite{ chen1993caratheodory, helmicki1991control}.  Recently, there has been interest in finite sample analysis for fully-observed linear systems \citep{dean2019sample, simchowitz2018learning, faradonbeh2018finite, sarkar2019near}, and partially-observed linear systems \citep{oymak2019non, sarkar2021finite, simchowitz2018learning, tsiamis2019finite, lee2020non, zheng2020non}. Lower bounds for the sample complexity of system identification are presented in \cite{jedra2019sample, tsiamis2021linear}. For a more extensive discussion of prior work, we refer to  the survey by \cite{tsiamis2022statistical}.

\paragraph{Learning Controllers Offline} 
Learning a controller from offline data is a familiar paradigm for control theorist and practitioners. It  typically consists of system identification, followed by robust \citep{zhou1996robust} or certainty-equivalent \citep{simon1956dynamic} control design, see \Cref{fig: control from data pipeline}. 
Recent work provides finite sample guarantees for such methods \citep{dean2019sample, mania2019certainty}. Upper and lower bounds on the sample complexity of stabilization from offline data are presented in \cite{tsiamis2022learning}. The RL community has a similar paradigm, known as offline RL \citep{levine2020offline}. Policy gradient approaches are a model-free algorithm suitable for offline RL, and are  analyzed in \cite{fazel2018global}. Lower bounds on the variance of the gradient estimates in policy gradient approaches are supplied in \cite{ziemann2022policy}. Lower bounds for offline linear control are also studied in \cite{wagenmaker2021task} with the objective of designing optimal experiments. We instead focus on the LQR setting to understand the dependence of the excess cost on interpretable system-theoretic quantities. 


\paragraph{Online LQR} The problem of learning the optimal LQR controller online has a rich history beginning with 
\cite{aastrom1973self}.  Regret minimization was introduced in \cite{lai1986asymptotically, lai1986extended}. The study of regret in online LQR was re-initiated by
\cite{abbasi2011regret}, inspired by works in the RL community. Many works followed to propose algorithms which were computationally tractable 
\citep{ouyang2017control,dean2018regret, abeille2018improved, mania2019certainty, cohen2019learning, faradonbeh2020input, jedra2021minimal}. Lower bounds on the regret of online  LQR are presented in
\cite{simchowitz2020naive, cassel2020logarithmic, ziemann2022regret}. The results in this paper follow a similar proof to \cite{ziemann2022regret}. The primary difference is that since our controller is designed via offline data, we may not make use of the exploration-exploitation tradeoff to upper bound the information available to the learner, as is done in \cite{ziemann2022regret}.


\section{Excess Cost Lower Bound}
\label{s: suboptimality lower bounds}

We now proceed to establish our lower bound.  \ifnum\value{cdc}>0{Missing proofs and further details may be found in \cite{lee2023fundamental}. }\else{}\fi As we are interested in the worst-case excess cost from any element of $\calB(\theta,\e)$, we make the additional assumption that $F$ stabilizes $(A(\theta'), B(\theta'))$ for all $\theta'\in \calB(\theta,\e)$.\footnote{We ultimately study the limit as $\e$ becomes small. Therefore, this is not significantly stronger than assuming that $F$ stabilizes ($A(\theta), B(\theta)$).} 
This also ensures that the optimal LQR controller exists for all points in the prior.

To obtain a lower bound on the local minimax excess cost, we lower bound the maximization over $\theta' \in \calB(\theta,\e)$ by an average over a distribution supported on $\calB(\theta, \e)$. This reduces the problem to lower bounding a Bayesian complexity. Instead of fixing the parameter $\theta$, we let $\Theta$ be a random vector taking values in $\mathbb{R}^{d_\Theta}$ and suppose that it has prior density $\lambda$. Doing so enables the use of information theoretic tools to lower bound the complexity of estimating the parameter from data. The relaxation of the the maximization is shown in the following lemma. 

\begin{lemma} 
\label{lem: weak duality lb}
Fix $\e> 0$ and let $\lambda$ be any prior on  $\calB(\theta,\e)$. Then for any $\pi \in \Pi^\mathsf{lin}$ with $\pi(X_t, \calZ) = \hat K(\calZ)X_t$,  
\ifnum\value{cdc}>0{
\begin{align*}
    &\sup_{\theta' \in \calB(\theta)} \mathsf{EC}_T^\pi(\theta') \geq \\ 
    & \E  \tr  \paren{[\hat K(\mathcal{Z}) - K(\Theta)])^\T \Psi(\Theta) [\hat K(\mathcal{Z})-K(\Theta)] \Sigma_{\Theta}^{\hat K(\calZ)} }, 
\end{align*}
}\else{\begin{align*}
    &\sup_{\theta' \in \calB(\theta)} \mathsf{EC}_T^\pi(\theta')  \geq \E  \tr  \paren{[\hat K(\mathcal{Z}) - K(\Theta)])^\T \Psi(\Theta) [\hat K(\mathcal{Z})-K(\Theta)] \Sigma_{\Theta}^{\hat K(\calZ)} }, 
\end{align*}}\fi
where $\Sigma_{\Theta}^{\hat K(\calZ)} := \frac{1}{T} \sum_{t=0}^{T-1} \E^{\pi} \brac{X_tX_t^\T \vert \calZ, \Theta}$. The expectation is over the prior $\Theta\sim \lambda$, and the randomness of both the offline rollouts and the evaluation rollout. We recall the shorthand $\Psi(\Theta) = B(\Theta)^\top P(\Theta) B(\Theta) +R$. 
\end{lemma}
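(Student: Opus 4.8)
The idea is the textbook reduction from a minimax quantity to a Bayes quantity, followed by a short computation that is specific to linear policies. First I would lower bound the supremum over $\calB(\theta,\e)$ by the average against any prior $\lambda$ supported there,
\[
    \sup_{\theta'\in\calB(\theta,\e)}\mathsf{EC}_T^\pi(\theta') \;\geq\; \E_{\Theta\sim\lambda}\brac{\mathsf{EC}_T^\pi(\Theta)},
\]
which is valid because the optimal LQR gain $K(\Theta)$ and the weight $\Psi(\Theta)$ exist at every point of the support of $\lambda$ (the standing assumption that $F$ stabilizes $(A(\theta'),B(\theta'))$ for all $\theta'\in\calB(\theta,\e)$ guarantees this), so each $\mathsf{EC}_T^\pi(\Theta)$ is well defined.

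Next I would specialize the excess-cost representation \eqref{eq: suboptality quadratic form} to a policy of the form $\pi(X_t,\calZ)=\hat K(\calZ)X_t$. Along the evaluation rollout this gives $U_t-K(\Theta)X_t=(\hat K(\calZ)-K(\Theta))X_t$, so writing the quadratic form as a trace,
\[
    \mathsf{EC}_T^\pi(\Theta) = \frac{1}{T}\sum_{t=0}^{T-1}\E_\Theta^\pi\,\tr\paren{(\hat K(\calZ)-K(\Theta))^\T\Psi(\Theta)(\hat K(\calZ)-K(\Theta))\,X_tX_t^\T}.
\]
Now take the expectation over $\Theta\sim\lambda$ and insert a conditioning on $(\calZ,\Theta)$ via the tower property. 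The matrix $(\hat K(\calZ)-K(\Theta))^\T\Psi(\Theta)(\hat K(\calZ)-K(\Theta))$ is measurable with respect to $(\calZ,\Theta)$, hence factors out of $\E^\pi[\,\cdot\,\vert\,\calZ,\Theta]$, and the remaining Ces\`aro average $\frac{1}{T}\sum_{t=0}^{T-1}\E^\pi[X_tX_t^\T\,\vert\,\calZ,\Theta]$ is exactly $\Sigma_\Theta^{\hat K(\calZ)}$. Re-assembling, $\E_{\Theta\sim\lambda}[\mathsf{EC}_T^\pi(\Theta)]$ equals the right-hand side of the claimed bound, and combining this with the first display proves the lemma.

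The only point that needs care is the structure of the underlying probability space: the offline dataset $\calZ$ and the evaluation rollout are driven by the \emph{same} random parameter $\Theta$, while $\hat K(\calZ)$ is a function of the offline data alone. Consequently, conditionally on $(\calZ,\Theta)$ the evaluation rollout is the autonomous linear recursion $X_{t+1}=(A(\Theta)+B(\Theta)\hat K(\calZ))X_t+W_t$, $X_0=0$, which is what makes $\E^\pi[X_tX_t^\T\,\vert\,\calZ,\Theta]$ meaningful and lets us name its average $\Sigma_\Theta^{\hat K(\calZ)}$. Beyond this bookkeeping the argument uses only linearity of the trace and the tower property, so there is no substantive obstacle here; the real work of the paper lies in choosing the prior $\lambda$ and lower bounding the resulting right-hand side, not in this lemma.
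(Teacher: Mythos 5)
Your proposal is correct and follows essentially the same route as the paper's proof: lower bounding the supremum by the average over the prior, substituting $U_t=\hat K(\calZ)X_t$ into the quadratic excess-cost representation \eqref{eq: suboptality quadratic form}, and using the tower rule together with the trace-cyclic property to extract $\Sigma_\Theta^{\hat K(\calZ)}$. The additional bookkeeping you give about the conditional law of the evaluation rollout is a fair elaboration of what the paper leaves implicit, but it is not a different argument.
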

\ifnum\value{cdc}>0{}\else{
\begin{proof}
By the quadratic expression for the excess cost in \eqref{eq: suboptality quadratic form} and the fact that the supremum over a set always exceeds the weighted average over a set, 
we have the following inequality:
\begin{align*}
    &  \sup_{\|\theta'-\theta\|\leq \e} \mathsf{EC}_T^\pi(\theta') \geq \underset{\Theta\sim\lambda}{\E} R_T^{\pi}(\Theta) \\
    &\!\!=\! 
    \frac{1}{T}\E \sum_{t=0}^{T-1} \E^{\pi} \brac{\norm{U_t \!-\! K(\Theta)X_t}_{\Psi(\Theta)}^2 \vert \calZ, \Theta} \\ 
    &\!\!=\! 
    \E\tr ([\hat K(\mathcal{Z})\!-\!K(\Theta)])^\T 
      \Psi(\Theta)([\hat K(\mathcal{Z})-K(\Theta)] \Sigma_{\Theta}^{\hat K\!(\!\calZ\!)} ).
\end{align*}
The second to last equality follows by the tower rule. 
The last equality results by substituting $U_t=\hat K(\calZ) X_t$, followed by the trace-cyclic property and linearity of expectation.
\end{proof}
}\fi

We may treat the data from offline experimentation, $\calZ$, as an observation of the underlying parameter $\Theta$. In particular, $\calZ$ may be expressed as a random vector taking values in $\R^{NT(\dx+\du)}$ with conditional density $p(\cdot|\theta)$. The following Fisher information matrix and prior density concentration matrix measure estimation performance of $\Theta$ from the sample $\calZ$ with respect to the square loss:
\begin{align}
\label{eq:fisherdef}
\I_p(\theta) &:=\int  \left( \frac{\nabla_\theta p(z|\theta)}{p(z|\theta)}\right)\left( \frac{\nabla_\theta p(z|\theta)}{p(z|\theta)}\right)^\T p(z|\theta) dz, \\
\label{eq:fisherlocdef}
\J(\lambda)&:= \int \left( \frac{\nabla_\theta \lambda(\theta)}{\lambda(\theta)}\right) \left( \frac{\nabla_\theta \lambda(\theta)}{\lambda(\theta)}\right)^\T  \lambda(\theta)  d\theta.
\end{align}
The first quantity \eqref{eq:fisherdef} measures the information content of the sample $\calZ$ with regards to $\Theta$. The second quantity \eqref{eq:fisherlocdef} measures the concentration of the prior density $\lambda$. As the gradient operator $\nabla_\theta$ maps to a vector of dimension $d_\Theta$, both $I_p(\theta)$ and $J(\lambda)$ are $d_\Theta\times d_\Theta$ dimensional. See \cite{ibragimov2013statistical} for further details about these integrals and their existence.

As we seek lower bounds for estimating $K(\Theta)$ instead of just $\Theta$, we must account for the transformation from a quadratic loss over the error in esimating $\Theta$ to the error in estimating $K(\Theta)$, as appears in \Cref{lem: weak duality lb}. To do so, we introduce the Van Trees' inequality \citep{van2004detection, bobrovsky1987some}. We first impose the following standard regularity conditions:
\begin{assumption} \,
\label{asm: van trees reguarlity} 
\begin{enumerate}
\item The prior $\lambda $ is smooth with compact support. 
\item The conditional density of $\calZ$ given $\Theta$, $p(z|\cdot)$, is continuously differentiable on the domain of $\lambda$ for almost every $z$.
\item The score\footnote{The score is the gradient of the log-likelihood. It evaluates to $\frac{\nabla_\theta p(z|\theta)}{p(z|\theta)}$.} has mean zero; $ \int \left( \frac{\nabla_\theta p(z|\theta)}{p(z|\theta)}\right) p(z|\theta) dz =0$.
\item $\J(\lambda)$ is finite and $\I_p(\theta)$ is a continuous function of $\theta$ on the domain of $\lambda$.
\item $\VEC K$ is differentiable on the domain of $\lambda$.
\end{enumerate}
\end{assumption}

The following theorem is a less general adaption from \cite{bobrovsky1987some}  which suffices for our needs. 
\begin{theorem}[Van Trees Inequality]
\label{thm:vtineq}
Fix two random variables $(\calZ,\Theta) \sim   p(\cdot|\cdot) \lambda(\cdot)$  and suppose Assumption~\ref{asm: van trees reguarlity} holds. Let $\mathcal{G}$ be a $\sigma(\calZ)$-measurable event. Then for any $\sigma(\calZ)$-measurable $\hat K$:
\ifnum\value{cdc}>0{
\begin{equation}
\label{VTineq}
\begin{aligned}
&\E \left[ \VEC (\hat K(\calZ) -K(\Theta)) \VEC (\hat K (\calZ)-K(\Theta))^\T \mathbf{1}_\mathcal{G}\right] \succeq \\
&  \E[ \dop_\theta \VEC K(\Theta)\mathbf{1}_\mathcal{G}]^\T \!\left[ \E \I_p(\Theta)+\J(\lambda) \right]^{-1} \! \E [\dop_\theta \VEC K(\Theta)\mathbf{1}_\mathcal{G}].
\end{aligned}
\end{equation}
}\else{
\begin{equation}
\label{VTineq}
\begin{aligned}
&\E \left[ \VEC (\hat K(\calZ) -K(\Theta)) \VEC (\hat K (\calZ)-K(\Theta))^\T \mathbf{1}_\mathcal{G}\right] \\
&\succeq  \E[ \dop_\theta \VEC K(\Theta)\mathbf{1}_\mathcal{G}]^\T \left[ \E \I_p(\Theta)+\J(\lambda) \right]^{-1} \E [\dop_\theta \VEC K(\Theta)\mathbf{1}_\mathcal{G}].
\end{aligned}
\end{equation}}\fi
The notation $\dop_\theta \VEC K(\cdot)$ above follows the standard convention for a Jacobian: it stacks the transposed gradients of each element of $\VEC K(\cdot)$ into a $\dx \du \times d_\Theta$ dimensional matrix. 
\end{theorem}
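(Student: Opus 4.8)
The plan is to prove the Van Trees inequality in two stages: first establish the classical multivariate Van Trees bound for estimating the vectorized quantity $\VEC K(\Theta)$ itself (without the indicator), and then track how the $\sigma(\calZ)$-measurable event $\mathcal{G}$ enters, since the statement involves $\mathbf{1}_\mathcal{G}$ on both sides. Throughout, write $g(\theta) = \VEC K(\theta)$, $\hat g(\calZ) = \VEC \hat K(\calZ)$, and let $s_p(z,\theta) = \nabla_\theta \log p(z|\theta)$ and $s_\lambda(\theta) = \nabla_\theta \log \lambda(\theta)$ be the two scores, so that the total score of the joint density $p(z|\theta)\lambda(\theta)$ is $s(z,\theta) = s_p(z,\theta) + s_\lambda(\theta)$.

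First I would fix an arbitrary vector $a \in \R^{\dx \du}$ and an arbitrary vector $b \in \R^{d_\Theta}$, and consider the two scalar random variables $\xi := a^\T(\hat g(\calZ) - g(\Theta))\mathbf{1}_\mathcal{G}$ and $\eta := b^\T s(\calZ,\Theta)$. The heart of the argument is the covariance identity $\E[\xi \eta] = a^\T \E[\dop_\theta g(\Theta)\mathbf{1}_\mathcal{G}]\, b$. To see this, split $s = s_p + s_\lambda$. For the $s_p$ piece, condition on $\Theta$: integration by parts in $z$ (justified by Assumption~\ref{asm: van trees reguarlity}(2),(3), and the compact support / smoothness in (1)) gives $\E[\hat g(\calZ)\mathbf{1}_\mathcal{G}\, s_p(\calZ,\Theta)^\T \vert \Theta] = \int \hat g(z)\mathbf{1}_\mathcal{G}(z)\nabla_\theta p(z|\theta)^\T dz = \nabla_\theta \left(\int \hat g(z)\mathbf{1}_\mathcal{G}(z) p(z|\theta)dz\right)^\T$, and similarly the $g(\Theta)$ term contributes nothing to the $z$-integral against $s_p$ since $g(\Theta)$ does not depend on $z$ — but note $\mathbf 1_{\mathcal G}$ does, so one must be slightly careful and keep $g(\Theta)\mathbf 1_{\mathcal G}$ together. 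For the $s_\lambda$ piece, condition instead on $\calZ$ (equivalently integrate in $\theta$): integration by parts in $\theta$ moves the derivative off $\lambda$ and, combined with the $s_p$ contribution, the boundary terms vanish because $\lambda$ has compact support, and the net effect after combining both pieces is that the $\hat g$ contributions telescope away (they are a pure divergence) and only the Jacobian of $g$ survives, yielding $\E[\xi\eta] = a^\T\E[\dop_\theta g(\Theta)\mathbf 1_{\mathcal G}] b$. This is the standard Van Trees computation; the one subtlety worth flagging is ensuring the cross terms and the $\mathbf 1_{\mathcal G}$ factor are handled so that the identity comes out clean — I would organize it as: $\E[\xi\eta] = \E\big[\nabla_\theta(\text{something})\big] - \E[\text{stuff} \cdot s_\lambda]$ and show the second equals minus a divergence that integrates to zero.

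Next I would apply Cauchy–Schwarz: $(\E[\xi\eta])^2 \le \E[\xi^2]\,\E[\eta^2]$. Now $\E[\eta^2] = b^\T \E[s(\calZ,\Theta)s(\calZ,\Theta)^\T] b$, and one computes $\E[s s^\T] = \E[\I_p(\Theta)] + \J(\lambda)$ using that the cross term $\E[s_p s_\lambda^\T] = \E[\E[s_p\vert\Theta] s_\lambda^\T] = 0$ by Assumption~\ref{asm: van trees reguarlity}(3), $\E[s_p s_p^\T] = \E[\I_p(\Theta)]$ by the tower rule and definition \eqref{eq:fisherdef}, and $\E[s_\lambda s_\lambda^\T] = \J(\lambda)$ by \eqref{eq:fisherlocdef}. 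To get the stated matrix inequality, the cleanest route is to choose $b$ optimally: take $b = \left[\E\I_p(\Theta)+\J(\lambda)\right]^{-1}\E[\dop_\theta g(\Theta)\mathbf 1_{\mathcal G}] a$, so Cauchy–Schwarz becomes $a^\T \E[\dop_\theta g(\Theta)\mathbf 1_{\mathcal G}]^\T\big[\E\I_p(\Theta)+\J(\lambda)\big]^{-1}\E[\dop_\theta g(\Theta)\mathbf 1_{\mathcal G}] a \le \E[\xi^2] = a^\T \E\big[(\hat g(\calZ)-g(\Theta))(\hat g(\calZ)-g(\Theta))^\T \mathbf 1_{\mathcal G}\big] a$, where I used $\mathbf 1_{\mathcal G}^2 = \mathbf 1_{\mathcal G}$. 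Since $a$ was arbitrary, this is exactly \eqref{VTineq} after recalling $g = \VEC K$.

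The main obstacle I anticipate is the rigorous justification of the two integration-by-parts steps in the presence of the indicator $\mathbf 1_{\mathcal G}$ and the possibility that $\E[\xi^2]$ or the Jacobian terms are infinite: one needs $\J(\lambda)<\infty$ (Assumption~\ref{asm: van trees reguarlity}(4)), the differentiability of $\VEC K$ (Assumption~\ref{asm: van trees reguarlity}(5)) to even write $\dop_\theta g$, and enough regularity on $p(\cdot\vert\theta)$ to differentiate under the integral sign and discard boundary terms; if $\E[\xi^2]=\infty$ the inequality is vacuous and there is nothing to prove, so one may assume it finite. I would note that all of this is standard and cite \cite{bobrovsky1987some, gill1995applications} for the measure-theoretic details rather than reproducing them, since the theorem is explicitly stated as "a less general adaption" of an existing result.
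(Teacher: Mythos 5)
The paper does not prove this theorem at all: it is imported as ``a less general adaption'' of the matrix Van Trees inequality of \cite{bobrovsky1987some}, so there is no in-paper argument to compare against. Your proposal supplies a genuine self-contained proof, and it is the standard (and correct) one: establish the covariance identity $\E[\xi\eta]=a^\T\E[\dop_\theta \VEC K(\Theta)\mathbf 1_{\calG}]\,b$ for $\xi=a^\T(\VEC\hat K(\calZ)-\VEC K(\Theta))\mathbf 1_{\calG}$ and $\eta=b^\T s(\calZ,\Theta)$, compute $\E[ss^\T]=\E\I_p(\Theta)+\J(\lambda)$ using the vanishing cross term from Assumption~\ref{asm: van trees reguarlity}(3), and conclude by Cauchy--Schwarz with the optimal choice of $b$. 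You correctly flag the one place a gap could hide, namely the interaction of $\mathbf 1_{\calG}$ with the integration by parts; the cleanest way to discharge it is to integrate by parts in $\theta$ against the \emph{joint} score rather than splitting into $s_p$ and $s_\lambda$: since $\calG$ is $\sigma(\calZ)$-measurable, $\mathbf 1_{\calG}(z)$ and $\VEC\hat K(z)$ are constant in $\theta$, so the $\hat K$ term dies because $\int\nabla_\theta(p(z|\theta)\lambda(\theta))\,d\theta=0$ (compact support of $\lambda$), and the $K(\Theta)\mathbf 1_{\calG}$ term yields exactly $\E[\dop_\theta\VEC K(\Theta)\mathbf 1_{\calG}]$ after the boundary terms vanish. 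Two small points: (i) the quadratic form you derive is $\E[\dop_\theta\VEC K(\Theta)\mathbf 1_{\calG}]\,[\E\I_p(\Theta)+\J(\lambda)]^{-1}\,\E[\dop_\theta\VEC K(\Theta)\mathbf 1_{\calG}]^\T$, which is the dimensionally consistent form (with the Jacobian being $\dx\du\times d_\Theta$) and the one actually used in the proof of Lemma~\ref{lem: application of van trees}; the transpose placement in the theorem statement you are reproducing appears to be a typo, so do not propagate it. (ii) The final division in Cauchy--Schwarz needs the trivial remark that if $a^\T M F^{-1}M^\T a=0$ the claim is immediate, and the argument implicitly assumes $\E\I_p(\Theta)+\J(\lambda)\succ 0$ so that the inverse exists. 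With those cosmetic repairs the proof is complete.
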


We see from \Cref{thm:vtineq}  
that the transformation to the error in estimating $K(\Theta)$ is accounted for by $\dop_\theta \VEC K(\cdot)$. 

We now massage the lower bound in \Cref{lem: weak duality lb} to a form compatible with \Cref{thm:vtineq}. Doing so requires us to express the lower bound as a quadratic form conditioned on some $\sigma(\calZ)$-measureable event $\calG$. We therefore select an event $\calG$ for which we may uniformly lower bound the quantities $\Psi(\Theta)$ and $\Sigma_{\Theta}^{\hat K(\calZ)}$. To this end, we define positive definite matrices $\Psi_{\theta,\e}$ and $\Sigma_{\theta,\e}$ that satisfy 
\begin{align}
    \label{eq: uniform lower bound on N and Gamma} 
    \Psi(\theta') &\succeq \Psi_{\theta,\e}  \textrm{ and }  \frac{1}{2} \Sigma_X(\theta')
    \succeq \Sigma_{\theta,\e} \quad \forall \theta' \in\calB(\theta,\e).
\end{align}
The matrix $\Psi_{\theta,\e}$ will serve to uniformly lower bound $\Psi(\Theta)$. When the learned controller is close to the optimal controller, the covariance of the state under the learned controller will be close to the covariance of the state under the optimal controller, which is in turn lower bounded in terms of $\Sigma_{\theta, \e}$. In particular, if $\norm{\hat K(\calZ) - K(\Theta)}$ is sufficiently small, we can argue that $\Sigma_{\Theta}^{\hat K(\calZ)} \succeq \frac{1}{2}\Sigma_X(\Theta) \succeq \Sigma_{\theta,\e}$. The aforementioned condition on $\norm{\hat K(\calZ) - K(\Theta)}$ will hold only if there is a large amount of data available to fit $\hat K(\calZ)$. To achieve a bound that holds in the low data regime, we observe that the state covariance under the learned controller is always lower bounded by the noise covariance: $\Sigma_\Theta^\calZ \succeq \Sigma_W$.  
For this reason, the subsequent results will be presented in two parts: one in which we condition on an event where $\norm{\hat K(\calZ) - K(\theta)}$ is small, and one that holds generally. To present these results concisely, the positive definite matrix $\Gamma_{\theta,\e}$ is used to denote either $\Sigma_W$ or $\Sigma_{\theta,\e}$. The Kronecker product of these lower bounds arises frequently, motivating the shorthand 
\begin{align}
    \label{eq: kronecker shorthand}
    \Xi_{\theta, \e} := \Gamma_{\theta, \e} \otimes \Psi_{\theta,\e}.
\end{align}

\begin{lemma}[Application of Van Trees' Inequality]
\label{lem: application of van trees}
For any smooth prior $\lambda$ on $\calB(\theta,\e)$ and any $\pi \in \Pi^\mathsf{lin}$ with $\pi(X_t, \calZ) = \hat K(\calZ) X_t$,  
\ifnum\value{cdc}>0{
\begin{equation}
    \begin{aligned}
    \label{eq: unified LB}
       &\sup_{\theta'\in\calB(\theta,\e)}\mathsf{EC}_T^\pi(\theta') \geq \\ & \frac{\tr\left(   \Xi_{\theta,\e}  \E [\dop_\theta \VEC  K(\Theta) \mathbf{1}_{\calG}] \E [\dop_\theta  \VEC    K(\Theta) \mathbf{1}_{\calG}]^\T \right) }{ \norm{\E \I_p(\Theta)+\J(\lambda)}},
       \end{aligned}
\end{equation}
}\else{
\begin{align}  
    \label{eq: unified LB}
    \sup_{\theta'\in\calB(\theta,\e)}\mathsf{EC}_T^\pi(\theta') \geq \frac{\tr\left(   \Xi_{\theta,\e}  \E [\dop_\theta \VEC  K(\Theta) \mathbf{1}_{\calG}] \E [\dop_\theta  \VEC    K(\Theta) \mathbf{1}_{\calG}]^\T \right) }{ \norm{\E \I_p(\Theta)+\J(\lambda)}},
\end{align}}\fi
where either:
\begin{blockquote}
    $1)$ $\Gamma_{\theta, \e} = \Sigma_W$ and $\calG = \Omega$, \textrm{ or }\\
    $2)$ $\Gamma_{\theta, \e} = \Sigma_{\theta, \e}$ and $\calG = \calE$, if $\displaystyle T \geq \sup_{\theta'\in \calB(\theta,\e)}\frac{16 \norm{\Sigma_X(\theta')}}{\lambda_{\min}(\Sigma_X(\theta'))}$. 
\end{blockquote} 
\noindent The event $\Omega$ is the entire sample space, i.e. $\P\brac{\Omega} = 1$, and  
\ifnum\value{cdc}>0{
\begin{align*}
    \mathcal{E} &= \curly{\sup_{\theta' \in \calB(\theta,\e)} \norm{\hat K(\calZ) - K(\theta')} \leq  \alpha } \\
    \alpha &= \inf_{\theta' \in \calB(\theta,\e)} \min\bigg\{\frac{\norm{A_{cl}(\theta')}}{\norm{B(\theta')}}, \\
    &\qquad \frac{\lambda_{\min}(\Sigma_X(\theta'))/24}{\norm{A_{cl}(\theta')} \norm{B(\theta')} \mathcal{J}(A_{cl}(\theta')) \norm{\Sigma_X(\theta')}}\bigg\}.
\end{align*}
}\else{
\begin{align*}
    \mathcal{E} &= \curly{\sup_{\theta' \in \calB(\theta,\e)} \norm{\hat K(\calZ) - K(\theta')} \leq  \alpha }  \\
    \alpha &= \inf_{\theta' \in \calB(\theta,\e)} \min\curly{\frac{\norm{A_{cl}(\theta')}}{\norm{B(\theta')}}, \frac{\lambda_{\min}(\Sigma_X(\theta'))/24}{\norm{A_{cl}(\theta')} \norm{B(\theta')} \mathcal{J}(A_{cl}(\theta')) \norm{\Sigma_X(\theta')}}}.
\end{align*}
}\fi
Here, $A_{cl}(\theta) = A(\theta) + B(\theta)K(\theta)$ and $\calJ(A_{cl}(\theta)) = \sum_{t=0}^\infty \norm{A_{cl}(\theta)^t}^2$.

\end{lemma}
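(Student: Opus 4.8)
The plan is to chain \Cref{lem: weak duality lb}, a lower bound on the conditional state covariance valid on $\calG$, and \Cref{thm:vtineq}. Write $M := \hat K(\calZ) - K(\Theta)$. \Cref{lem: weak duality lb} gives $\sup_{\theta'\in\calB(\theta,\e)}\mathsf{EC}_T^\pi(\theta')\geq\E\,\tr\paren{M^\T\Psi(\Theta)\,M\,\Sigma_\Theta^{\hat K(\calZ)}}$. The integrand is the trace of a product of two positive semidefinite matrices, hence nonnegative, so multiplying it by the indicator $\mathbf{1}_\calG$ of any $\sigma(\calZ)$-measurable event $\calG$ can only decrease the right-hand side. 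The map $(S,U)\mapsto\tr(M^\T S\,M\,U)$ is separately nondecreasing in each of $S\succeq 0$ and $U\succeq 0$ in the positive semidefinite order; combining this with $\Psi(\Theta)\succeq\Psi_{\theta,\e}$, which holds for every realization $\Theta\in\calB(\theta,\e)$ by \eqref{eq: uniform lower bound on N and Gamma}, and the event-dependent bound $\Sigma_\Theta^{\hat K(\calZ)}\succeq\Gamma_{\theta,\e}$ on $\calG$ established below, reduces the bound to $\E\brac{\tr\paren{M^\T\Psi_{\theta,\e}\,M\,\Gamma_{\theta,\e}}\mathbf{1}_\calG}$.

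The heart of the matter is the bound $\Sigma_\Theta^{\hat K(\calZ)}\succeq\Gamma_{\theta,\e}$ on $\calG$, which I would prove in the two advertised regimes. In case 1 ($\calG=\Omega$, $\Gamma_{\theta,\e}=\Sigma_W$): conditionally on $(\calZ,\Theta)$ the gain $\hat K(\calZ)$ is deterministic, so the evaluation rollout obeys $X_{t+1}=(A(\Theta)+B(\Theta)\hat K(\calZ))X_t+W_t$ with $W_t$ independent of $X_t$, whence $\E[X_tX_t^\T\vert\calZ,\Theta]\succeq\Sigma_W$ for $t\geq 1$; averaging over $t=0,\dots,T-1$ (the $t=0$ term vanishes because $X_0=0$) gives $\Sigma_\Theta^{\hat K(\calZ)}\succeq\frac{T-1}{T}\Sigma_W$, with the innocuous factor $\frac{T-1}{T}$ absorbed. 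In case 2 ($\calG=\calE$, $\Gamma_{\theta,\e}=\Sigma_{\theta,\e}$, $T$ large): on $\calE$ one has $\norm{\hat K(\calZ)-K(\Theta)}\leq\alpha$; the first term in the definition of $\alpha$ keeps $A(\Theta)+B(\Theta)\hat K(\calZ)$ within a factor two of $A_{cl}(\Theta)$ in operator norm, while the second term --- through $\calJ(A_{cl}(\Theta))=\sum_{t\geq 0}\norm{A_{cl}(\Theta)^t}^2$, which simultaneously quantifies the stability margin of $A_{cl}(\Theta)$ and the sensitivity of the discrete Lyapunov equation to perturbations of its dynamics matrix, and is therefore exactly why $\alpha$ takes this form --- ensures that $A(\Theta)+B(\Theta)\hat K(\calZ)$ is still stable and that its stationary covariance differs from $\Sigma_X(\Theta)$ by at most $\frac{1}{24}\lambda_{\min}(\Sigma_X(\Theta))$ in operator norm, hence is $\succeq\frac{23}{24}\Sigma_X(\Theta)$. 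A finite-horizon truncation estimate then shows that the hypothesis $T\geq\sup_{\theta'\in\calB(\theta,\e)}\frac{16\norm{\Sigma_X(\theta')}}{\lambda_{\min}(\Sigma_X(\theta'))}$ forces the time-averaged covariance $\Sigma_\Theta^{\hat K(\calZ)}=\frac{1}{T}\sum_{t=0}^{T-1}\E[X_tX_t^\T\vert\calZ,\Theta]$ to still dominate a $\frac{3}{4}$ fraction of the stationary one; combining the two estimates gives $\Sigma_\Theta^{\hat K(\calZ)}\succeq\frac{1}{2}\Sigma_X(\Theta)\succeq\Sigma_{\theta,\e}$ by \eqref{eq: uniform lower bound on N and Gamma} once more. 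I expect making the constants in this two-step perturbation argument --- Lyapunov sensitivity, then finite-horizon truncation --- consistent with the precise definitions of $\alpha$ and of the lower bound on $T$ to be the main technical obstacle; the remaining steps are routine linear algebra.

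Finally, I would pass to vectorized form and invoke Van Trees. The Kronecker identity $\tr(M^\T\Psi_{\theta,\e}M\,\Gamma_{\theta,\e})=\VEC(M)^\T(\Gamma_{\theta,\e}\otimes\Psi_{\theta,\e})\VEC(M)=\VEC(M)^\T\Xi_{\theta,\e}\VEC(M)$, together with $\E[\VEC(M)^\T\Xi_{\theta,\e}\VEC(M)\mathbf{1}_\calG]=\tr\paren{\Xi_{\theta,\e}\,\E[\VEC(M)\VEC(M)^\T\mathbf{1}_\calG]}$, turns the bound from the first step into $\tr\paren{\Xi_{\theta,\e}\,\E[\VEC(M)\VEC(M)^\T\mathbf{1}_\calG]}$, where $M=\hat K(\calZ)-K(\Theta)$ and $\hat K(\calZ)$ is $\sigma(\calZ)$-measurable. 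Under Assumption~\ref{asm: van trees reguarlity}, \Cref{thm:vtineq} applied with this event $\calG$ lower bounds $\E[\VEC(M)\VEC(M)^\T\mathbf{1}_\calG]$ in the positive semidefinite order by the matrix on the right-hand side of \eqref{VTineq}. Combining this with the monotonicity of $S\mapsto\tr(\Xi_{\theta,\e}S)$ over $S\succeq 0$ (valid since $\Xi_{\theta,\e}\succeq 0$), the cyclic property of the trace, and the spectral bound $(\E\I_p(\Theta)+\J(\lambda))^{-1}\succeq\norm{\E\I_p(\Theta)+\J(\lambda)}^{-1}I$ applied against the positive semidefinite matrix obtained by conjugating $\Xi_{\theta,\e}$ with $\E[\dop_\theta\VEC K(\Theta)\mathbf{1}_\calG]$, yields exactly \eqref{eq: unified LB}.
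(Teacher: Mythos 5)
Your proposal is correct and follows essentially the same route as the paper's proof: relax the supremum to a Bayesian average via \Cref{lem: weak duality lb}, uniformly lower bound $\Psi(\Theta)$ and the conditional state covariance by $\Psi_{\theta,\e}$ and $\Gamma_{\theta,\e}\mathbf{1}_\calG$, pass to vectorized form via the Kronecker identity $\tr(M^\T \Psi_{\theta,\e} M \Gamma_{\theta,\e}) = \VEC(M)^\T \Xi_{\theta,\e} \VEC(M)$, and apply \Cref{thm:vtineq} followed by the trace-cyclic property and the spectral bound on $\left[\E \I_p(\Theta)+\J(\lambda)\right]^{-1}$ (the paper conjugates $\hat K$ by $\sqrt{\Psi_{\theta,\e}}$ and $\sqrt{\Gamma_{\theta,\e}}$ before invoking Van Trees rather than tracing against $\Xi_{\theta,\e}$ afterward, but these are the same computation). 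The one step you leave as a sketch --- that $\Sigma_{\Theta}^{\hat K(\calZ)} \succeq \tfrac{1}{2}\Sigma_X(\Theta)$ on $\calE$ via a Lyapunov-perturbation-plus-finite-horizon-truncation argument --- is exactly how the paper handles it in \Cref{lem: covariance overlap between learned and optimal controller} and \Cref{lem: lyapunov perturbation}, and your reading of the roles played by the two terms in $\alpha$ and by the burn-in condition on $T$ is accurate.
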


\ifnum\value{cdc}>0{
The proof of the above result follows by applying \Cref{thm:vtineq} to the lower bound in \Cref{lem: weak duality lb} after replacing $\Psi(\Theta)$ by $\Psi_{\theta,\e}$ and $\Sigma_\Theta^{\hat K(\calZ)}$ by $\Gamma_{\theta,\e} \mathbf{1}_\calG$. 
}\else{
\begin{proof}
We always have that $\Sigma_{\hat K(\calZ)} \succeq \Sigma_W$. 
\ifnum\value{cdc}>0{Lemma A.2 of \cite{lee2023fundamental} 
}\else{\Cref{lem: covariance overlap between learned and optimal controller} }\fi  shows that if $T \geq \frac{16 \norm{\Sigma_X(\Theta)}^2}{\lambda_{\min}(\Sigma_X(\Theta))}$, then under event $\calE$,  we have $\norm{\Sigma_X(\Theta)^{-1/2}\Sigma_{\Theta}^{\hat K(\calZ)}\Sigma_X(\Theta)^{-1/2} - I} \leq \frac{1}{2}$. This in turn implies that $\Sigma_{\Theta}^{\hat K(\calZ)} \succeq \Sigma_{\theta, \e} \mathbf{1}_\calE$. 


With this fact in hand, we may replace $\Psi(\Theta)$ in the lower bound from \Cref{lem: weak duality lb} by $\Psi_{\theta,\e}$, and  $\Sigma_{\Theta}^{\hat K(\calZ)}$ by $\Gamma_{\theta,\e} \mathbf{1}_\calG$, where $(\Gamma_{\theta,\e}, \calG)$ can only be set as $(\Sigma_{\theta,\e}, \calE)$ if $T$ is sufficiently large. 
Then 
\begin{align*} 
     &\sup_{\theta'\in\calB(\theta,\e)}\mathsf{EC}_T^\pi(\theta')  \geq    \E 
     \tr  ([\hat K(\calZ)-K(\Theta)])^\T  \Psi_{\theta,\e} ([\hat K(\calZ)-K(\Theta)]  \Gamma_{\theta,\e}) \mathbf{1}_\calG  \\ 
     &= \E 
     \tr  ([\tilde K(\calZ)-   \sqrt{\Psi_{\theta,\e}} K(\Theta)\sqrt{\Gamma_{\theta,\e}}])^\T  [\tilde K(\calZ)-   \sqrt{\Psi_{\theta,\e}} K(\Theta)\sqrt{\Gamma_{\theta,\e}}] )\mathbf{1}_\calG\\ 
     &=  \E \tr \big(  [\VEC \tilde K(\calZ)-   \VEC\sqrt{\Psi_{\theta,\e}} K(\Theta)\sqrt{\Gamma_{\theta,\e}}] \cdot [\VEC \tilde K(\calZ)-   \VEC \sqrt{\Psi_{\theta,\e}} K(\Theta)\sqrt{\Gamma_{\theta,\e}}] \big)^\T \mathbf{1}_\calG, 
\end{align*}
where $\tilde K(\calZ) = \sqrt{\Psi_{\theta,\e}} \hat K(\calZ) \sqrt{\Gamma_{\theta,\e}}$.
We now invoke the Van Trees' inequality, \Cref{thm:vtineq}:
\begin{align*}
    &\sup_{\theta'\in\calB(\theta,\e)}\mathsf{EC}_T^\pi(\theta')
    \!\geq \!
      \trace\paren{ \!\E[ \dop_\theta  \VEC\sqrt{\Psi_{\theta,\e}} K(\Theta)\sqrt{\Gamma_{\theta,\e}} \mathbf{1}_\calG] \left[ \E \I_p(\Theta)\!+\!\J(\lambda) \right]^{-1}\! \E [\dop_\theta  \VEC\sqrt{\Psi_{\theta,\e}} K(\Theta)\sqrt{\Gamma_{\theta,\e}} \mathbf{1}_\calG]^\T \!}\\
     & \!= \!   \tr \paren{ \!\E[  (\sqrt{\Gamma_{\theta,\e}}\! \otimes\! \sqrt{\Psi_{\theta,\e}} )  \dop_\theta \VEC  K(\Theta) \mathbf{1}_\calG] \left[ \E \I_p(\Theta)+\J(\lambda) \right]^{-1} \E [ (\sqrt{\Gamma_{\theta,\e}} \! \otimes \! \sqrt{\Psi_{\theta,\e}} ) \dop_\theta  \VEC    K(\Theta) \mathbf{1}_\calG]^\T \!},
\end{align*}
where we  used that $\VEC\sqrt{\Psi_{\theta,\e}} K(\Theta)\sqrt{\Gamma_{\theta,\e}} = (\sqrt{\Gamma_{\theta,\e}} \otimes \sqrt{\Psi_{\theta,\e}} ) \VEC  K(\Theta)$ in the last line.

We conclude by applying the trace cyclic property, and extracting the minimum eigenvalue of $[\E_{\Theta}I_p(\theta) + J(\lambda)]^{-1}$.

\end{proof}
}\fi
\Cref{lem: application of van trees} may be interpreted according to the following intuition. To design a controller that attains low cost, it is essential to distinguish between two nearby instances of the underlying parameter, $\theta$ and $\theta'$, from the experimental data, $\calZ$. The Fisher Information term on the denominator of the bound in \Cref{lem: application of van trees} captures the ease with which we can distinguish between $\theta$ and an infinitesimally perturbed $\theta'$ from the collected data $\calZ$, and can be thought of as a signal-to-noise ratio. The derivative of the controller appearing on the numerator of the bound in \Cref{lem: application of van trees} is a change of variables term that accounts for the extent to which infinitesimal perturbations of the underlying parameter impact the optimal controller gain.  Sensitive perturbations are those which are difficult to detect from the collected data, yet lead to a large change in the controller gain. 
Such perturbations dictate the statistical hardness of learning a LQR controller.
Motivated by this fact, we can select particularly sensitive perturbation directions of the underlying parameter which emphasize the hardness of the problem. To do so, we restrict the support of the prior $\lambda$ to a lower dimensional subspace. 
Before presenting this result, it will be useful to see the expression for Fisher information matrix from this experimental setup. It can be shown via the chain rule of Fisher Information that 
\ifnum\value{cdc}>0{
\begin{equation}
\begin{aligned}
I_p(\theta) &= \E_{\theta} \sum_{n=1}^N \sum_{t=0}^{T-1} \dop_\theta \VEC \bmat{A(\theta) &  B(\theta)}^\top \\ &\qquad \cdot [Z_{t,n} Z_{t,n}^\top \otimes \Sigma_W^{-1} ] \dop_\theta \VEC \bmat{A(\theta) & B(\theta)}, 
\end{aligned}
\end{equation}
}\else{
\begin{align}
I_p(\theta) = \E_{\theta} \sum_{n=1}^N \sum_{t=0}^{T-1} \dop_\theta \VEC \bmat{A(\theta) &  B(\theta)}^\top [Z_{t,n} Z_{t,n}^\top \otimes \Sigma_W^{-1} ] \dop_\theta \VEC \bmat{A(\theta) & B(\theta)}, 
\end{align}}\fi
where $Z_{t,n} = \bmat{X_{t,n} \\ U_{t,n}}$. See, for instance, Lemma 3.1 of \cite{ziemann2022regret}. With this in hand, the following Lemma provides a restriction to lower dimensional priors, which allows us to understand how poor conditioning of the information matrix along any particular parameter perturbation direction pushes through to a challenge in estimating the optimal controller.
\begin{lemma}
    \label{lem: coordinate transformation}
    Consider any matrix $V \in \R^{d_\Theta \times k}$ with $k \leq d_{\Theta}$ which has orthonormal columns. For any smooth prior $\lambda$ over $\curly{\theta + V \tilde \theta: \norm{\tilde \theta}\leq \e}$, and any $\pi \in \Pi^\mathsf{lin}$ with $\pi(X_t, \calZ) = \hat K(\calZ) X_t$,
    \ifnum\value{cdc}>0{
    \begin{align*}
        &\sup_{\theta'\in\calB(\theta,\e)}\mathsf{EC}_T^\pi(\theta')\geq \\& \frac{\tr\left(\Xi_{\theta,\e}  \E[\dop_\theta \VEC  K(\Theta) V \mathbf{1}_\calG] \E [\dop_\theta  \VEC K(\Theta) V 
         \mathbf{1}_\calG ]^\T \right)}{ \norm{V^\top \paren{\E \I_p(\Theta)+\J(\lambda)}V} },
    \end{align*}
    }
    \else{
    \begin{align*}
        \sup_{\theta'\in\calB(\theta,\e)}\mathsf{EC}_T^\pi(\theta')\geq\frac{\tr\left(\Xi_{\theta,\e}  \E[\dop_\theta \VEC  K(\Theta) V \mathbf{1}_\calG] \E [\dop_\theta  \VEC K(\Theta) V 
         \mathbf{1}_\calG ]^\T \right)}{ \norm{V^\top \paren{\E \I_p(\Theta)+\J(\lambda)}V} },
    \end{align*}}\fi
    where $\Xi_{\theta,\e}$ is defined in \eqref{eq: kronecker shorthand} and $\calG$ is defined in \Cref{lem: application of van trees}. 
\end{lemma}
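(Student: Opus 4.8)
The plan is to reparametrize the problem through the low-dimensional coordinate $\tilde\theta \in \R^{k}$ defined by $\theta' = \theta + V\tilde\theta$, and then re-run the argument behind \Cref{lem: application of van trees} with $\tilde\theta$ in the role of the unknown parameter. Concretely, I would let $\tilde\Theta \sim \tilde\lambda$ be a smooth prior on $\curly{\tilde\theta : \norm{\tilde\theta} \leq \e}$ and set $\Theta = \theta + V\tilde\Theta$; the induced law of $\Theta$ is the prior $\lambda$ in the statement. Since $V$ has orthonormal columns, $\norm{V\tilde\theta} = \norm{\tilde\theta} \leq \e$, so the support of this prior is contained in $\calB(\theta,\e)$. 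Consequently the uniform lower bounds \eqref{eq: uniform lower bound on N and Gamma}, the shorthand $\Xi_{\theta,\e}$, and the event $\calG$ (either $\Omega$, or $\calE$ under the same condition on $T$) remain valid verbatim — nothing about those objects changes under the restriction of the prior.

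The first key step is the chain rule for Fisher information: writing $p(z\mid\tilde\theta) := p(z\mid \theta + V\tilde\theta)$, the score transforms as $\nabla_{\tilde\theta}\log p(z\mid\theta + V\tilde\theta) = V^\T \nabla_{\theta}\log p(z\mid\theta')|_{\theta' = \theta + V\tilde\theta}$, so the Fisher information of $\calZ$ about $\tilde\theta$ equals $V^\T \I_p(\theta + V\tilde\theta)V$, and, after averaging over $\tilde\Theta$, $\E\, \tilde\I_p(\tilde\Theta) = V^\T \E\, \I_p(\Theta) V$. Likewise the prior concentration matrix of $\tilde\lambda$ is a $k\times k$ matrix $\J(\tilde\lambda)$, which is the object the statement intends when it writes $V^\T\J(\lambda)V$ (this is a mild abuse, since a prior supported on a proper subspace of $\R^{d_\Theta}$ has no Lebesgue density there). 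The second key step is the chain rule for the transformation term, $\dop_{\tilde\theta}\VEC K(\theta + V\tilde\theta) = \dop_\theta \VEC K(\theta')|_{\theta' = \theta + V\tilde\theta}\, V$. With these two identities I would repeat the proof of \Cref{lem: application of van trees} line by line in the $\tilde\theta$ coordinates: replace $\Psi(\Theta)$ by $\Psi_{\theta,\e}$ and $\Sigma_\Theta^{\hat K(\calZ)}$ by $\Gamma_{\theta,\e}\mathbf 1_\calG$ exactly as there; absorb $\sqrt{\Psi_{\theta,\e}}$ and $\sqrt{\Gamma_{\theta,\e}}$ into a rescaled estimator via $\VEC(\sqrt{\Psi_{\theta,\e}}K\sqrt{\Gamma_{\theta,\e}}) = (\sqrt{\Gamma_{\theta,\e}}\otimes\sqrt{\Psi_{\theta,\e}})\VEC K$; apply \Cref{thm:vtineq} with parameter $\tilde\theta$; and extract $\lambda_{\min}([V^\T\E\,\I_p(\Theta)V + \J(\tilde\lambda)]^{-1}) = 1/\norm{V^\T\E\,\I_p(\Theta)V + \J(\tilde\lambda)}$ from the resulting quadratic form. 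The surviving numerator is $\tr(\Xi_{\theta,\e}\,\E[\dop_\theta\VEC K(\Theta)\,V\mathbf 1_\calG]\,\E[\dop_\theta\VEC K(\Theta)\,V\mathbf 1_\calG]^\T)$, which is precisely the claimed bound.

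The step I expect to require the most care is not the algebra — a verbatim transcription of the proof of \Cref{lem: application of van trees} — but rather (i) checking that \Cref{asm: van trees reguarlity} is inherited by the reparametrized model: smoothness and compact support of $\tilde\lambda$ are immediate, continuous differentiability of $\theta \mapsto p(z\mid\theta+V\tilde\theta)$ and of $\tilde\theta \mapsto \VEC K(\theta + V\tilde\theta)$ follow from the chain rule and the original hypotheses, and the zero-mean score property is preserved under the linear map $V$; and (ii) pinning down the reading of $\J(\lambda)$ in the statement, namely that the denominator $\norm{V^\T(\E\,\I_p(\Theta)+\J(\lambda))V}$ should be understood as $\norm{V^\T\E\,\I_p(\Theta)V + \J(\tilde\lambda)}$, with $\J(\tilde\lambda)$ the honest concentration matrix of the prior expressed in the $\tilde\theta$-coordinates. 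Everything else transfers mechanically.
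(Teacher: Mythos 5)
Your proposal is correct and follows essentially the same route as the paper: reparametrize via $\Theta = \theta + V\tilde\Theta$, apply the chain rule to get $\dop_{\tilde\theta}\VEC K = \dop_\theta\VEC K\, V$ and $V^\T \I_p V$ for the Fisher information, identify the prior concentration matrix in the $\tilde\theta$-coordinates, and instantiate \Cref{lem: application of van trees} for the lower-dimensional parameter. Your remark that $\J(\lambda)$ must be read as the concentration matrix of the prior in the $\tilde\theta$-coordinates (the paper writes this as $\tilde\J(\tilde\lambda) = V^\T\J(\lambda)V$) is a fair clarification of a notational abuse, not a deviation from the argument.
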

\ifnum\value{cdc}>0{}\else{
\begin{proof}
    We may write $\Theta = \theta + V \tilde \Theta$, where  $\tilde \Theta \sim \tilde \lambda$, and $\tilde \lambda$ is a smooth prior on $\curly{\tilde\theta \in \R^k: \norm{\tilde \theta} \leq \epsilon}$. Defining $\tilde A(\tilde \theta) := A(\theta+V \tilde \theta)$, $\tilde B(\tilde \theta) := B(\theta+V\tilde \theta)$, and $\tilde K(\tilde \theta):=K(\theta+V \tilde \theta)$, we may instantiate the bound in \Cref{lem: application of van trees} over the lower dimensional parameter $\tilde \theta$. We have that the Jacobian of the contoller becomes
    $
        \dop_{\tilde \theta} \VEC \tilde K(\tilde \Theta) = \dop_{\theta} \VEC K(\Theta) V.
    $
    Similarly, the Jacobian arising in the Fisher information may be written $D_{\tilde \theta} \VEC \bmat{\tilde A(\tilde \Theta) & \tilde B(\tilde \Theta)} = D_{\theta} \VEC \bmat{A(\Theta) & B(\Theta)} V $. Lastly, the prior density of the lower dimensional parameter satisfies $\tilde J(\tilde \lambda) = V^\top J(\lambda) V$.  Then under this prior, the lower bound in \Cref{lem: application of van trees} becomes that in the lemma statement.
\end{proof}}\fi

In the above lemma, the columns of $V$ may be interpreted as perturbation directions of the system parameters.



We now upper bound the denominator arising in the above bound. In particular, we show how to bound the Fisher Information in any particular perturbation direction.
\begin{lemma}
    \label{lem: fisher bound}
    For any matrix $V\in \R^{d_{\Theta} \times k}$ with orthonormal columns, 
    \begin{align*}
        \norm{V^\top \E \brac{I_p(\Theta)} V} &\leq  TN \bar L, 
    \end{align*}
    where $\bar L =\sup_{\theta' \in \calB(\theta,\e)}   L(\theta')$ and 
    \begin{equation}
    \begin{aligned}
        \label{eq: info bound L}
        &L(\theta')=  \sup_{w \in \Span(V), \norm{w} \leq 1} \frac{4 }{\lambda_{\min}(\Sigma_W)} \\
        & \cdot 
         \Bigg(\nu_1(w) \bigg(\norm{\dlyap\paren{(A(\theta')+B(\theta')F)^\top, \Sigma_W}}\\
         &\!+\!  \sigma_{\tilde u}^2 \paren{\sum_{t=0}^{\infty} \norm{(A(\theta') \!+\!  B(\theta')F)^t B}}^2 \Bigg) \!+\! 2 \sigma_{\tilde u}^2 \nu_2(w)\bigg).
    \end{aligned}
    \end{equation}
    Here, $\nu_1(w) = \norm{D_{\theta} \VEC A(\theta') w}^2 +  2\norm{D_{\theta} \VEC B(\theta') w}^2 \norm{F}^2$ and $\nu_2(w) = \norm{D_{\theta} \VEC B(\theta') w}^2$ are change of coordinate terms that quantify the impact of the perturbation direction on the information upper bound. We recall that $ \sigma_{\tilde u}^2$ is the average exploratory input energy. 
\end{lemma}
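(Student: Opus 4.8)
The plan is to reduce the spectral-norm bound to a uniform pointwise bound on the quadratic form $w^\top I_p(\theta')w$ over unit vectors $w\in\Span(V)$ and parameters $\theta'$ in the (compactly supported) prior, which lies in $\calB(\theta,\e)$. By convexity of the spectral norm and Jensen's inequality, $\norm{V^\top\E\brac{I_p(\Theta)}V}\le\E\norm{V^\top I_p(\Theta)V}\le\sup_{\theta'\in\calB(\theta,\e)}\norm{V^\top I_p(\theta')V}$; finiteness here uses the regularity assumptions together with the standing assumption that $F$ stabilizes every $\theta'\in\calB(\theta,\e)$. Since $V$ has orthonormal columns, $\norm{V^\top I_p(\theta')V}=\sup_{w\in\Span(V),\,\norm{w}\le1}w^\top I_p(\theta')w$, so it suffices to show $w^\top I_p(\theta')w\le TN\,L(\theta')$ for every such $w$ and $\theta'$, with $L$ as displayed (it already contains a supremum over $w$).

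Next I would substitute the explicit Fisher-information formula. Let $M_w=\bmat{\delta_wA(\theta') & \delta_wB(\theta')}$ denote the matrix with $\VEC M_w=D_\theta\VEC\bmat{A(\theta')&B(\theta')}\,w$, i.e. $\VEC\delta_wA(\theta')=D_\theta\VEC A(\theta')\,w$ and $\VEC\delta_wB(\theta')=D_\theta\VEC B(\theta')\,w$. The Kronecker identity $\VEC(M)^\top\paren{ZZ^\top\otimes\Sigma_W^{-1}}\VEC(M)=Z^\top M^\top\Sigma_W^{-1}MZ$ gives $w^\top I_p(\theta')w=\sum_{n,t}\E_{\theta'}\norm{M_wZ_{t,n}}^2_{\Sigma_W^{-1}}$. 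Bounding $\Sigma_W^{-1}\preceq\lambda_{\min}(\Sigma_W)^{-1}I$ and writing $U_{t,n}=FX_{t,n}+\tilde U_{t,n}$ turns $M_wZ_{t,n}$ into $(\delta_wA(\theta')+\delta_wB(\theta')F)X_{t,n}+\delta_wB(\theta')\tilde U_{t,n}$; then $\norm{a+b}^2\le2\norm{a}^2+2\norm{b}^2$, the inequality $\tr(C\Sigma C^\top)\le\norm{C}_F^2\norm{\Sigma}$ with $\Sigma=\Sigma_{X,t,n}:=\E_{\theta'}\brac{X_{t,n}X_{t,n}^\top}$, the bound $\norm{\delta_wA(\theta')+\delta_wB(\theta')F}_F^2\le 2\nu_1(w)$ (from the triangle inequality and $\norm{\delta_wB\,F}_F\le\norm{\delta_wB}_F\norm{F}$), and $\norm{\delta_wB(\theta')}_F^2=\nu_2(w)$ yield, after taking expectations, a bound of the form $w^\top I_p(\theta')w\le\frac{C}{\lambda_{\min}(\Sigma_W)}\paren{\nu_1(w)\sum_{n,t}\norm{\Sigma_{X,t,n}}+\nu_2(w)\sum_{n,t}\E_{\theta'}\norm{\tilde U_{t,n}}^2}$ for a numerical constant $C$. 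The second sum is at most $NT\sigma_{\tilde u}^2$ by the energy budget \eqref{eq:budgeteq}.

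The crux is bounding $\sum_{n,t}\norm{\Sigma_{X,t,n}}$. Write $A_F:=A(\theta')+B(\theta')F$. Since $X_{0,n}=0$, unrolling gives $X_{t,n}=\sum_{s<t}A_F^{t-1-s}\paren{B(\theta')\tilde U_{s,n}+W_{s,n}}$. I would split $X_{t,n}$ into a noise-driven part, whose covariance is $\sum_{j<t}A_F^{j}\Sigma_W(A_F^{j})^\top\preceq\dlyap(A_F^\top,\Sigma_W)$, and an exploration-driven part, for which weighted Cauchy--Schwarz gives, for any unit $v$, $\E_{\theta'}\paren{v^\top\!\sum_{s<t}A_F^{t-1-s}B(\theta')\tilde U_{s,n}}^2\le\paren{\sum_{j\ge0}\norm{A_F^{j}B(\theta')}}\sum_{s<t}\norm{A_F^{t-1-s}B(\theta')}\,\E_{\theta'}\norm{\tilde U_{s,n}}^2$. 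Summing over $t$ collapses the convolution (interchanging the order of summation), and summing over $n$ while invoking the energy budget produces the factor $\sigma_{\tilde u}^2\paren{\sum_{j\ge0}\norm{A_F^{j}B(\theta')}}^2$ per rollout on average. Because the exploration may be closed-loop, the two parts of $X_{t,n}$ are correlated, but the resulting cross term in $\Sigma_{X,t,n}$ is handled by $\norm{\E\brac{ab^\top}}\le\sqrt{\norm{\E\brac{aa^\top}}\,\norm{\E\brac{bb^\top}}}$ and AM--GM, the extra numerical factor being absorbed into the constant $4$ in $L$. Finally, taking the supremum over unit $w\in\Span(V)$ reproduces $L(\theta')$, and the supremum over $\theta'\in\calB(\theta,\e)$ gives $TN\bar L$.

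I expect the only genuinely delicate point to be the control of the state covariance under a possibly closed-loop, \emph{expectation}-energy-bounded exploratory input: making the convolution sum collapse to $\sigma_{\tilde u}^2\paren{\sum_{j}\norm{A_F^{j}B(\theta')}}^2$, and disentangling the noise/exploration cross-correlation while keeping constants compatible with the stated $L$. Everything else is standard Kronecker-product bookkeeping (to pass from $D_\theta\VEC\bmat{A(\theta')&B(\theta')}$ to the change-of-coordinate terms $\nu_1(w),\nu_2(w)$) and repeated Cauchy--Schwarz / Frobenius-versus-operator-norm estimates.
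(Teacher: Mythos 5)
Your proposal is essentially the paper's own proof: the same Kronecker/vectorization bookkeeping reduces $w^\top I_p(\theta')w$ to $\lambda_{\min}(\Sigma_W)^{-1}$ times quadratic forms in $X_{t,n}$ and $\tilde U_{t,n}$, the noise-driven part of the state covariance is absorbed into $\norm{\dlyap((A(\theta')+B(\theta')F)^\top,\Sigma_W)}$, the exploration-driven part is collapsed by the same triple-sum/AM--GM argument to $\sigma_{\tilde u}^2\paren{\sum_t\norm{(A(\theta')+B(\theta')F)^tB}}^2$, the noise/exploration cross term is handled by the same symmetrized Cauchy--Schwarz the paper isolates as \Cref{lem: matrix cauchy}, and the energy budget \eqref{eq:budgeteq} finishes. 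The only substantive deviation is the order of your triangle inequalities: by grouping $(\delta_wA+\delta_wBF)X_{t,n}$ first and then splitting that coefficient, you get $\norm{\delta_wA+\delta_wBF}_F^2\le 2\norm{D_\theta\VEC A(\theta')w}^2+2\norm{D_\theta\VEC B(\theta')w}^2\norm{F}^2$, which is at most $2\nu_1(w)$ but not $\nu_1(w)$ (the definition places the factor $2$ only on the $\norm{F}^2$ term), so after the additional factor $2$ from the noise/exploration split of the state you land at $8\nu_1(w)/\lambda_{\min}(\Sigma_W)$ rather than the stated $4\nu_1(w)/\lambda_{\min}(\Sigma_W)$ --- the extra factor is \emph{not} absorbed as you hope. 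The paper avoids this by first separating $\delta_wA\,X_{t,n}$ from $\delta_wB\,U_{t,n}$ and only then writing $U_{t,n}=FX_{t,n}+\tilde U_{t,n}$, which doubles only the $\norm{F}^2$ contribution; either reorder your inequalities accordingly or state the lemma with $8$ in place of $4$, which changes nothing downstream.
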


\sloppy The quantity $\dlyap((A(\theta') + B(\theta') F)^\top, 
\Sigma_W)$, 
in the above bound may be interpreted as either the steady state covariance during exploration in the absence of exploratory inputs, or the controllability gramian from the noise to the state. The quantity \ifnum\value{cdc}>0 {}\else{$\sum_{t=0}^\infty \norm{(A(\theta')+B(\theta') F)^t B(\theta')}$ bounds the $\calH_\infty$ norm of the closed-loop system during offline experimentation. Therefore, }\fi $ \sigma_{\tilde u}^2 \paren{\sum_{t=0}^\infty \norm{(A(\theta')+B(\theta') F)^t B(\theta')}}^2$ upper bounds the impact of exploratory input on the state during offline experimentation. 
\ifnum\value{cdc}>0{}\else{
The proof of the above lemma applies repeated use of the triangle inequality, submultiplicativity, the Cauchy-Schwarz inequality. See \Cref{s: pf of fisher bound} for proof details.
}\fi

We now present our first main result: a non-asymptotic lower bound on the local minimax excess cost. As with \Cref{lem: application of van trees}, it is presented in two components: one that holds generally, and another that requires enough data such that any sufficiently good policy $\pi \in \Pi_{\mathsf{lin}}$ outputs a feedback controller $\hat K (\calZ)$ which
 is near optimal with high probability. Consequently, the burn-in times are larger for the second result, and the size of the prior, $\varepsilon$, is required to be small.  We drop the dependence of $A$, $B$, $P$, $\Psi$, $K$,  and $\Sigma_X$ on $\theta$ when the argument is clear from context. 

\begin{theorem}
    \label{thm: finite data bound}
    Consider any matrix $V \in \R^{d_\Theta \times k}$ with $k \leq d_{\Theta}$ which has orthonormal columns. Let
    \begin{align*}
        G &= \inf_{\theta^{'}, \tilde \theta \in \calB(\theta,\e)} \tr\bigg( \Xi_{\theta,\e} \dop_\theta \VEC  K(\theta')V \paren{\dop_\theta  \VEC K(\tilde \theta) V }^\T \bigg),
    \end{align*}
    and $\bar L$ be as in \Cref{lem: fisher bound}. Also let $\Xi_{\theta,\e}$ be as defined in \eqref{eq: kronecker shorthand}. Then for any smooth prior $\lambda$ over $\curly{\theta + V \tilde \theta: \norm{\tilde \theta}\leq \e}$, 
    \begin{align}
        \label{eq: general LB after burn-in}
         \mathcal{EC}^{\mathsf{lin}}_T(\theta,\e)&\geq \frac{G}{8 NT \bar L}
    \end{align}
    is satisfied for
    \begin{blockquote}
        $1)$ $\Gamma_{\theta,\e}=\Sigma_W$ if $TN \geq \frac{ \norm{J(\lambda)}}{\bar L }$. \\
        $2)$ $\Gamma_{\theta,\e} = \Sigma_{\theta, \e}$ if  
         $T \geq \sup_{\theta' \in \calB(\theta,\e)} \frac{16 \norm{\Sigma_X(\theta')}^2}{\lambda_{\min}(\Sigma_X(\theta'))}$, $TN \geq \frac{1}{\bar L}{\max\curly{ \norm{J(\lambda)}, \frac{ G}{ \lambda_{\min}(\Sigma_W) \lambda_{\min}(R) \alpha^2}}}$, and  $\e \leq \min\curly{\frac{\alpha}{2c_1}, c_2}$, where 
            \begin{align*}
                    c_1 &= 84 \Phi^9  \tau(A_{cl}) \\
                    c_2 &= \frac{1}{ 10 \tau(A_{cl}) c_1 } \min\curly{(1+\norm{A_{cl}})^{-2}, (1 + \norm{P})^{-1}} \\
                     \Phi &= (1 + \max\curly{\norm{A}, \norm{B}, \norm{P}, \norm{K},\norm{R^{-1}}}) \\
                    \tau(A_{cl}) &= \paren{\sup_{k \geq 0} \curly{\norm{A_{cl}^k}\rho(A_{cl})^{-k}}}^2/(1-\rho(A_{cl})^2).
                \end{align*}
    \end{blockquote}

\end{theorem}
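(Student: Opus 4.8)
The plan is to take the bound of \Cref{lem: coordinate transformation}, control its denominator via \Cref{lem: fisher bound}, replace its numerator by $G$ up to a probability factor, and then dispatch that probability factor by an elementary dichotomy on the candidate policy; the various burn-in hypotheses are exactly what make each step go through. If $G \le 0$ the claim is trivial since the excess cost is nonnegative, so assume $G > 0$. Fix any $\pi \in \Pi_{\mathsf{lin}}$ with $\pi(X_t,\calZ) = \hat K(\calZ)X_t$ and any smooth prior $\lambda$ on $\curly{\theta + V\tilde\theta : \norm{\tilde\theta}\le\e}$. For the denominator, the triangle inequality, $\norm{V}=1$, and \Cref{lem: fisher bound} give
\[
\norm{V^\top\paren{\E\I_p(\Theta)+\J(\lambda)}V} \le \norm{V^\top\E\I_p(\Theta)V} + \norm{\J(\lambda)} \le TN\bar L + \norm{\J(\lambda)} \le 2TN\bar L
\]
whenever $TN \ge \norm{\J(\lambda)}/\bar L$. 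For the numerator, write $M(\vartheta) := \dop_\theta\VEC K(\vartheta)\,V$ and let $(\Theta_1,\calZ_1),(\Theta_2,\calZ_2)$ be independent copies of $(\Theta,\calZ)$; since $\mathbf{1}_{\calG}$ is $\sigma(\calZ)$-measurable and $M(\Theta)$ depends only on $\Theta$,
\[
\tr\paren{\Xi_{\theta,\e}\,\E[M(\Theta)\mathbf{1}_{\calG}]\,\E[M(\Theta)\mathbf{1}_{\calG}]^\T} = \E\brac{\tr\paren{\Xi_{\theta,\e}M(\Theta_1)M(\Theta_2)^\T}\mathbf{1}_{\calG_1}\mathbf{1}_{\calG_2}} \ge G\,\P[\calG]^2 ,
\]
where $\mathbf{1}_{\calG_i}$ is the indicator of $\calG$ evaluated on $\calZ_i$; the inequality uses that $\Theta_1,\Theta_2\in\calB(\theta,\e)$ so the bracketed trace is $\ge G$ by definition of $G$, that $G\ge 0$, and independence of $\calZ_1,\calZ_2$. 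Substituting both estimates into \Cref{lem: coordinate transformation} gives, for every $\pi$,
\[
\sup_{\theta'\in\calB(\theta,\e)}\mathsf{EC}_T^\pi(\theta') \ge \frac{G\,\P[\calG]^2}{2TN\bar L} .
\]

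In Case $1$ we have $\calG=\Omega$ and $\P[\calG]=1$, so the last display is $\ge \frac{G}{2TN\bar L}\ge\frac{G}{8TN\bar L}$; taking $\inf_\pi$ yields \eqref{eq: general LB after burn-in} under the single hypothesis $TN\ge\norm{\J(\lambda)}/\bar L$. In Case $2$ we have $\calG=\calE$, and we split on the policy. If $\P^{\pi}[\calE]\ge 1/2$, the last display already gives $\sup_{\theta'}\mathsf{EC}_T^\pi(\theta')\ge\frac{G}{8TN\bar L}$; here the hypothesis $T\ge\sup_{\theta'}16\norm{\Sigma_X(\theta')}^2/\lambda_{\min}(\Sigma_X(\theta'))$ is what licenses the choice $(\Gamma_{\theta,\e},\calG)=(\Sigma_{\theta,\e},\calE)$ in \Cref{lem: application of van trees}, and $TN\ge\norm{\J(\lambda)}/\bar L$ is covered by the stated $\max$. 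If instead $\P^{\pi}[\calE]<1/2$, i.e. $\P^{\pi}[\calE^c]\ge 1/2$, we bound $\mathsf{EC}$ directly. On $\calE^c$ there is some $\theta^\star\in\calB(\theta,\e)$ with $\norm{\hat K(\calZ)-K(\theta^\star)}>\alpha$; standard perturbation bounds for the discrete algebraic Riccati solution (cf. \cite{ziemann2022regret}), together with $\e\le\min\curly{\alpha/(2c_1),c_2}$, ensure $\norm{K(\theta')-K(\theta'')}\le\alpha/2$ for all $\theta',\theta''\in\calB(\theta,\e)$, so by the triangle inequality $\norm{\hat K(\calZ)-K(\vartheta)}>\alpha/2$ for \emph{every} $\vartheta\in\calB(\theta,\e)$. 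Using the quadratic form \eqref{eq: suboptality quadratic form} together with $\Sigma_{\vartheta}^{\hat K(\calZ)}\succeq\Sigma_W$ and $\Psi(\vartheta)\succeq R$,
\[
\mathsf{EC}_T^\pi(\vartheta)\ge\lambda_{\min}(\Sigma_W)\lambda_{\min}(R)\,\E^{\pi}\brac{\norm{\hat K(\calZ)-K(\vartheta)}^2\mathbf{1}_{\calE^c}\,\vert\,\vartheta} ;
\]
averaging this over $\vartheta=\Theta\sim\lambda$ (and using $\sup_{\theta'}\mathsf{EC}_T^\pi(\theta')\ge\E_\Theta\mathsf{EC}_T^\pi(\Theta)$, as in \Cref{lem: weak duality lb}) yields $\sup_{\theta'}\mathsf{EC}_T^\pi(\theta')\ge\frac{1}{8}\lambda_{\min}(\Sigma_W)\lambda_{\min}(R)\alpha^2$, which is $\ge\frac{G}{8TN\bar L}$ precisely when $TN\ge\frac{G}{\bar L\,\lambda_{\min}(\Sigma_W)\lambda_{\min}(R)\alpha^2}$, the second entry of the $\max$. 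Either branch gives \eqref{eq: general LB after burn-in}, so $\inf_\pi$ does too.

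The main obstacle is the second branch of Case $2$: turning ``$\hat K(\calZ)$ is far from optimal with constant probability'' into a lower bound at a \emph{single, fixed} parameter value. The witness $\theta^\star$ of $\calE^c$ may depend on $\calZ$, so one cannot simply evaluate $\mathsf{EC}_T^\pi$ at $\theta^\star$; one instead needs the pointwise-in-$\vartheta$ separation $\norm{\hat K(\calZ)-K(\vartheta)}>\alpha/2$, which is exactly where the uniform Lipschitz control of the discrete algebraic Riccati solution over $\calB(\theta,\e)$ — hence the quantities $\Phi$, $\tau(A_{cl})$, $c_1$, $c_2$ and the smallness requirement on $\e$ — enters. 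Everything else is bookkeeping with \Cref{lem: coordinate transformation}, \Cref{lem: fisher bound}, and the elementary facts $\Psi(\vartheta)\succeq R$ and $\Sigma_{\vartheta}^{\hat K(\calZ)}\succeq\Sigma_W$.
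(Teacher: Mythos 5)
Your proposal is correct and follows essentially the same route as the paper: bound the denominator via \Cref{lem: fisher bound} and the burn-in on $\norm{\J(\lambda)}$, lower bound the numerator by $G\,\P[\calG]^2$, and in Case~2 use the Riccati perturbation bound (giving $c_1,c_2$) plus the pointwise inequality $\lambda_{\min}(\Sigma_W)\lambda_{\min}(R)\norm{\hat K-K(\vartheta)}^2\le \mathsf{EC}$ to handle the event $\calE$. The only difference is cosmetic: the paper assumes the excess cost is at most $G/(8NT\bar L)$ and deduces $\P[\calE]\ge 1/2$ via Markov's inequality, whereas you split directly on $\P[\calE]$ and, when $\P[\calE^c]\ge 1/2$, derive the excess-cost lower bound head-on — the contrapositive of the same dichotomy, yielding identical constants and burn-in conditions.
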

\ifnum\value{cdc}>0{

 The theorem follows by bounding the probability of the event $\calG$ in \Cref{lem: coordinate transformation}, then bounding the denominator with \Cref{lem: fisher bound}.
}\else{
\begin{proof}
    We must show that for all $\pi \in \Pi^{\mathsf{lin}}$, $\sup_{\theta'\in\calB(\theta,\e)} \mathsf{EC}_T^\pi(\theta') \geq \frac{G}{8 NT \bar L}$. Suppose that for some $\pi \in \Pi^\mathsf{lin}$, $\sup_{\theta' \in \calB(\theta, \e) }\mathsf{EC}_T^\pi(\theta') \leq \frac{G}{8 NT \bar L}$. We have by \Cref{lem: coordinate transformation} that
    \begin{align}
        \label{eq: single policy lb}
        \sup_{\theta'\in\calB(\theta,\e)} \mathsf{EC}_T^\pi(\theta') \geq \frac{\tr\left(\Xi_{\theta,\e}  \E[\dop_\theta \VEC  K(\Theta) V \mathbf{1}_\calG] \E [\dop_\theta  \VEC K(\Theta) V 
         \mathbf{1}_\calG ]^\T \right)}{ \norm{V^\top \paren{\E \I_p(\Theta)+\J(\lambda)}V} }.
    \end{align}
    The burn-in requirement $TN \geq \frac{\norm{J(\lambda)}}{\bar L}$ enables upper bounding $\norm{V^\top J(\lambda) V}$ by $TN \bar L$.  \Cref{lem: fisher bound} then allows us to upper bound the denominator in \eqref{eq: single policy lb} by $2TN \bar L$. 

    To remove the indicators from the lower bound, we take an infimum over $\tilde \theta, \theta' \in \calB(\theta,\e)$ to lower bound the numerator in \eqref{eq: single policy lb} by $\bfP[\calG]^2 G$. For case 1, we immediately have $\bfP[\calG]^2=  \bfP[\Omega]^2 = 1$. For case 2, we may leverage the assumptions that the prior is small and that the burn-in time is satisfied to show that $\bfP[\calG]^2 =\bfP[\calE]^2 \geq \frac{1}{4}$. See \Cref{s: proof of finite data bound}  for more details. This in turn implies that 
    \begin{align*}
        \sup_{\theta'\in\calB(\theta,\e)} \mathsf{EC}_T^\pi(\theta') \geq \frac{G}{8 TN\bar L}.
    \end{align*}
    Therefore, for all $\pi \in \Pi^{\mathsf{lin}}$, the above lower bound is satisfied. This implies that 
    \begin{align*}
        \mathcal{EC}^{\mathsf{lin}}_T(\theta,\e) = \inf_{\pi \in \Pi^\mathsf{lin}} \sup_{\theta' \in \calB(\theta,\e)} \mathsf{EC}_T^\pi(\theta') \geq \frac{G}{8 TN\bar L}.
    \end{align*}
\end{proof}
}\fi
The above result holds non-asymptotically. It will be helpful to present the result asymptotically, as the number of experiments tends to $\infty$ for an understanding of the dependence on control-theoretic quantities. 
\begin{corollary}

\label{cor: asymptotic lower bound}
For any $\alpha \in (0, 1/2)$ and any matrix $V \in \R^{d_\Theta \times k}$ with $k \leq d_{\Theta}$ which has orthonormal columns, 
we have that
    \begin{align*}
        &\liminf_{N \to \infty} \sup_{\theta' \in \mathcal{B}(\theta, N^{-\alpha}) } N \mathsf{EC}_T^\pi(\theta')   \geq \frac{G}{8 T L(\theta)}, 
    \end{align*}
    holds always for $\Gamma=\Sigma_W$ and for $\Gamma=\frac{1}{2} \Sigma_X$ if $T \geq \frac{16 \norm{\Sigma_X}^2}{\lambda_{\min}(\Sigma_X)}$, where $L$ is as in \Cref{lem: fisher bound} and
    \begin{align*}
        G &= \tr\left( (\Gamma \otimes \Psi)  \dop_\theta \VEC  K(\theta )V \paren{\dop_\theta  \VEC K(\theta) V }^\T \right).
    \end{align*}
    
\end{corollary}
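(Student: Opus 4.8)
The plan is to instantiate \Cref{thm: finite data bound} along the sequence of shrinking neighborhoods $\e = \e_N := N^{-\alpha}$ and pass to the limit. Fix any $\pi \in \Pi^{\mathsf{lin}}$; since $\sup_{\theta' \in \calB(\theta,\e)}\mathsf{EC}_T^\pi(\theta') \geq \mathcal{EC}^{\mathsf{lin}}_T(\theta,\e)$ by definition of the infimum, it suffices to lower bound $N\cdot\mathcal{EC}^{\mathsf{lin}}_T(\theta,\e_N)$. For the prior I would take the distribution $\lambda_N$ over $\{\theta + V\tilde\theta : \|\tilde\theta\|\le\e_N\}$ induced by rescaling a fixed smooth bump density $\lambda_1$ on the unit ball of $\R^k$, i.e. $\tilde\lambda_N(\tilde\theta) = \e_N^{-k}\lambda_1(\tilde\theta/\e_N)$. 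A direct change of variables shows that the prior-concentration term satisfies $\|J(\lambda_N)\| = \e_N^{-2}\|J(\lambda_1)\| = N^{2\alpha}\|J(\lambda_1)\|$ (in the reparametrized $k$-dimensional coordinates used in \Cref{lem: coordinate transformation}), so it grows polynomially in $N$ at a rate set by the exponent $\alpha$.

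Next I would pick the uniform dominations required in \eqref{eq: uniform lower bound on N and Gamma} by continuity: set $\Psi_{\theta,\e} := \Psi(\theta) - \delta_\Psi(\e) I$ and $\Sigma_{\theta,\e} := \tfrac12\Sigma_X(\theta) - \delta_\Sigma(\e) I$, where $\delta_\Psi(\e) := \sup_{\theta'\in\calB(\theta,\e)}\|\Psi(\theta') - \Psi(\theta)\|$ and $\delta_\Sigma(\e)$ is defined analogously. Since $A,B$ are $C^1$ and the DARE solution is smooth under the standing stabilizability/detectability assumptions, $\Psi$ and $\Sigma_X$ are continuous, so $\delta_\Psi(\e),\delta_\Sigma(\e)\to 0$; hence for small $\e$ these matrices are positive definite, dominate $\Psi(\theta')$ and $\tfrac12\Sigma_X(\theta')$ on $\calB(\theta,\e)$ as required, and converge to $\Psi(\theta)$ and $\tfrac12\Sigma_X(\theta)$. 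It follows that $\Xi_{\theta,\e_N} \to \Gamma\otimes\Psi(\theta)$, and, since $\theta'\mapsto\dop_\theta\VEC K(\theta')$ is continuous, the quantity $G$ of \Cref{thm: finite data bound}, an infimum over $\theta',\tilde\theta\in\calB(\theta,\e_N)$, converges to the $G$ in the corollary statement, while $\bar L = \sup_{\theta'\in\calB(\theta,\e_N)}L(\theta')\to L(\theta)$.

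The crux is checking that the burn-in conditions of \Cref{thm: finite data bound} are eventually satisfied along $\e_N = N^{-\alpha}$. The condition $TN \geq \|J(\lambda_N)\|/\bar L_N$ reads $N^{1-2\alpha} \geq \|J(\lambda_1)\|/(T\bar L_N)$, and since $\bar L_N\to L(\theta)>0$ and $\alpha<1/2$, the left side diverges, so this holds for all large $N$ — this is precisely where the restriction $\alpha\in(0,1/2)$ is used. The remaining case-2 conditions are handled the same way: the constant $\alpha$ of \Cref{lem: application of van trees} (distinct from the exponent here) and the constants $c_1,c_2$ are $\e$-dependent but converge to strictly positive limits as $\e\downarrow 0$, so $\tfrac{1}{\bar L_N}\max\{\|J(\lambda_N)\|,\, G_N/(\lambda_{\min}(\Sigma_W)\lambda_{\min}(R)\alpha_N^2)\}$ grows of order $N^{2\alpha}$ (dominated by the $\|J(\lambda_N)\|$ term), which is $o(TN)$; the constraint $\e_N\le\min\{\alpha_N/(2c_1),c_2\}$ holds for large $N$ since $\e_N\to 0$; and $T\geq\sup_{\theta'\in\calB(\theta,\e_N)}16\|\Sigma_X(\theta')\|^2/\lambda_{\min}(\Sigma_X(\theta'))$ follows for small $\e_N$ from continuity and the hypothesis $T\geq 16\|\Sigma_X\|^2/\lambda_{\min}(\Sigma_X)$ (with the boundary case handled by a standard limiting argument). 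Once these hold, \Cref{thm: finite data bound} gives $N\cdot\mathcal{EC}^{\mathsf{lin}}_T(\theta,\e_N)\geq G_N/(8T\bar L_N)$ for all large $N$; taking $\liminf_{N\to\infty}$ and using $G_N\to G$, $\bar L_N\to L(\theta)$ yields the claim. The main obstacle is exactly this last paragraph: balancing the blow-up of the prior-concentration term, of order $N^{2\alpha}$, against the available sample size $TN$ — which is what forces $\alpha<1/2$ — together with verifying that every $\e$-dependent constant entering \Cref{thm: finite data bound} is continuous in $\e$ and stays bounded away from its degenerate value as $\e\downarrow 0$.
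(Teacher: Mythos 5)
Your proposal is correct and follows essentially the same route as the paper: instantiate \Cref{thm: finite data bound} at $\e_N = N^{-\alpha}$ with a rescaled smooth bump prior so that $\norm{J(\lambda_N)} = \e_N^{-2}\norm{J(\lambda_1)} = N^{2\alpha}\norm{J(\lambda_1)} = o(N)$, verify the burn-in conditions eventually hold (which is exactly where $\alpha < 1/2$ enters), and pass to the limit using continuity of the $\e$-dependent quantities. Your write-up is in fact more careful than the paper's terse appendix argument, e.g.\ in tracking the convergence of $G_N$, $\bar L_N$, and the case-2 constants, and in flagging the boundary case of the hypothesis on $T$.
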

\ifnum\value{cdc}>0{}\else{\begin{proof}
    The burn-in requirements in \Cref{thm: finite data bound} are satisfied asymptotically, see \Cref{s: asymptotic lb proof} for more details. 
\end{proof}
}\fi

Using a similar argument to the derivations above, it can be shown that the global minimax complexity is infinite. 
\begin{corollary}
    \label{cor: global minimax}
    The global minimax excess cost is infinite for the class of scalar systems of the form: 
    \ifnum\value{cdc}>0{$X_{t+1} = a X_t + b U_t + W_t,$}
    \else{
    \[
        X_{t+1} = a X_t + b U_t + W_t,
    \]
    }\fi
    with $\theta = \bmat{a & b}^\top$, and $Q=R=\Sigma_W = \sigma_{\tilde u}^2= 1$. More precisely, for the class of stable scalar systems with the offline exploration policy $F=0$, we have
    \begin{align*}
        \liminf_{N \to \infty} \sup_{a, b: \abs{a} <1} N \mathsf{EC}_T^\pi(a,b) = \infty.
    \end{align*}
\end{corollary}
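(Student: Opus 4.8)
The plan is to deduce the claim from the asymptotic local lower bound \Cref{cor: asymptotic lower bound}, instantiated at the scalar parametrization with the single perturbation direction $V = \bmat{0 & 1}^\top$ (perturb the input gain $b$ only, as in \Cref{fig: hard to control instance}), and then to push the base instance toward the loss-of-controllability / marginal-stability limit so that the resulting constant diverges. Fix any $\pi\in\Pi_{\mathsf{lin}}$ and any $\alpha\in(0,1/2)$. For a base point $\theta^\star=(a^\star,b^\star)$ with $|a^\star|<1$ and $b^\star\neq 0$, \Cref{cor: asymptotic lower bound} applied with $\Gamma=\Sigma_W$ (the branch requiring no burn-in on $T$) gives $\liminf_{N\to\infty}\sup_{\theta'\in\calB(\theta^\star,N^{-\alpha})}N\,\mathsf{EC}_T^\pi(\theta')\geq G(\theta^\star)/(8TL(\theta^\star))$. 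Since $|a^\star|<1$, for all large $N$ the ball $\calB(\theta^\star,N^{-\alpha})$ lies in $\{(a,b):|a|<1\}$, hence $\sup_{a,b:|a|<1}N\,\mathsf{EC}_T^\pi(a,b)\geq \sup_{\theta'\in\calB(\theta^\star,N^{-\alpha})}N\,\mathsf{EC}_T^\pi(\theta')$, and taking $\liminf_N$ yields $\liminf_{N\to\infty}\sup_{a,b:|a|<1}N\,\mathsf{EC}_T^\pi(a,b)\geq G(\theta^\star)/(8TL(\theta^\star))$. It therefore suffices to exhibit base points along which the right-hand side is unbounded.

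For this $V$, $F=0$, and $Q=R=\Sigma_W=\sigma_{\tilde u}^2=1$ the denominator is a fixed constant. Indeed $A(\theta)=a$ is the first coordinate, so $D_\theta\VEC A(\theta)V=0$, which together with $F=0$ forces $\nu_1\equiv 0$ in \Cref{lem: fisher bound}, while $D_\theta\VEC B(\theta)V=1$ gives $\nu_2\equiv 1$; substituting yields $L(\theta')\equiv 8$ for every $\theta'$. The numerator in the scalar case is $G(\theta)=\Psi(\theta)\,(\partial_b K(\theta))^2$ (here $\Gamma\otimes\Psi=\Sigma_W\Psi=\Psi$ and $D_\theta\VEC K(\theta)V=\partial_b K(\theta)$). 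A short computation with the scalar discrete algebraic Riccati equation shows $P(a,b)\to P(a,0)=(1-a^2)^{-1}$ and $\Psi(a,b)=b^2P(a,b)+1\to 1$ as $b\to 0$ with $a$ fixed, and $\partial_b K(a,b)|_{b=0}=-a\,P(a,0)=-a/(1-a^2)$, so $G(a,b)\to a^2/(1-a^2)^2$ as $b\to 0$. Letting $a\to 1^-$, this tends to $+\infty$; consequently $\sup_{|a|<1,\,b\neq 0}G(a,b)=+\infty$.

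Now assemble: given any $M>0$, use $\sup_{|a|<1,\,b\neq 0}G(a,b)=+\infty$ to pick $\theta^\star=(a^\star,b^\star)$ with $|a^\star|<1$, $b^\star\neq 0$, and $G(\theta^\star)>64TM$; then $G(\theta^\star)/(8TL(\theta^\star))=G(\theta^\star)/(64T)>M$, so the chain of inequalities in the first paragraph gives $\liminf_{N\to\infty}\sup_{a,b:|a|<1}N\,\mathsf{EC}_T^\pi(a,b)>M$. As $M>0$ was arbitrary this $\liminf$ equals $+\infty$, and since $\pi\in\Pi_{\mathsf{lin}}$ was arbitrary the global minimax excess cost is infinite. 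It remains to check the hypotheses of \Cref{cor: asymptotic lower bound} at $\theta^\star$: since $b^\star\neq 0$ the scalar pair is controllable hence stabilizable and $P$, $K$ are smooth on a neighborhood of $\theta^\star$; $(a',1)$ is detectable for every $a'$; and $F=0$ stabilizes $(a',b')$ for all $\theta'$ in a small ball around $\theta^\star$ once $N$ is large, so the standing assumptions and \Cref{asm: van trees reguarlity} (for any smooth prior supported on $\{\theta^\star+V\tilde\theta:\|\tilde\theta\|\leq N^{-\alpha}\}$) hold.

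The crux is the middle paragraph: showing that the numerator $G$ — equivalently, the sensitivity $\partial_b K$ of the optimal gain to the input matrix, weighted by $\Psi$ — genuinely blows up as control authority vanishes and the optimally regulated closed loop approaches marginal stability. This is exactly the behavior illustrated numerically in \Cref{fig: hard to control instance}, and making it rigorous amounts to the brief Riccati perturbation computation sketched above; the remaining steps (constancy of $L$, replacing the shrinking local ball by the global set as $N\to\infty$, and the limit bookkeeping) are routine.
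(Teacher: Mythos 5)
Your proposal is correct and follows essentially the same route as the paper: both instantiate \Cref{cor: asymptotic lower bound} with $V=\bmat{0 & 1}^\top$ and $\Gamma=\Sigma_W$, observe that $L\equiv 8$ is constant over the class, and conclude by showing the numerator $G=(b^2P+1)\,(\partial_b K)^2$ is unbounded over the stable scalar systems. The only (harmless) difference is the divergent family used: the paper takes the coupled path $a=1-\gamma$, $b=\gamma$, where $P\to\infty$ and $\partial_b K$ is evaluated via a Lyapunov-equation expression for $\partial_b P$, whereas you take the iterated limit $b\to 0$ then $a\to 1^-$, for which $P$ stays finite and $G\to a^2/(1-a^2)^2$ by a more elementary Riccati computation; both choices are valid.
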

\ifnum\value{cdc}>0{}\else{
\begin{proof}
    We argue as in the proof of \Cref{cor: asymptotic lower bound} with $V = \bmat{0 & 1}^\top$. In this perturbation direction, the lower bound evaluates to $\frac{1}{T} \frac{b^2 P +1}{32} \frac{\partial K}{\partial b}(a,b)^2$. Considering $a = 1-\gamma$ and $b =\gamma$ for $0 < \gamma < 1$ and taking the limit as $\gamma\to 0$ results in the lower bound of $\infty$. For more details, see \Cref{s: global minimax pf}.
\end{proof}}\fi

\section{Consequences of the Lower Bound}
\label{s: consequences of lower bound}

In this section, we examine cases where the bound in \Cref{cor: asymptotic lower bound} has interpretable dependence upon system properties. To do so, we restrict attention to the setting where all system parameters are unknown, i.e. $\VEC \bmat{A(\theta) & B(\theta)} = \theta$. 
In this setting, the quantity $\dop_\theta \VEC \bmat{A(\theta) & B(\theta)}$ arising in the bounds from the previous section is the identity matrix. 

The derivative of the controller multiplied by a matrix with orthonormal columns,  $\dop_\theta \VEC K(\theta) V$, arises in the bounds from the previous section. In this section, this quantity is expressed in terms of the directional derivative of the controller in some direction $v$, denoted $d_v K(\theta)$. In particular, we represent the columns of $V$ as $v = \VEC \bmat{\Delta_A & \Delta_B}$ for arbitrary perturbations $\Delta_A$ of $A$ and $\Delta_B$ of $B$ which satisfy $\norm{\bmat{\Delta_A & \Delta_B}}_F=1$. The corresponding change in the closed-loop state matrix is denoted $\Delta_{A_{cl}} = \Delta_A + \Delta_B K$. 
Then the directional derivative of the controller is shown in Lemma B.1 of \cite{simchowitz2020naive} to be
    \begin{align}
        \label{eq: directional derivative of lqr}
        d_v K(\theta)\! =\! -\!\Psi^{-1} (\Delta_B^\top P A_{cl} \!+\! B^\top P \Delta_{A_{cl}} \!+\! B^\top P' A_{cl}),
    \end{align}
    where $P' = \texttt{dlyap}(A_{cl}, A_{cl}^\top P \Delta_{A_{cl}} + \Delta_{A_{cl}}^\top P A_{cl})$.
The subsequent sections study the bound from \Cref{cor: asymptotic lower bound} under various perturbations $\bmat{\Delta_A & \Delta_B}$.  \ifnum\value{cdc}>0{}\else{Proofs are deferred to \Cref{s: proofs of consequences}}\fi. 



\subsection{Dimensional dependence}

In the setting of online LQR for an unknown system, recent works \citep{simchowitz2020naive, ziemann2022regret} obtaining lower bounds on the regret have 
used perturbation directions which cause tension between identification and control \citep{polderman1986necessity}. In particular, they considered the set of perturbation directions 
\ifnum\value{cdc}>0{
\begin{equation}
\begin{aligned}
\label{eq: polderman perturbation}
&\boldsymbol \Delta \!=\!
\curly{\!\VEC\! \bmat{-\Delta K \!\! &\!\! \Delta } \!\bigg\vert \Delta \!\in\! \mathbb{R}^{\dx \times \du}, \! \norm{\bmat{-\Delta K \!\! &\!\! \Delta }}_F \!=\! 1\!}.
\end{aligned} 
\end{equation}
}\else{
\begin{align}
\label{eq: polderman perturbation}
\boldsymbol \Delta = \curly{\VEC \bmat{-\Delta K & \Delta } \bigg\vert \Delta \in \mathbb{R}^{\dx \times \du}, \norm{\bmat{-\Delta K & \Delta }}_F = 1}.
\end{align} }\fi
For all such perturbations, $\Delta_{A_{cl}} = 0$, making it impossible to distinguish between the true parameters and the perturbed parameters online without sufficient exploratory input noise. 

While the tension between identification and control is no longer present in the offline setting, this set of perturbation directions retains the benefit that the directional derivative in \eqref{eq: directional derivative of lqr} is easy to work with. In particular for any $v = \VEC \bmat{-\Delta K & \Delta} \in \boldsymbol \Delta$,
\begin{align}
    \label{eq: polderman derivative}
        d_v K(\theta) = -\Psi^{-1} \Delta^\top P A_{cl}. 
\end{align}
 
    As the matrices $\Delta$ parametrizing the set $\boldsymbol \Delta$ are $\dx \times \du$ dimensional, we may stack $\dx\du$ orthogonal vectors $v_i$ belonging $\boldsymbol \Delta$ into a matrix $V = \bmat{v_1 & \dots & v_{\dx \du}}$. This allows us to present a lower bound which demonstrates the dependence of the offline LQR problem upon the system dimensions $\dx$ and $\du$. 
    \begin{proposition}
    \label{prop: dimensional dependence}
    Suppose that $T \geq \frac{16 \norm{\Sigma_X}^2}{\lambda_{\min}(\Sigma_X)}$. Then for $\alpha \in (0,1/2)$, 
    \ifnum\value{cdc}>0{
    \begin{align*}
        \liminf_{N \to \infty} &\sup_{\theta' \in \mathcal{B}(\theta, N^{-\alpha}) } N \mathsf{EC}_T^\pi(\theta')   \geq \\  &\frac{\dx \du \lambda_{\min}(\Sigma_X - \Sigma_W) \lambda_{\min}(P)^2}{16 T \norm{\Psi} \norm{ \bmat{-K & I}}^2 \tilde L},
    \end{align*}
    }\else{
    \begin{align*}
        \liminf_{N \to \infty} \sup_{\theta' \in \mathcal{B}(\theta, N^{-\alpha}) } N \mathsf{EC}_T^\pi(\theta')   \geq  \frac{\dx \du \lambda_{\min}(\Sigma_X - \Sigma_W) \lambda_{\min}(P)^2}{16T \norm{\Psi} \norm{ \bmat{-K & I}}^2  \tilde L} 
    \end{align*}}\fi
    where $\tilde L$ is given by $L(\theta)$ as in \eqref{eq: info bound L} by replacing $\nu_1$ with $1$ and $\nu_2$ with $1 + 2\norm{F}^2$.
    \end{proposition}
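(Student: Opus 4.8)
The plan is to instantiate \Cref{cor: asymptotic lower bound} with the specific choice of $V$ whose $\dx\du$ orthonormal columns span a subspace of the Polderman perturbation set $\boldsymbol\Delta$, and then lower bound the numerator $G$ and upper bound the denominator $L(\theta)$ separately. For the denominator, observe that along any $v = \VEC\bmat{-\Delta K & \Delta} \in \boldsymbol\Delta$ with $\|\bmat{-\Delta K & \Delta}\|_F = 1$, the change-of-coordinate terms $\nu_1(w)$ and $\nu_2(w)$ from \Cref{lem: fisher bound} satisfy $\nu_1(w) = \|\Delta K w'\|^2 + 2\|\Delta w'\|^2\|F\|^2 \le 1 + 2\|F\|^2$ (since $D_\theta\VEC A(\theta) w = -\Delta K w'$, $D_\theta \VEC B(\theta) w = \Delta w'$ in the all-parameters-unknown setting, and $\|\bmat{-\Delta K & \Delta}\|_F=1$ controls both pieces), and similarly $\nu_2(w) = \|\Delta w'\|^2 \le 1$. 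Actually one is slightly more careful: the statement replaces $\nu_1$ by $1$ and $\nu_2$ by $1 + 2\|F\|^2$, which corresponds to bounding the relevant Frobenius-normalized quantities; in either case one gets a dimension-free constant $\tilde L$ of the stated form, so $L(\theta') \le \tilde L$ uniformly, and the denominator of the bound is $8T\tilde L$ asymptotically.

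The main work is lower bounding $G = \tr\!\big((\Gamma\otimes\Psi)\, \dop_\theta\VEC K(\theta)V\, (\dop_\theta\VEC K(\theta)V)^\T\big)$ with $\Gamma = \tfrac12\Sigma_X$. Using \eqref{eq: polderman derivative}, the $i$-th column of $\dop_\theta\VEC K(\theta)V$ is $\VEC d_{v_i}K(\theta) = \VEC(-\Psi^{-1}\Delta_i^\top P A_{cl})$, so
\[
  G = \tfrac12 \sum_{i=1}^{\dx\du} \tr\!\big(\Psi\, \Psi^{-1}\Delta_i^\top P A_{cl}\, \Sigma_X\, A_{cl}^\top P \Delta_i \,\Psi^{-1}\big)
    = \tfrac12 \sum_{i=1}^{\dx\du} \tr\!\big(\Delta_i^\top P A_{cl}\Sigma_X A_{cl}^\top P \Delta_i \,\Psi^{-1}\big).
\]
Here the key identity is that $\Sigma_X = \dlyap(A_{cl}^\top,\Sigma_W)$ satisfies $A_{cl}\Sigma_X A_{cl}^\top = \Sigma_X - \Sigma_W$ (the Lyapunov equation), so $P A_{cl}\Sigma_X A_{cl}^\top P = P(\Sigma_X - \Sigma_W)P \succeq \lambda_{\min}(\Sigma_X - \Sigma_W)\,P^2 \succeq \lambda_{\min}(\Sigma_X-\Sigma_W)\lambda_{\min}(P)^2 I$, and $\Psi^{-1}\succeq \|\Psi\|^{-1}I$. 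Hence each summand is at least $\lambda_{\min}(\Sigma_X-\Sigma_W)\lambda_{\min}(P)^2\|\Psi\|^{-1}\|\Delta_i\|_F^2$. The remaining step is to choose the $v_i$ so that $\sum_i \|\Delta_i\|_F^2$ is at least $\dx\du/\|\bmat{-K & I}\|^2$ — indeed, taking $\Delta_i = e_{r}e_s^\top/\|\bmat{-K & I}\|$ ranging over all $\dx\du$ pairs $(r,s)$ gives orthonormal $v_i = \VEC\bmat{-\Delta_i K & \Delta_i}/\|\cdot\|$ in $\boldsymbol\Delta$ with $\|\Delta_i\|_F = 1/\|\bmat{-K&I}\|$, so $\sum_i\|\Delta_i\|_F^2 = \dx\du/\|\bmat{-K&I}\|^2$. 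Combining, $G \ge \tfrac12 \dx\du\, \lambda_{\min}(\Sigma_X - \Sigma_W)\lambda_{\min}(P)^2 / (\|\Psi\|\,\|\bmat{-K&I}\|^2)$, and dividing by $8T\tilde L$ yields the stated bound with the factor $16$ absorbing the $\tfrac12 \cdot \tfrac18$.

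Two details require care. First, one must verify that the $v_i$ constructed above are genuinely orthonormal in $\R^{d_\Theta}$ (not merely that the $\Delta_i$ are): since $v_i = \VEC\bmat{-\Delta_i K & \Delta_i}/\|\bmat{-\Delta_i K & \Delta_i}\|_F$ and the $\Delta_i$ have disjoint supports, the map $\Delta\mapsto\bmat{-\Delta K & \Delta}$ is linear and injective, so distinct disjoint-support $\Delta_i$ produce a linearly independent family; orthonormalizing does not change the span, and one checks that the bound on $\sum_i\|\Delta_i\|_F^2$ is preserved under orthonormalization (or, more simply, one argues directly that $\langle \bmat{-\Delta_i K & \Delta_i}, \bmat{-\Delta_j K & \Delta_j}\rangle_F = 0$ for $i\ne j$ because $\Delta_i^\top\Delta_j = 0$ and $\Delta_i K K^\top \Delta_j^\top$ has trace zero when supports are disjoint — actually this needs $\langle \Delta_i K, \Delta_j K\rangle_F + \langle\Delta_i,\Delta_j\rangle_F = \tr(K^\top\Delta_i^\top\Delta_j K) + \tr(\Delta_i^\top\Delta_j)$, and $\Delta_i^\top\Delta_j = (e_{r_i}e_{s_i}^\top)^\top(e_{r_j}e_{s_j}^\top)/\|\cdot\|^2 = e_{s_i}e_{r_i}^\top e_{r_j}e_{s_j}^\top/\|\cdot\|^2$, which vanishes unless $r_i = r_j$; so one should instead pick the $\Delta_i$ with a common structure or just invoke Gram–Schmidt and note orthonormalization only decreases the bound by a controlled amount — the cleanest route is Gram–Schmidt combined with the observation that the resulting $G$ is a trace of a PSD matrix, hence monotone). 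The expected main obstacle is precisely this bookkeeping around the choice of $V$ and ensuring the constant $\dx\du$ survives orthonormalization; the algebra of $G$ itself, driven by the Lyapunov identity $A_{cl}\Sigma_X A_{cl}^\top = \Sigma_X - \Sigma_W$, is routine. Second, one must confirm the burn-in hypothesis $T\ge 16\|\Sigma_X\|^2/\lambda_{\min}(\Sigma_X)$ is exactly the hypothesis assumed, which licenses the use of $\Gamma = \tfrac12\Sigma_X$ in \Cref{cor: asymptotic lower bound}; the $\liminf_{N\to\infty}$ then discharges the remaining non-asymptotic burn-in requirements as in the proof of that corollary.
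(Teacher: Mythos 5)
Your proposal is correct and follows essentially the same route as the paper's proof: instantiate \Cref{cor: asymptotic lower bound} with $\Gamma = \tfrac{1}{2}\Sigma_X$ and $V$ built from the Polderman set $\boldsymbol\Delta$, reduce $G$ to $\tfrac12\sum_i\tr\paren{\Psi^{-1}\Delta_i^\top P A_{cl}\Sigma_X A_{cl}^\top P\Delta_i}$ via \eqref{eq: polderman derivative}, and conclude with the Lyapunov identity $A_{cl}\Sigma_X A_{cl}^\top=\Sigma_X-\Sigma_W$ together with $\lambda_{\min}(P)^2$, $\norm{\Psi}^{-1}$, and $\norm{\Delta}_F^2\geq \norm{\bmat{-K & I}}^{-2}$. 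The orthonormality bookkeeping you flag as the main obstacle is a non-issue and needs no explicit $e_re_s^\top$ construction or Gram--Schmidt monotonicity argument: any orthonormal basis of the $\dx\du$-dimensional subspace $\curly{\VEC\bmat{-\Delta K & \Delta}:\Delta\in\R^{\dx\times\du}}$ automatically consists of unit vectors lying in $\boldsymbol\Delta$, and for every such vector $1=\norm{\bmat{-\Delta K & \Delta}}_F^2\leq\norm{\bmat{-K & I}}^2\norm{\Delta}_F^2$ gives the required uniform lower bound on $\norm{\Delta}_F^2$, which is exactly how the paper bounds each of the $\dx\du$ summands by the infimum over $\boldsymbol\Delta$.
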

    
    In addition to the system dimensions, we can interpret the remaining system-theoretic parameters. 
    Note that $\tilde L$ bounds the information available from the offline experimentation. It depends on the norm of the controllability gramian from noise to the state, as well as $ \sigma_{\tilde u}^2 \paren{\sum_{t=0}^\infty \norm{(A+BF)^t B}}^2$, which bounds the impact of the exploratory input on the state. 
     The $\Psi$ in the denominator of the above bound may scale as $\lambda_{\max}(P)$, and therefore effectively cancels a $\lambda_{\min}(P)$ in the numerator for well-conditioned problems. This leaves a single $\lambda_{\min}(P)$ in the numerator. As $x^\top P x$ is the optimal objective value of the noiseless LQR problem starting from initial state $x$, the appearance of $\lambda_{\min}(P)$ in the bound captures the fact that as the system becomes harder to control, it also becomes harder to learn to control. 
     Lastly, the variance term $\lambda_{\min}(\Sigma_X  - \Sigma_W)$ implies that the excess cost is large when the optimal closed-loop system has a large state covariance relative to the process noise covariance. 

    \ifnum\value{cdc}>0{
    \begin{remark}
        The dimensional dependence $\dx\du$  is optimal up to constant factors for classes of under-actuated systems in which remaining system-theoretic quantities are constant with respect to system dimension. In particular, suppose that  $\tilde U_{t,n} \sim \calN(0, \sigma_{\tilde u}^2 I)$, and that $N \geq c \dx$, for some universal constant $c$. In this case, Theorem 2 of \cite{mania2019certainty} combined with Theorem 5.4 in \cite{tu2022learning} demonstrate 
        the upper bound on the excess cost scales with  $\frac{\dx \du}{NT}$. 
    \end{remark}
    }\else{
    \begin{remark}
        The dimensional dependence $\dx\du$ in the above bound is optimal up to constant factors when $\du \leq \dx$. To see that this is so, observe that Theorem 2 of \cite{mania2019certainty} demonstrates an upper bound on the excess cost that scales as $\du \e^2$, where $\e^2$ bounds the system identification error, $\max\curly{\norm{\hat A - A}^2, \norm{\hat B - B}^2}$. A consequence of Theorem 5.4 in \cite{tu2022learning} is that if we apply exploratory inputs which are generated from a Gaussian distribution with mean zero and covariance $ \sigma_{\tilde u}^2 I$, then the upper bound on the system identification error scales as $\frac{\dx + \du}{NT}$. In particular, as long as number of offline trajectories $N$ exceeds $c \dx$, for some universal constant $c$, then 
        $\max\curly{\norm{\hat A - A}^2, \norm{\hat B - B}^2} \lesssim \norm{\Sigma_W} \frac{\dx+\du}{NT \lambda_{\min}(\textrm{controllability gramian})}$. 
        Consequently, the upper bound on the excess cost scales with $\frac{\du (\dx + \du)}{NT} \lesssim \frac{\dx \du}{NT}$ in the underactuated setting.
        Therefore, for classes of systems where the remaining system-theoretic quantities are constant with respect to system dimension, the bound is optimal in the dimension. 
    \end{remark}
    }\fi

\label{s: dimensional dependence}
\subsection{Exponential Lower Bounds}
\label{s: exp}
The previous section demonstrated a lower bound that scales linearly with $\dx \du$. Prior work \citep{tsiamis2022learning} has shown that in the setting of online LQR, there exist classes of systems where the lower bounds on the regret may scale exponentially with the state dimension. This is shown by demonstrating that particular system-theoretic terms, which are often treated as constant with respect to dimension, may actually grow exponentially with the state dimension. 
We demonstrate that in the setting of offline LQR, such systems still cause exponential dependence on dimension. Furthermore, because there are fewer restrictions upon the perturbation directions in the lower bound for the offline setting, we construct a simpler class of a systems which exhibits this behavior. In particular, consider the system
\ifnum\value{cdc}>0{
\begin{equation}
\begin{aligned}
    \label{eq: integrator}
    A &= \bmat{ \rho & 2 & 0 & & 0 & 0 \\ 
    & & & \ddots  \\ 0 & 0 & 0 & & \rho & 2 \\ 0 & 0 & 0 & & 0 & \rho}, \quad B = \bmat{0 \\ \vdots \\ 0 \\ 1}, 
\end{aligned}
\end{equation}
}\else{
\begin{align}
    \label{eq: integrator}
    A = \bmat{ \rho & 2 & 0 & & 0 & 0 \\ 0 & \rho & 2 & & 0 & 0 \\ & & & \ddots  \\ 0 & 0 & 0 & & \rho & 2 \\ 0 & 0 & 0 & & 0 & \rho}, \quad B = \bmat{0 \\ 0\\ \vdots \\ 0 \\ 1}, 
\end{align}}\fi
with $0 < \rho < 1$, $F=0$, $Q = I$, $R = 1$, and $\Sigma_W = I$.
\ifnum\value{cdc}>0{
Using the perturbation direction $V =  \VEC \bmat{0 & B/\norm{B}_F}$, \Cref{cor: asymptotic lower bound} may be used to reach the following conclusion.
}
\else{
Let $V = \VEC \bmat{0 & B/\norm{B}_F}$. Then the quantity $L(\theta)$ in \Cref{cor: asymptotic lower bound} becomes $8  \sigma_{\tilde u}^2$, as $\nu_1(V) = 0$. Meanwhile, (using the option $\Gamma= \Sigma_W = I$), the quantity $G$ becomes
\begin{equation}
\begin{aligned}
    \label{eq: F LB exp}
    G = \tr\left(   (I \otimes \Psi)  \dop_\theta \VEC  K(\theta) VV^\top \dop_\theta  \VEC    K(\theta)^\T \right) = \tr(\Psi d_V K(\theta) d_V K(\theta)^\top).
\end{aligned}
\end{equation}
Using this insight, we may show that the lower bound grows exponentially with the system dimension. 
}\fi
\begin{proposition}
    \label{prop: exponential}
    For the system in \eqref{eq: integrator} suppose $\dx \geq 3$. Then for $\alpha \in (0,1/2)$,
    \begin{align*}
        \liminf_{N \to \infty} \sup_{\theta' \in \mathcal{B}(\theta, N^{-\alpha}) } N \mathsf{EC}_T^\pi(\theta')   \geq \frac{\rho^2}{256 T  \sigma_{\tilde u}^2} 4^{\dx-2}.
    \end{align*}
\end{proposition}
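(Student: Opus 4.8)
The plan is to apply Corollary~\ref{cor: asymptotic lower bound} with the rank-one perturbation $V = \VEC\bmat{0 & B/\norm{B}_F}$ (with $\norm{B}_F = 1$), and then to lower bound separately the two quantities it produces: the weight $\Psi = B^\top P B + R$ appearing in $G$, and the directional derivative $d_V K(\theta)$. With this $V$ one has $\Delta_A = 0$ and $F = 0$, so $\nu_1(V) = 0$, while $\nu_2(V) = \norm{B}_F^2 = 1$, giving $L(\theta) = 8\sigma_{\tilde u}^2$; since $\du = 1$, $\Psi$ is a scalar, so with the option $\Gamma = \Sigma_W = I$ we get $G = \tr\bigl((I\otimes\Psi)\VEC(d_V K)\VEC(d_V K)^\top\bigr) = \Psi\,\norm{d_V K(\theta)}^2$. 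Corollary~\ref{cor: asymptotic lower bound} then yields the bound $\Psi\,\norm{d_V K}^2/(64 T\sigma_{\tilde u}^2)$, and it suffices to prove $\Psi \geq 4^{\dx-1}$ and $\norm{d_V K(\theta)} \geq \rho/4$, since $4^{\dx-1}(\rho/4)^2 = \tfrac14\rho^2 4^{\dx-2}$ is exactly what makes the bound match the claim.

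For $\Psi$: since $P = \dare(A,B,Q,R)$ solves $P = \dlyap(A_{cl}, Q + K^\top R K) \succeq \dlyap(A_{cl}, Q) = \dlyap(A_{cl}, I)$, retaining only the $j=\dx-1$ term of the series $\sum_{j\geq 0}(A_{cl}^\top)^j A_{cl}^j$ gives $\Psi \geq e_{\dx}^\top P e_{\dx} \geq \norm{A_{cl}^{\dx-1}e_{\dx}}^2 \geq (A_{cl}^{\dx-1}e_{\dx})_1^2$. Writing $A_{cl} = A + e_{\dx}K$ (as $B = e_{\dx}$), the rank-one correction $e_{\dx}K(\cdot)$ only ever adds to the last coordinate, so $(A_{cl}^j e_{\dx})_1 = (A^j e_{\dx})_1$ for all $j \leq \dx-1$; and since $A = \rho I + 2 S$ with $S$ the nilpotent up-shift ($Se_i = e_{i-1}$), $(A^{\dx-1}e_{\dx})_1 = 2^{\dx-1}$. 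Hence $\Psi \geq 4^{\dx-1}$. (This just records that the integrator chain has an exponentially ill-conditioned reachability gramian, a property inherited by the Riccati solution.)

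For $d_V K(\theta)$: apply \eqref{eq: directional derivative of lqr} with $\Delta_A = 0$, $\Delta_B = B = e_{\dx}$, so $\Delta_{A_{cl}} = e_{\dx}K$. The standard identities $B^\top P A_{cl} = -RK$ and $A_{cl}^\top P B = -R K^\top$ collapse the three terms to $\Delta_B^\top P A_{cl} = -K$, $B^\top P \Delta_{A_{cl}} = P_{\dx\dx}K$, and (using $R=1$) $P' = -2\,\dlyap(A_{cl}, K^\top K)$, whence $B^\top P' A_{cl} = -2\sum_{j\geq 0}(K A_{cl}^j e_{\dx})(K A_{cl}^{j+1})$. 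Combining, $\Psi\, d_V K = (2-\Psi)K + 2\sum_{j\geq 0}(KA_{cl}^j e_{\dx})(KA_{cl}^{j+1})$; taking the $\dx$-th component and writing $a_j := K A_{cl}^j e_{\dx}$, with $a_0 = K_{\dx} = -\Psi^{-1}(\rho P_{\dx\dx} + 2 P_{\dx,\dx-1})$, gives
\[
(d_V K(\theta))_{\dx} \;=\; \frac{\Psi - 2}{\Psi^2}\bigl(\rho P_{\dx\dx} + 2 P_{\dx,\dx-1}\bigr) \;+\; \frac{2}{\Psi}\sum_{j\geq 0} a_j a_{j+1}.
\]
The leading term is $\tfrac{\rho P_{\dx\dx}}{\Psi}\bigl(1 - \tfrac2\Psi\bigr) \geq \tfrac45\rho$ once $\Psi \geq 4^{\dx-1} \geq 16$ (i.e. $\dx \geq 3$), and it remains to show the two corrections are genuinely smaller: the cross entry $P_{\dx,\dx-1}$ enters with a favorable sign, while $\bigl|2\sum_j a_j a_{j+1}\bigr| \leq 2\sum_j a_j^2 = 2\bigl(P_{\dx\dx} - \sum_j\norm{A_{cl}^j e_{\dx}}^2\bigr)$ is twice the control energy the optimal loop expends regulating the initial state $e_{\dx}$. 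The key point is that this energy is \emph{not} exponential in $\dx$ — the optimal controller only needs inputs of size $O(\rho)$ to cancel the downstream modes that would otherwise be amplified by the chain — so $\sum_j a_j^2$ is negligible against $P_{\dx\dx} \geq 4^{\dx-1}$, giving $\norm{d_V K(\theta)} \geq |(d_V K(\theta))_{\dx}| \geq \rho/4$; combining the two bounds proves the proposition.

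I expect this last estimate — bounding $\sum_j (K A_{cl}^j e_{\dx})^2$ (and the cross term $P_{\dx,\dx-1}$) so that the $\Theta(\rho)$ main term of the directional derivative survives — to be the main obstacle, since these corrections are themselves built from the exponentially large objects $P$, $A_{cl}^{\dx-1}$, and $K$, and one must rule out a cancellation of exponentials; it is here that the triangular‑with‑amplification structure of $(A,B)$ must be exploited quantitatively, e.g. through the envelope‑theorem identity $\sum_j (K A_{cl}^j e_{\dx})^2 = \partial_R P_{\dx\dx}$ together with an explicit near‑optimal controller that uses only $O(\rho)$‑sized inputs.
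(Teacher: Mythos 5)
Your setup coincides with the paper's: the same perturbation direction $V=\VEC\bmat{0 & B}$, the same evaluation $L(\theta)=8 \sigma_{\tilde u}^2$, and the same reduction to lower bounding $G=\Psi\norm{d_VK(\theta)}^2$. The first half of your argument is sound: the chain $\Psi\geq P_{\dx\dx}\geq e_{\dx}^\top\dlyap(A_{cl},I)e_{\dx}\geq\bigl((A_{cl}^{\dx-1}e_{\dx})_1\bigr)^2=\bigl((A^{\dx-1}e_{\dx})_1\bigr)^2=4^{\dx-1}$ is correct (any word in the expansion of $(A+e_{\dx}K)^j$ containing the rank-one factor yields a multiple of $A^ae_{\dx}$ with $a\leq j-1\leq \dx-2$, whose first coordinate vanishes), and it is a nice self-contained substitute for the paper's appeal to Lemma E.4 of \cite{tsiamis2022learning}, which only gives $P_{\dx\dx}\geq 4^{\dx-2}$.

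The second half, $\norm{d_VK(\theta)}\geq\rho/4$, is not proved, and this is a genuine gap rather than a routine omission, as you yourself flag. Your identity $(\Psi\, d_VK)_{\dx}=(2-\Psi)K_{\dx}+2\sum_{j}a_ja_{j+1}$ is correct, but extracting a surviving $\Theta(\rho)$ term requires two unestablished facts: that $P_{\dx,\dx-1}\geq 0$ (equivalently $-K_{\dx}\geq \rho P_{\dx\dx}/\Psi$; positive definiteness of $P$ says nothing about the sign of an off-diagonal entry), and that $\sum_j a_j^2=\sum_j(KA_{cl}^je_{\dx})^2$ is $o(P_{\dx\dx})$. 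The free bound from $P_{\dx\dx}=\sum_j\norm{A_{cl}^je_{\dx}}^2+\sum_j a_j^2$ only gives $\sum_j a_j^2\leq P_{\dx\dx}$, under which the correction $\tfrac{2}{\Psi}\sum_j a_ja_{j+1}$ is merely $O(1)$ --- more than enough to cancel a $\Theta(\rho)$ main term --- and your proposed envelope/near-optimal-controller estimate is left entirely unexecuted. The paper sidesteps this cancellation analysis at the level of the full vector rather than a single coordinate: it writes $\Psi\, d_VK=-B^\top\bigl((P_{\dx\dx}-1)P+2\dlyap(A_{cl},K^\top K)\bigr)A_{cl}$, argues that the positive semidefinite correction $\dlyap(A_{cl},K^\top K)$ can be dropped so that $\norm{d_VK}\geq\frac{P_{\dx\dx}-1}{\Psi}\norm{B^\top PA_{cl}}=\frac{P_{\dx\dx}-1}{\Psi}\norm{K}$, and then uses the elementary bound $\norm{K}\geq\max\{\abs{K_{\dx}},\abs{K_{\dx}+\rho}\}\geq\rho/2$ obtained from $B^\top A_{cl}=\bmat{0 & \hdots & 0 & \rho}+K$. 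To complete your proof you either need to quantify the cancellation you identify (hard), or switch to a matrix-level comparison of this kind.
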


We have therefore demonstrated that accurately learning the LQR controller from offline data may require an amount of data that is exponential in the state dimension. The reason that this system is particularly challenging to learn to control is that a small misidentification of $B$ causes the learner to apply slightly suptoptimal control inputs, which are then amplified by the off-diagonal terms of $A$.  The construction used, \eqref{eq: integrator}, avoids the two subsystem example that was used to derive exponential lower bounds for online LQR in \cite{tsiamis2022learning}. A crucial reason that we are able to bypass such a construction in the offline setting is that the dominant statistical rate of $\frac{1}{NT}$ for offline LQR is present for any perturbation direction of the underlying parameters. In contrast, the regret in the online setting only has the dominant statistical rate in the directions defined by the perturbation set in \eqref{eq: polderman perturbation}.


 \subsection{Interesting System-Theoretic Quantities}
\label{s: system-theoretic}

A consequence of the result in \Cref{s: exp} is that treating system-theoretic quantities as constant with respect to dimension, as is done in \Cref{s: dimensional dependence}, may fail to capture the difficulty of the problem. This leads to unfavorable aspects of the lower bound in \Cref{s: dimensional dependence}, such as the dependence of the denominator on $\norm{K}$. Such an appearance indicates that for systems where the optimal LQR has a large gain, the lower bound becomes small. This is in contrast to our expectations, as a large optimal gain is often indicative of poor controllability (consider a scalar system, with $B\to 0$). 

Motivated by the above discussion, we focus our attention on deriving bounds which have favorable dependence upon system-theoretic quantities. To do so, we examine a perturbation direction for which the lower bound from \Cref{cor: asymptotic lower bound} reduces to easily interpretable quantities which align with our intuition. By taking $V=  \VEC \frac{\bmat{A & B}}{\norm{\bmat{A & B}}_F}$, the directional derivative expression from \eqref{eq: directional derivative of lqr} reduces to 
\begin{align}
    \label{eq: system properties directional derivative}
    d_V  K(\theta) = \frac{2\Psi^{-1} (B^\top \texttt{dlyap}(A_{cl}, P) A_{cl})}{\norm{\bmat{A & B}}_F} .
\end{align}
\ifnum\value{cdc}>0{}\else{
Then the quantity $G$ in \Cref{cor: asymptotic lower bound} (using $\Gamma = \Sigma_X$), 
is
\begin{equation}
\label{eq: G system theoretic}
\begin{aligned}
    &\tr\paren{(\Sigma_X \otimes (B^\top P B + R))\dop_\theta \VEC  K(\theta )V \paren{\dop_\theta  \VEC K(\theta) V }^\T } \\
    &=\tr((B^\top P B+R) d_V K(\theta) \Sigma_X d_V K(\theta)^\top) \\
    &=\frac{4}{\norm{\bmat{A & B}}_F^2} \tr((B^\top P B+R)^{-1} B^\top \dlyap\paren{A_{cl}, P} A_{cl} \Sigma_X A_{cl}^\top \dlyap\paren{A_{cl}, P} B).
\end{aligned}
\end{equation}}\fi
This leads to the following proposition.
\begin{proposition}
    \label{prop: system theoretic}
    Suppose that $R$ and $B^\top P B$ are simultaneously diagonalizable by $U$: $B^\top P B = U \Lambda_{B^\top P B} U^\top$ and $R = U \Lambda_R U^\top$, where $\Lambda_{B^\top P B}$ and $\Lambda_R$ are diagonal. Also suppose that the diagonal entries of $\Lambda_{B^\top P B}$ are sorted in non-ascending order.  Assume $T \geq \frac{16 \norm{\Sigma_X}^2}{\lambda_{\min}(\Sigma_X)}$. Let $\tilde L$ be as in \Cref{prop: dimensional dependence}. Then for $\alpha \in (0,1/2)$
    \ifnum\value{cdc}>0{
    \begin{align*}
        &\liminf_{N \to \infty} \sup_{\theta' \in \mathcal{B}(\theta, N^{-\alpha}) } N \mathsf{EC}_T^\pi(\theta')  \geq  \frac{ \lambda_{\min}(\Sigma_X - \Sigma_W)}{2T \norm{\bmat{A & B}}_F^2 \bar L} 
        \\ &\qquad 
        \cdot\inf_{i \in [\du]} \frac{\lambda_{i}(B^\top P B)}{\lambda_i(B^\top P B) + \Lambda_{R, ii}} \sum_{j=1}^{\du} \lambda_{n-j} (\dlyap(A_{cl}, P)).
    \end{align*}
    }\else{
    \begin{align*}
        &\liminf_{N \to \infty} \sup_{\theta' \in \mathcal{B}(\theta, N^{-\alpha}) } N \mathsf{EC}_T^\pi(\theta')  \geq  \frac{ \lambda_{\min}(\Sigma_X - \Sigma_W)}{2T \norm{\bmat{A & B}}_F^2 \tilde L} 
        \inf_{i \in [\du]} \frac{\lambda_{i}(B^\top P B)}{\lambda_i(B^\top P B) + \Lambda_{R, ii}} \sum_{j=1}^{\du} \lambda_{n-j} (\dlyap(A_{cl}, P)).
    \end{align*}}\fi
\end{proposition}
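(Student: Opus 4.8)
The plan is to instantiate \Cref{cor: asymptotic lower bound} at the rank-one direction $V = \VEC(\bmat{A & B}/\norm{\bmat{A & B}}_F)$ and then reduce both the numerator $G$ and the denominator $L(\theta)$ to the stated system-theoretic quantities. Since every parameter is unknown, $\dop_\theta\VEC\bmat{A & B} = I$, so $\dop_\theta\VEC K(\theta)V = \VEC d_V K(\theta)$ and $G = \tr(\Psi\, d_V K(\theta)\, \Gamma\, d_V K(\theta)^\top)$; substituting the closed form \eqref{eq: system properties directional derivative} for $d_V K(\theta)$ and taking $\Gamma = \Sigma_X$ (legitimate for the second branch of the corollary under the burn-in $T \geq 16\norm{\Sigma_X}^2/\lambda_{\min}(\Sigma_X)$) produces exactly \eqref{eq: G system theoretic}, namely $G = \frac{4}{\norm{\bmat{A & B}}_F^2}\tr((B^\top P B+R)^{-1}B^\top L\, A_{cl}\Sigma_X A_{cl}^\top\, L B)$ with $L := \dlyap(A_{cl},P)$.

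First I would eliminate the $A_{cl}$'s via the Lyapunov equation for $\Sigma_X$: $A_{cl}\Sigma_X A_{cl}^\top = \Sigma_X - \Sigma_W \succeq \lambda_{\min}(\Sigma_X - \Sigma_W)I$, which gives $G \geq \frac{4\lambda_{\min}(\Sigma_X - \Sigma_W)}{\norm{\bmat{A & B}}_F^2}\tr((B^\top P B+R)^{-1}B^\top L^2 B)$. The heart of the argument is then the inequality $\tr((B^\top PB+R)^{-1}B^\top L^2 B) \geq \inf_{i\in[\du]}\frac{\lambda_i(B^\top PB)}{\lambda_i(B^\top PB)+\Lambda_{R,ii}}\sum_{j=1}^{\du}\lambda_{n-j}(L)$. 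Here the simultaneous-diagonalizability hypothesis is what makes $(B^\top PB+R)^{-1}B^\top PB$ a genuine symmetric positive semidefinite matrix whose smallest eigenvalue is $\inf_i\lambda_i(B^\top PB)/(\lambda_i(B^\top PB)+\Lambda_{R,ii})$; writing $(B^\top PB+R)^{-1} = ((B^\top PB+R)^{-1}B^\top PB)(B^\top PB)^{-1}$ and applying $\tr(XY)\geq\lambda_{\min}(X)\tr(Y)$ for symmetric PSD $X,Y$ peels off that factor, after which I would use $L = \dlyap(A_{cl},P) \succeq P \succ 0$ together with Cauchy interlacing for the rank-$\du$ compression induced by $B$ to recover a sum of $\du$ small eigenvalues of $L$.

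In parallel I would control the denominator with \Cref{lem: fisher bound}: $\norm{V^\top\E[I_p(\Theta)]V} \leq TN\bar L$ with $\bar L = \sup_{\theta'\in\calB(\theta,\e)}L(\theta')$, and for this $V$ the identity $\norm{D_\theta\VEC A(\theta')V}^2 + \norm{D_\theta\VEC B(\theta')V}^2 = 1$ shows that the coordinate-change terms $\nu_1(V),\nu_2(V)$ are dominated by the constants $1$ and $1+2\norm{F}^2$ that define $\tilde L$ in \Cref{prop: dimensional dependence}, so $\bar L \leq \tilde L$ in the asymptotic regime $\e = N^{-\alpha}\to0$. Plugging $G$ and $L(\theta) \leq \tilde L$ into $\liminf_N\sup_{\theta'\in\calB(\theta,N^{-\alpha})}N\,\mathsf{EC}_T^\pi(\theta') \geq G/(8TL(\theta))$ from \Cref{cor: asymptotic lower bound} and collecting constants yields the claim.

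The step I expect to be the main obstacle is the trace inequality $\tr((B^\top PB+R)^{-1}B^\top L^2 B) \gtrsim (\text{eigenvalue ratio})\cdot\sum_{j}\lambda_{n-j}(L)$. The difficulties are that products of symmetric PSD matrices are not themselves symmetric — so the simultaneous-diagonalizability assumption is genuinely needed to justify $\tr(XY)\geq\lambda_{\min}(X)\tr(Y)$ — and that extracting a \emph{sum} of $\du$ eigenvalues of $L$ (rather than merely $\du\,\lambda_{\min}(L)$) requires the interlacing step to be run carefully, which is also where the exact index range $\lambda_{n-1},\dots,\lambda_{n-\du}$ and the numerical constant get fixed.
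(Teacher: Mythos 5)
Your proposal is correct and follows essentially the same route as the paper's proof: instantiate \Cref{cor: asymptotic lower bound} at $V = \VEC(\bmat{A & B}/\|\bmat{A & B}\|_F)$, use $A_{cl}\Sigma_X A_{cl}^\top = \Sigma_X - \Sigma_W$, peel off the eigenvalue ratio $\lambda_i(B^\top PB)/(\lambda_i(B^\top PB)+\Lambda_{R,ii})$ via simultaneous diagonalizability, and finish with the rank-$\du$ compression (interlacing) plus $\dlyap(A_{cl},P)\succeq P$. The only cosmetic difference is that the paper performs the whitening explicitly through the SVD $P^{1/2}B = U\Sigma V^\top$ rather than your factorization $(B^\top PB+R)^{-1} = \bigl((B^\top PB+R)^{-1}B^\top PB\bigr)(B^\top PB)^{-1}$, which amounts to the same computation.
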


\ifnum\value{cdc}>0{The assumption that $R$ and $B^\top P B$ are simultaneously diagonalizable is satisfied if $R$ is chosen as a scalar multiple of the identity. }\else{
As $R$ is often chosen to be a scalar multiple of the identity for LQR problems, the assumption that $R$ and $B^\top P B$ are simultaneously diagonalizable is often satisfied. If we additionally have $R \preceq B^\top P B$, then $\inf_{i \in [\du]} \frac{\lambda_i(B^\top P B)}{\lambda_i(B^\top P B) + \Lambda_{R, ii}} \geq \frac{1}{2}$. }\fi
As in \Cref{prop: dimensional dependence}, $\lambda_{\min}(\Sigma_X - \Sigma_W)$ highlights the dependence on the closed-loop state covariance, 
\ifnum\value{cdc}>0{ and $\tilde L$ describes the how easily system is excited offline. }\else{
and $\tilde L$ describes the impact of the controllability of the closed-loop system under the pre-stabilizing controller, as well as the input budget.}\fi \ifnum\value{cdc}>0{}\else{ Note that $\tilde L$ provides an upper bound on the information in the face of an optimal offline exploration policy. Studying it may therefore assist with experiment design, as in \cite{wagenmaker2021task}.}\fi\  Rather than the appearance of $\norm{\bmat{-K & I}}$ on the denominator, as we saw in \Cref{prop: dimensional dependence},  we have $\norm{\bmat{A & B}}_F$. Therefore, the bound does not diminish as a result of a large optimal controller gain. 
Lastly, observe that $ \sum_{j=1}^{du} \lambda_{n-j} (\dlyap(A_{cl}, P))$ replaces $\lambda_{\min}(P)$ from \Cref{prop: dimensional dependence}. 
This quantity captures the $\du$ smallest eigenvalues rather than just the smallest. 
If $\du = \dx$, we get all eigenvalues of $\dlyap(A_{cl}, P)$. Further note that the eigenvalues of $\dlyap(A_{cl}, P)$ diverge as $A_{cl}$ approaches marginal stability, leading to an infinite excess cost.

\section{Conclusion}
\label{s: conclusion}

We presented lower bounds for offline linear-quadratic control problems. The focus was to understand the fundamental limitations of learning controllers from offline data in terms of system-theoretic properties. 
 Several interesting consequences arose, such as the fact that our lower bound achieves the optimal dimensional dependence $\dx \du$ for underactuated systems. We also showed that there exist classes of systems where the sample complexity is exponential with the system dimension, $\dx$. We finally demonstrated that the lower bound scales in a natural way with familiar system-theoretic constants including the eigenvalues of the Riccati solution. An avenue for future work is extension of the lower bounds to the partially observed setting.  

\ifnum\value{cdc}>0{}\else{\section*{Acknowledgements} 
Bruce D. Lee is supported by the DoD through a National Defense Science \& Engineering Fellowship. Ingvar Ziemann is supported by a Swedish Research Council International Postdoc grant.  Henrik Sandberg is supported by the Swedish Research Council (grant 2016-00861). Nikolai Matni is partially supported by NSF CAREER award ECCS-2045834.}\fi

\ifnum\value{cdc}>0{
\bibliographystyle{IEEEtran}}\else{
\bibliographystyle{abbrvnat}}\fi
\bibliography{refs}

\ifnum\value{cdc}>0{}\else{
\appendix
\section{Proofs from Section~\ref{s: suboptimality lower bounds}: Excess Cost Lower Bound}

\begin{lemma}(Cauchy-Schwarz)
    \label{lem: matrix cauchy}
    For two sequences $a_1, \dots, a_n \in \R^n$ and $b_1, \dots, b_n \in \R^n$, 
    \[
        \sym \sum_{i=1}^n a_i b_i^\top \preceq \sum_{i=1}^n a_i a_i^\top + \sum_{i=1}^n b_i b_i^\top. 
    \]
\end{lemma}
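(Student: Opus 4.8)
The plan is to reduce the statement to the trivial fact that an outer product $vv^\top$ is positive semidefinite, together with the additivity of the Loewner order. First I would fix an index $i \in [n]$ and apply this fact to the vector $a_i - b_i$: since $(a_i - b_i)(a_i - b_i)^\top \succeq 0$, expanding the outer product gives
\[
a_i a_i^\top + b_i b_i^\top \succeq a_i b_i^\top + b_i a_i^\top .
\]
Then I would sum this inequality over $i = 1, \dots, n$; because $A \succeq B$ and $C \succeq D$ imply $A + C \succeq B + D$, the summed inequality reads $\sum_i a_i a_i^\top + \sum_i b_i b_i^\top \succeq \sum_i \paren{a_i b_i^\top + b_i a_i^\top}$, and the right-hand side is exactly $\sym \sum_i a_i b_i^\top$ under the convention $\sym M = M + M^\top$. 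This is the claimed bound.

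The argument is entirely elementary, so I do not anticipate any genuine obstacle; the only point to keep straight is the normalization of $\sym$ — if it is instead defined as $\tfrac12(M + M^\top)$, the same computation applied to $\tfrac{1}{\sqrt 2} a_i$ and $\tfrac{1}{\sqrt 2} b_i$ (equivalently, dividing the displayed inequality through by $2$) recovers the stated form. Note also that nothing in the proof uses that the vectors lie in $\R^n$ specifically: it holds verbatim for sequences in any $\R^d$, which is the generality in which the lemma is later invoked (e.g.\ when controlling Fisher-information-type sums of outer products).
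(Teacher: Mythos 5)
Your proof is correct and is essentially the paper's argument: the paper stacks the vectors into matrices $A = \bmat{a_1 & \dots & a_n}$, $B = \bmat{b_1 & \dots & b_n}$ and expands $(A-B)(A-B)^\top \succeq 0$, which is exactly your term-wise expansion of $(a_i-b_i)(a_i-b_i)^\top \succeq 0$ summed over $i$. Your reading of $\sym M = M + M^\top$ matches the paper's convention, so no rescaling is needed.
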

\begin{proof}
    Express $A = \bmat{a_1 & \dots & a_n}$, and $B=\bmat{b_1 & \dots & b_n}$. The result follows by rearranging the inequality $0 \preceq (A-B)(A-B)^\top$. 
\end{proof}

\begin{lemma}   
    \label{lem: covariance overlap between learned and optimal controller}
    Suppose $T \geq \frac{16\lambda_{\min}(\Sigma_X)}{\norm{\Sigma_X}^2}$. Then under event $\calE$, 
    \[
        \norm{\Sigma_X^{-1/2}\Sigma_{\Theta}^{\hat K(\calZ)}\Sigma_X^{-1/2} - I}  \leq \frac{1}{2}.
    \]
\end{lemma}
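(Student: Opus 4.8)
The plan is to analyze the evaluation rollout directly. Condition on a realization of $(\calZ,\Theta)$ in the event $\calE$ and set $\hat A_{cl} := A(\Theta)+B(\Theta)\hat K(\calZ)$, so that $\Delta := \hat A_{cl}-A_{cl}(\Theta) = B(\Theta)\paren{\hat K(\calZ)-K(\Theta)}$ satisfies $\norm{\Delta}\le\norm{B(\Theta)}\alpha$; by the definition of $\alpha$ this gives both $\norm{\Delta}\le\norm{A_{cl}(\Theta)}$ and $\norm{A_{cl}(\Theta)}\,\mathcal{J}(A_{cl}(\Theta))\,\norm{\Delta}\le\frac{1}{24}$. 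Feeding $\Delta$ into the Lyapunov certificate $P_0 := \sum_{t\ge0}(A_{cl}^\top)^t A_{cl}^t$, which satisfies $A_{cl}^\top P_0 A_{cl}-P_0=-I$ and $\norm{P_0}\le\mathcal{J}(A_{cl})$, the second bound forces $\hat A_{cl}^\top P_0\hat A_{cl}-P_0\preceq\paren{-1+\tfrac18}I\prec 0$ and hence that $\hat A_{cl}$ is Schur stable; thus its stationary covariance $\hat\Sigma_X := \dlyap(\hat A_{cl}^\top,\Sigma_W) = \sum_{s\ge 0}\hat A_{cl}^s\Sigma_W(\hat A_{cl}^\top)^s$ exists and is finite.

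Since $X_0=0$ and $W_t$ is i.i.d., $\E^\pi\brac{X_t X_t^\top \vert \calZ,\Theta} = \sum_{s=0}^{t-1}\hat A_{cl}^s\Sigma_W(\hat A_{cl}^\top)^s = \hat\Sigma_X - \hat A_{cl}^t\hat\Sigma_X(\hat A_{cl}^\top)^t$, and averaging over $t$ yields $\Sigma_\Theta^{\hat K(\calZ)} = \hat\Sigma_X - \frac{1}{T}\sum_{t=0}^{T-1}\hat A_{cl}^t\hat\Sigma_X(\hat A_{cl}^\top)^t$. I would then split $\Sigma_\Theta^{\hat K(\calZ)}-\Sigma_X = \paren{\hat\Sigma_X-\Sigma_X} - \frac{1}{T}\sum_{t=0}^{T-1}\hat A_{cl}^t\hat\Sigma_X(\hat A_{cl}^\top)^t$. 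Because $\norm{\Sigma_X^{-1/2}M\Sigma_X^{-1/2}}\le\norm{M}/\lambda_{\min}(\Sigma_X)$, it is enough to bound the operator norm of each of the two terms by $\frac{1}{4}\lambda_{\min}(\Sigma_X)$.

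For the first term, subtracting the Lyapunov equations for $\Sigma_X$ and $\hat\Sigma_X$ and setting $E:=\hat\Sigma_X-\Sigma_X$ gives $E - A_{cl}EA_{cl}^\top = \Delta\Sigma_X A_{cl}^\top + A_{cl}\Sigma_X\Delta^\top + \Delta\Sigma_X\Delta^\top + \paren{\textrm{terms linear in }E\textrm{ and }\Delta}$, i.e. $E = \dlyap(A_{cl}^\top,\cdot)$ of the right-hand side; using $\norm{\dlyap(A_{cl}^\top,M)}\le\mathcal{J}(A_{cl})\norm{M}$ and $\norm{\Delta}\le\norm{A_{cl}}$ to absorb the quadratic term, this yields $\norm{E}\le\mathcal{J}(A_{cl})\paren{c\,\norm{A_{cl}}\norm{\Sigma_X}\norm{\Delta} + c'\,\norm{A_{cl}}\norm{\Delta}\norm{E}}$ for absolute constants $c,c'$. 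Since $\norm{A_{cl}}\mathcal{J}(A_{cl})\norm{\Delta}\le\frac{1}{24}$, the $\norm{E}$-term moves to the left side, leaving $\norm{E}\lesssim\mathcal{J}(A_{cl})\norm{A_{cl}}\norm{\Sigma_X}\norm{B(\Theta)}\alpha\le\frac{1}{4}\lambda_{\min}(\Sigma_X)$ by the choice of $\alpha$ (and in particular $\norm{\hat\Sigma_X}\le 2\norm{\Sigma_X}$, which I use next). For the second term, $\hat A_{cl}\hat\Sigma_X\hat A_{cl}^\top = \hat\Sigma_X-\Sigma_W \preceq \paren{1-\frac{\lambda_{\min}(\Sigma_W)}{\norm{\hat\Sigma_X}}}\hat\Sigma_X =: \hat\gamma\,\hat\Sigma_X$ (using $\Sigma_W \succeq \frac{\lambda_{\min}(\Sigma_W)}{\norm{\hat\Sigma_X}}\hat\Sigma_X$); iterating, $\hat A_{cl}^t\hat\Sigma_X(\hat A_{cl}^\top)^t \preceq \hat\gamma^{t}\hat\Sigma_X$, so $\frac{1}{T}\sum_{t=0}^{T-1}\hat A_{cl}^t\hat\Sigma_X(\hat A_{cl}^\top)^t \preceq \frac{1}{T(1-\hat\gamma)}\hat\Sigma_X = \frac{\norm{\hat\Sigma_X}}{T\lambda_{\min}(\Sigma_W)}\hat\Sigma_X$, whose norm is at most $\frac{4\norm{\Sigma_X}^2}{T\lambda_{\min}(\Sigma_W)} \le \frac{1}{4}\lambda_{\min}(\Sigma_X)$ once $T$ is of order at least $\norm{\Sigma_X}^2/\lambda_{\min}(\Sigma_X)$ (the burn-in used in the proof of \Cref{lem: application of van trees}, under the normalization $\Sigma_W\succeq I$). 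Summing the two estimates gives $\norm{\Sigma_\Theta^{\hat K(\calZ)}-\Sigma_X}\le\frac{1}{2}\lambda_{\min}(\Sigma_X)$ on $\calE$, which is the claim.

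I expect the main obstacle to be the first term: the Lyapunov perturbation bound for $E$ is self-referential, so the steps must be ordered so that the smallness of $\norm{\Delta}$ — encoded through $\mathcal{J}(A_{cl})$ in the definition of $\alpha$ — simultaneously certifies stability of $\hat A_{cl}$ (so that $\hat\Sigma_X$ and the $\mathcal{J}$-type sums are well defined) and lets one solve the implicit inequality for $\norm{E}$, all while keeping the absolute constants below the factor $1/24$ appearing in $\alpha$. The averaging term is comparatively routine once the contraction factor $\hat\gamma$ is read off the identity $\hat A_{cl}\hat\Sigma_X\hat A_{cl}^\top = \hat\Sigma_X-\Sigma_W$.
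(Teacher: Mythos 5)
Your proposal is correct, and the constants work out (the implicit inequality for $\norm{E}$ resolves to $\norm{E}\leq\tfrac{24}{7}\mathcal{J}(A_{cl})\norm{A_{cl}}\norm{\Sigma_X}\norm{\Delta}\leq\tfrac{\lambda_{\min}(\Sigma_X)}{7}$, comfortably below $\tfrac{\lambda_{\min}(\Sigma_X)}{4}$), but your route differs from the paper's in two respects worth noting. The paper never forms the stationary covariance $\hat\Sigma_X$ of the learned closed loop: it compares the \emph{truncated} covariances $P_1^t=\sum_{k<t}\hat A_{cl}^k\Sigma_W(\hat A_{cl}^k)^\top$ and $P_2^t$ term by term via a partial-sum perturbation lemma (\Cref{lem: lyapunov perturbation}), and pushes the $O(1/T)$ error onto the tail of the \emph{optimal} closed loop's series, $\frac{1}{T}\sum_t\sum_{k\geq t}A_{cl}^k\Sigma_W(A_{cl}^k)^\top\preceq\frac{1}{T}\Sigma_X\cdot\frac{\norm{\Sigma_X}}{\lambda_{\min}(\Sigma_W)}$. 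Because only finite sums appear, the paper sidesteps the stability of $\hat A_{cl}$ entirely, whereas your decomposition $\Sigma_\Theta^{\hat K(\calZ)}-\Sigma_X=(\hat\Sigma_X-\Sigma_X)-\frac{1}{T}\sum_t\hat A_{cl}^t\hat\Sigma_X(\hat A_{cl}^\top)^t$ forces you to first certify that $\hat A_{cl}$ is Schur; your Lyapunov-certificate argument for this is correct but is genuinely extra work. What your route buys is a cleaner transient bound: the contraction factor $\hat\gamma=1-\lambda_{\min}(\Sigma_W)/\norm{\hat\Sigma_X}$ read off from $\hat A_{cl}\hat\Sigma_X\hat A_{cl}^\top=\hat\Sigma_X-\Sigma_W$ replaces the paper's nested-series manipulation, and the single implicit inequality for $E=\hat\Sigma_X-\Sigma_X$ replaces the recursion $\norm{P_1^t}\leq 2\norm{P_2^t}$ inside \Cref{lem: lyapunov perturbation}; both arguments lean on the same $1/24$ slack built into $\alpha$. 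Two shared caveats: the burn-in in the lemma statement is a typo for $T\geq 16\norm{\Sigma_X}^2/\lambda_{\min}(\Sigma_X)$, which you correctly use, and both your tail bound and the paper's suppress a factor $1/\lambda_{\min}(\Sigma_W)$ (i.e., implicitly normalize $\Sigma_W\succeq I$), which you at least flag explicitly.
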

\begin{proof}
    We first bound $\norm{\Sigma_X^{-1/2}\Sigma_{\Theta}^{\hat K(\calZ)}\Sigma_X^{-1/2} - I}$ in terms of the gap between $\hat K(\calZ)$ and $K(\Theta)$. In particular, we have that
\begin{align*}
    \norm{\Sigma_X^{-1/2}\Sigma_{\Theta}^{\hat K(\calZ)}\Sigma_X^{-1/2} - I} = \norm{\Sigma_X^{-1/2}(\Sigma_{\Theta}^{\hat K(\calZ)} - \Sigma_X)\Sigma_X^{-1/2}} \leq \frac{\norm{\Sigma_{\Theta}^{\hat K(\calZ)} - \Sigma_X}}{\lambda_{\min}(\Sigma_X)}.
\end{align*}
Note that 
\begin{align*}
    \Sigma_{\Theta}^{\hat K(\calZ)} & - \Sigma_X = 
    \frac{1}{T} \sum_{t=0}^{T-1} \sum_{k=0}^{t-1} (A+B \hat K)^k \Sigma_W \paren{(A+B \hat K)^k}^\top -\Sigma_X \\
    &= \frac{1}{T} \sum_{t=0}^{T-1} \sum_{k=0}^{t-1} (A+B \hat K)^k \Sigma_W \paren{(A+B \hat K)^k}^\top - \frac{1}{T} \sum_{t=0}^{T-1} \sum_{k=0}^\infty (A+BK)^k \Sigma_W \paren{(A+BK)^k} \\
    &= \frac{1}{T} \sum_{t=0}^{T-1} \paren{\sum_{k=0}^{t-1} (A+B \hat K)^k \Sigma_W \paren{(A+B \hat K)^k}^\top - \sum_{k=0}^{t-1} (A+BK)^k \Sigma_W \paren{(A+BK)^k}}\\
    &\qquad - \frac{1}{T} \sum_{t=0}^{T-1} \sum_{k=t}^{\infty} (A+BK)^k \Sigma_W \paren{(A+BK)^k}^\top.
\end{align*} 
Then 
\begin{align*}
     &\norm{\Sigma_{\Theta}^{\hat K(\calZ)} - \Sigma_X} \\
     &\leq \frac{1}{T}\norm{\sum_{t=0}^{T-1} \paren {\sum_{k=0}^{t-1} (A+B \hat K)^k \Sigma_W \paren{(A+B \hat K)^k}^\top - \sum_{k=0}^{t-1} (A+BK)^k \Sigma_W \paren{(A+BK)^k}} } \\
     &\qquad + \frac{1}{T} \norm{\sum_{t=0}^{T-1} \sum_{k=t}^{\infty} (A+BK)^k \Sigma_W \paren{(A+BK)^k}^\top} \\
     &\leq \frac{1}{T}\norm{\sum_{t=0}^{T-1} \paren{ \sum_{k=0}^{t-1} (A+B \hat K)^k \Sigma_W \paren{(A+B \hat K)^k}^\top - \sum_{k=0}^{t-1} (A+BK)^k \Sigma_W \paren{(A+BK)^k} }} \\
     &\qquad + \frac{1}{T} \norm{\sum_{t=0}^{\infty} (A+BK)^{t} \sum_{k=0}^{\infty} (A+BK)^k \Sigma_W \paren{(A+BK)^k}^\top \paren{(A+BK)^t}^\top} \\
      &\leq \frac{1}{T}\norm{\sum_{t=0}^{T-1} \paren{\sum_{k=0}^{t-1} (A+B \hat K)^k \Sigma_W \paren{(A+B \hat K)^k}^\top - \sum_{k=0}^{t-1} (A+BK)^k \Sigma_W \paren{(A+BK)^k} }} + \frac{\norm{\Sigma_X}^2}{T}.
\end{align*}
With $T \geq \frac{4 \norm{\Sigma_X}^2}{\lambda_{\min}(\Sigma_X)}$, the quantity $\frac{\norm{\Sigma_X}^2}{T}$ is upper bounded by $\frac{\lambda_{\min}(\Sigma_X)}{4}$. To bound the remaining term, we apply Lemma~\ref{lem: lyapunov perturbation} to show that under event $\calE$, 
\begin{align*}
    \frac{1}{T}\norm{\sum_{t=0}^{T-1} \paren{\sum_{k=0}^{t-1} (A+B \hat K)^k \Sigma_W \paren{(A+B \hat K)^t}^\top - \sum_{k=0}^{t-1} (A+BK)^k \Sigma_W \paren{(A+BK)^k} } }\\
    \leq 6 \mathcal{J}(A+BK)^2 \norm{B} \norm{A+BK} \norm{\Sigma_X}\norm{\hat K - K} \leq \frac{\lambda_{\min}(\Sigma_X)}{4}.
\end{align*}
This yields the inequality $\norm{\hat \Sigma_{\Theta}^{\hat K(\calZ)} - \Sigma_X} \leq \frac{\lambda_{\min}(\Sigma_X)}{2}$, as we needed to show.
\end{proof}

\begin{lemma}
    \label{lem: lyapunov perturbation}
     Given a controller $K$ that stabilizes the system $x_{t+1} = Ax_t + Bu_t + w_t$, and another controller $\hat K$ such that $\norm{K - \hat K} \leq \min \curly{\frac{\norm{A+BK}}{\norm{B}}, \frac{1}{6 \mathcal{J}(A+BK) \norm{B}  \norm{A+BK}}}$. Let $P_1^0 = P_2^0 = \Sigma_W$, and 
    \begin{align*}
        P_1^t &= (A+B \hat K) P_1^{t-1} (A+B\hat K)^\top + \Sigma_W \\
        P_2^t &= (A+BK)P_2^{t-1} \paren{A+BK}^\top + \Sigma_W. 
    \end{align*}
    Then 
     \begin{align*}
    \frac{1}{T}\norm{\sum_{t=0}^{T-1} P_1^t - P_2^t} \leq 6 \mathcal{J}(A+BK) \norm{B} \norm{A+BK} \norm{\Sigma_X}\norm{\hat K - K}.
    \end{align*}
\end{lemma}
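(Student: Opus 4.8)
The plan is to reduce everything to a perturbation estimate on the powers of the perturbed closed‑loop matrix $\hat A_{cl} := A+B\hat K$ versus those of $A_{cl} := A+BK$. Write $\Delta := \hat A_{cl}-A_{cl} = B(\hat K-K)$, so $\|\Delta\| \le \|B\|\,\|\hat K-K\| =: \delta$, and note that the first part of the hypothesis gives $\delta \le \|A_{cl}\|$, hence $\|\hat A_{cl}\| \le 2\|A_{cl}\|$. Also observe that $P_2^t \preceq \Sigma_X$ for every $t$ (it is a truncation of the series defining $\Sigma_X$), so $\|P_2^t\| \le \|\Sigma_X\|$.

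First I would set up a recursion for $D_t := P_1^t - P_2^t$. Subtracting the two defining recursions and substituting $P_1^{t-1} = D_{t-1}+P_2^{t-1}$, one gets $D_t = \hat A_{cl} D_{t-1}\hat A_{cl}^\top + E_t$ with $D_0 = 0$ and $E_t := \Delta P_2^{t-1}A_{cl}^\top + A_{cl}P_2^{t-1}\Delta^\top + \Delta P_2^{t-1}\Delta^\top$, so that $\|E_t\| \le \delta\|P_2^{t-1}\|\,(2\|A_{cl}\|+\delta) \le 3\delta\|\Sigma_X\|\,\|A_{cl}\|$. Unrolling the recursion, $D_t = \sum_{s=1}^{t}\hat A_{cl}^{\,t-s}E_s(\hat A_{cl}^\top)^{t-s}$, and swapping the order of summation, $\sum_{t=0}^{T-1}D_t = \sum_{s=1}^{T-1}\sum_{r=0}^{T-1-s}\hat A_{cl}^{\,r}E_s(\hat A_{cl}^\top)^{r}$. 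Since each $E_s$ is symmetric, $-\|E_s\|I \preceq E_s \preceq \|E_s\|I$, hence $\bigl\|\sum_{t=0}^{T-1}D_t\bigr\| \le \bigl(\sum_{s=1}^{T-1}\|E_s\|\bigr)\bigl(\sum_{r\ge 0}\|\hat A_{cl}^{\,r}\|^2\bigr)$. Writing $\hat{\mathcal J} := \sum_{r\ge 0}\|\hat A_{cl}^{\,r}\|^2$ and dividing by $T$ gives $\tfrac1T\bigl\|\sum_{t=0}^{T-1}D_t\bigr\| \le 3\delta\|\Sigma_X\|\,\|A_{cl}\|\,\hat{\mathcal J}$. (An equivalent route avoids the recursion: expand $P_1^t-P_2^t = \sum_k[\hat A_{cl}^{\,k}\Sigma_W(\hat A_{cl}^{\,k})^\top - A_{cl}^{\,k}\Sigma_W(A_{cl}^{\,k})^\top]$, use the telescoping identity $\hat A_{cl}^{\,k}-A_{cl}^{\,k} = \sum_{j}\hat A_{cl}^{\,j}\Delta A_{cl}^{\,k-1-j}$, and apply \Cref{lem: matrix cauchy} to the cross terms.)

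It then remains to prove $\hat{\mathcal J} \le 2\,\mathcal J(A_{cl})$, which is where the second part of the hypothesis, $\delta \le \tfrac{1}{6\,\mathcal J(A_{cl})\|A_{cl}\|}$, is used; combining it with the previous display and $\delta \le \|B\|\,\|\hat K-K\|$ then yields exactly $6\,\mathcal J(A_{cl})\|B\|\,\|A_{cl}\|\,\|\Sigma_X\|\,\|\hat K-K\|$. The plan for this sub‑claim is a bootstrapping argument: from $\hat A_{cl}^{\,r} = A_{cl}^{\,r} + \sum_{j=0}^{r-1}A_{cl}^{\,j}\Delta\,\hat A_{cl}^{\,r-1-j}$ one gets $\|\hat A_{cl}^{\,r}\| \le \|A_{cl}^{\,r}\| + \delta\sum_{j}\|A_{cl}^{\,j}\|\,\|\hat A_{cl}^{\,r-1-j}\|$; squaring, summing over $r\le R$, and using Young's inequality for convolutions ($\|a\ast u\|_{\ell^2}\le\|a\|_{\ell^1}\|u\|_{\ell^2}$) together with the elementary bound $\sum_k\|A_{cl}^{\,k}\| \le 2\,\mathcal J(A_{cl})$ (split into even/odd powers, using $\|A_{cl}^{\,2j}\|\le\|A_{cl}^{\,j}\|^2$ and $\|A_{cl}^{\,2j+1}\|\le\tfrac12(\|A_{cl}^{\,j+1}\|^2+\|A_{cl}^{\,j}\|^2)$) closes a self‑referential inequality of the form $\beta_R \le 2\,\mathcal J(A_{cl}) + (\text{small})\,\beta_R$ on the partial sums $\beta_R := \sum_{r\le R}\|\hat A_{cl}^{\,r}\|^2$; letting $R\to\infty$ both certifies that $\hat A_{cl}$ is stable and gives $\hat{\mathcal J}\le 2\,\mathcal J(A_{cl})$.

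The main obstacle is precisely this last step: bounding $\hat{\mathcal J}$, i.e. showing that the quantitative stability margin of $A_{cl}$ encoded by $\mathcal J(A_{cl})$ survives the perturbation. Care is needed because $\hat A_{cl}$ is not assumed stable a priori, so the argument must run on finite partial sums and only afterward pass to the limit, and the convolution/Cauchy–Schwarz book‑keeping must be tight enough that the numerical constant in the hypothesis makes the self‑referential term strictly contractive. Everything else — the recursion for $D_t$, the bound on $\|E_t\|$, and the final assembly — is routine triangle‑inequality, submultiplicativity, and the monotonicity $P_2^t\preceq\Sigma_X$.
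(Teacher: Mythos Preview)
Your approach is sound but takes a genuinely different decomposition from the paper's, and the choice matters for how cleanly the constants close.

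You unroll the recursion for $D_t$ with powers of the \emph{perturbed} matrix $\hat A_{cl}$ on the outside, so the error terms $E_t$ involve only the unperturbed $P_2^{t-1}\preceq \Sigma_X$; the price is that you must then establish the quantitative stability bound $\hat{\mathcal J}\le 2\,\mathcal J(A_{cl})$, which you rightly flag as the main obstacle. The paper makes the opposite choice: it expands $\hat A_{cl}=A_{cl}+B(\hat K-K)$ inside the recursion and writes
\[
P_1^t-P_2^t = A_{cl}(P_1^{t-1}-P_2^{t-1})A_{cl}^\top + \mathrm{sym}\!\bigl(B(\hat K-K)P_1^{t-1}A_{cl}^\top\bigr) + B(\hat K-K)P_1^{t-1}(\hat K-K)^\top B^\top,
\]
then unrolls with powers of the \emph{unperturbed} $A_{cl}$. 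Now $\mathcal J(A_{cl})$ appears directly and no $\hat{\mathcal J}$ is ever needed; the price is that the error terms involve the perturbed covariance $\|P_1^t\|$, but this is handled by a one-line pointwise bootstrap $\|P_1^t\|\le\|P_2^t\|+3\,\mathcal J(A_{cl})\|B\|\|A_{cl}\|\|\hat K-K\|\,\|P_1^t\|$, which with the second hypothesis gives $\|P_1^t\|\le 2\|P_2^t\|\le 2\|\Sigma_X\|$ and the exact constant $6$.

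Your $\ell^2$/Young bootstrap on $\hat{\mathcal J}$ is more delicate and, with the hypotheses exactly as stated, does not quite close to the claimed factor $2$: your chain yields $\delta\,\|a\|_{\ell^1}\le \tfrac{1}{3\|A_{cl}\|}$, and already for scalar $A_{cl}=\rho$ near $1$ this only gives $\hat{\mathcal J}\le\tfrac94\,\mathcal J(A_{cl})$, producing a final constant $27/4$ rather than $6$; when $\|A_{cl}\|$ is small you must additionally invoke $\delta\le\|A_{cl}\|$ and a sharper bound on $\|a\|_{\ell^1}$ to make the self-referential term contractive at all. This is a quantitative wrinkle rather than a structural gap, but the paper's unrolling avoids it entirely by never needing $\hat{\mathcal J}$.
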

\begin{proof}
    We have that
    \begin{align*}
        P_1^t - P_2^t &= (A+BK)(P_1^{t-1} - P_2^{t-1})(A+BK)^\top \\
        &\qquad + \sym(B(\hat K - K) P_1^{t-1} (A+BK)^\top)+ B(\hat K - K)^\top P_1^{t-1} (\hat K - K)^\top B^\top,
    \end{align*}
    where for a square matrix $M$, $\sym(M) = M + M^\top$. Therefore,
    \begin{align*}
        P_1^t - P_2^t &= \sum_{k=1}^{t-1} (A+BK)^k \bigg(\sym(B(\hat K - K) P_1^{k-1} (A+BK)^\top)\\ 
        &\qquad+ B(\hat K - K)^\top P_1^{k-1} (\hat K - K)^\top B^\top \bigg) \paren{(A+BK)^k}^\top.
    \end{align*}
    By the triangle inquality and submultiplicativity, we have
    \begin{align}
        \label{eq: lyap gap bound}
        \norm{P_1^t - P_2^t} &\leq \calJ(A+BK) \norm{B}\norm{\hat K - K} \norm{P_1^t}\paren{\norm{\hat K - K}\norm{B} + 2\norm{A+BK}} \\
        &\leq 3\calJ(A+BK) \norm{B}\norm{\hat K - K} \norm{P_1^t}\norm{A+BK},
    \end{align}
    where the first inequality leveraged the fact that $\norm{P_1^{k-1}}\leq \norm{P_1^t}$ for $k-1 \leq t$, and the last inequality follows from the fact that $\norm{\hat K - K} \leq \frac{\norm{A+BK}}{\norm{B}}$.
    Next, note that
    \begin{align*}
        \norm{P_1^t} &\leq \norm{P_2^t} + \norm{P_1^t - P_2^2} \leq \norm{P_2^t} + 3\calJ(A+BK) \norm{B}\norm{\hat K - K} \norm{P_1^t}\norm{A+BK},
    \end{align*}
    so 
    \begin{align*}
        \norm{P_1^t} &\leq \frac{\norm{P_2^t}}{1 - 3\calJ(A+BK) \norm{B}\norm{\hat K - K} \norm{A+BK}} \leq 2 \norm{P_2^t},
    \end{align*}
    where the last inequality follows from the fact that $\norm{\hat K - K} \leq \frac{1}{6 \mathcal{J}(A+BK) \norm{B}  \norm{A+BK}}$. Substituting this into \eqref{eq: lyap gap bound} provides the inequality in the Lemma statement. 
\end{proof}

\subsection{Proof of \Cref{lem: fisher bound}}

\label{s: pf of fisher bound}
\begin{proof}
For any vector $w$, 
    \begin{align*}
        w^\top &I_p(\theta) w= \E_{\theta} \sum_{n=1}^N \sum_{t=0}^{T-1} w^\top \paren{\dop_\theta \VEC \bmat{A(\theta) & B(\theta)}}^\top  \paren{ Z_{t,n} Z_{t,n}^\top \otimes \Sigma_W^{-1} }\paren{\dop_\theta \VEC \bmat{A(\theta) & B(\theta)}} w \\
        &\leq \frac{1}{\lambda_{\min}(\Sigma_W)} \E_{\theta} \sum_{n=1}^N \sum_{t=0}^{T-1} w^\top \paren{ \dop_\theta \VEC \bmat{A(\theta) & B(\theta)}}^\top  \paren{ Z_{t,n} Z_{t,n}^\top \otimes I } \paren{\dop_\theta \VEC \bmat{A(\theta) & B(\theta)}} w \\
        &= \frac{1}{\lambda_{\min}(\Sigma_W)} \E_{\theta} \sum_{n=1}^N \sum_{t=0}^{T-1} w^\top \paren{ \dop_\theta \VEC \bmat{A(\theta) & B(\theta)}}^\top  \VEC \paren{\VEC^{-1}  \paren{\paren{\dop_\theta \VEC \bmat{A(\theta) & B(\theta)}} w}  Z_{t,n} Z_{t,n}^\top}  \\
        &= \frac{1}{\lambda_{\min}(\Sigma_W)} \E_{\theta} \sum_{n=1}^N \sum_{t=0}^{T-1} \trace\bigg( \VEC^{-1}  \paren{\paren{\dop_\theta \VEC \bmat{A(\theta) & B(\theta)}} w} \\
        &\qquad\qquad \cdot Z_{t,n} Z_{t,n}^\top \paren{\VEC^{-1} \paren{ \paren{ \dop_\theta \VEC \bmat{A(\theta) & B(\theta)}}w}}^\top \bigg)
    \end{align*}
     where the $\VEC^{-1}$ operator maps a vector $v\in\R^{\dx d_k}$ to  a matrix $\VEC^{-1} v \in \R^{\dx \times d_k}$ as $\VEC^{-1} v = \bmat{v_{1:\dx} & v_{\dx+1:2\dx} & \dots & v_{(d_k-1)\dx+1:d_k\dx}}$. The second to last inequality follows from the vectorization identity, $ \VEC (XYZ) = (Z^\top \otimes X) \VEC(Y)$, and the last line follows from the identity $\trace(XY) = \VEC(X) \VEC(Y)^\top$. 
     
     Observe that the quantity $\VEC^{-1} \paren{ \paren{ \dop_\theta \VEC \bmat{A(\theta) & B(\theta)}}w}$ may be expressed as 
     \begin{align*}
        \VEC^{-1} & \paren{ \paren{ \dop_\theta \VEC \bmat{A(\theta) & B(\theta)}}w} \\
        & = \bmat{(\dop_\theta A_1(\theta)) w & \dots & (\dop_\theta A_{\dx}(\theta)) w & (\dop_\theta B_1(\theta)) w & \dots & (\dop_\theta B_{\du}(\theta)) w } \\
        & = \bmat{\VEC^{-1} \paren{ \paren{ \dop_\theta \VEC A(\theta)} w} & \VEC^{-1} \paren{ \paren{ \dop_\theta \VEC B(\theta)} w}}, \\
     \end{align*}
     where $A_i$ and $B_i$ denote the $i^\textrm{th}$ column of $A$ and $B$ respectively. Then the above bound may be expressed
     \begin{align*}
          w^\top &I_p(\theta) w \leq  \frac{1}{\lambda_{\min}(\Sigma_W)} \E_{\theta} \sum_{n=1}^N \sum_{t=0}^{T-1} \trace\bigg( \bmat{\VEC^{-1} \paren{ \paren{ \dop_\theta \VEC A(\theta)} w} & \VEC^{-1} \paren{ \paren{ \dop_\theta \VEC B(\theta)} w}} \\
        &\qquad\qquad \cdot Z_{t,n} Z_{t,n}^\top \bmat{\VEC^{-1} \paren{ \paren{ \dop_\theta \VEC A(\theta)} w} & \VEC^{-1} \paren{ \paren{ \dop_\theta \VEC B(\theta)} w}}^\top \bigg) \\
        &= \frac{1}{\lambda_{\min}(\Sigma_W)} \E_{\theta} \sum_{n=1}^N \sum_{t=0}^{T-1} \norm{ \VEC^{-1} \paren{ \paren{ \dop_\theta \VEC A(\theta)} w} X_{t,n} +  \VEC^{-1} \paren{ \paren{ \dop_\theta \VEC B(\theta)} w} U_{t,n}}_F^2 \\
        &\leq  \frac{2}{\lambda_{\min}(\Sigma_W)} \E_{\theta} \sum_{n=1}^N \sum_{t=0}^{T-1} \norm{ \VEC^{-1} \paren{ \paren{ \dop_\theta \VEC A(\theta)} w} X_{t,n}}_F^2 +  \norm{\VEC^{-1} \paren{ \paren{ \dop_\theta \VEC B(\theta)} w} U_{t,n}}_F^2 \\
        &= \frac{2}{\lambda_{\min}(\Sigma_W)} \E_{\theta} \sum_{n=1}^N \sum_{t=0}^{T-1} \trace\paren{\VEC^{-1} \paren{ \paren{ \dop_\theta \VEC A(\theta)} w} X_{t,n}  X_{t,n}^\top \paren{\VEC^{-1} \paren{ \paren{ \dop_\theta \VEC A(\theta)} w}}^\top} \\
        &\qquad +  \trace\paren{\VEC^{-1} \paren{ \paren{ \dop_\theta \VEC B(\theta)} w} U_{t,n} U_{t,n}^\top \paren{\VEC^{-1} \paren{ \paren{ \dop_\theta \VEC B(\theta)} w}}^\top }.
     \end{align*}
    We may pull the summations and the expectation inside the trace, and pull out the norms of ${\E_{\theta} \sum_{n=1}^N \sum_{t=0}^{T-1} X_{t,n} X_{t,n}^\top }$ and ${\E_{\theta} \sum_{n=1}^N \sum_{t=0}^{T-1} U_{t,n} U_{t,n}^\top }$ to arrive at the following bound.
    \begin{align*}
        w^\top I_p(\theta) w  &\leq \frac{2\norm{D_{\theta} \VEC A(\theta) w}^2}{\lambda_{\min}(\Sigma_W)}  \norm{\E_{\theta} \sum_{n=1}^N \sum_{t=0}^{T-1} X_{t,n} X_{t,n}^\top } + \frac{2 \norm{D_{\theta} \VEC B(\theta) w}^2}{\lambda_{\min}(\Sigma_W)} \norm{\E_{\theta} \sum_{n=1}^N \sum_{t=0}^{T-1} U_{t,n} U_{t,n}^\top } \\
         & \leq \frac{2\norm{D_{\theta} \VEC A(\theta) w}^2 + 4\norm{D_{\theta} \VEC B(\theta) w}^2 \norm{F}^2}{\lambda_{\min}(\Sigma_W)}  \norm{\E_{\theta} \sum_{n=1}^N \sum_{t=0}^{T-1} X_{t,n} X_{t,n}^\top } \\
         &\qquad \qquad + \frac{4\norm{D_{\theta} \VEC B(\theta) w}^2}{\lambda_{\min}(\Sigma_W)} \norm{\E_{\theta} \sum_{n=1}^N \sum_{t=0}^{T-1} \tilde U_{t,n} \tilde U_{t,n}^\top } \\
        & \leq \frac{2\norm{D_{\theta} \VEC A(\theta) w}^2 + 4\norm{D_{\theta} \VEC B(\theta) w}^2 \norm{F}^2}{\lambda_{\min}(\Sigma_W)}  \norm{\E_{\theta} \sum_{n=1}^N \sum_{t=0}^{T-1} X_{t,n} X_{t,n}^\top } \\
        &\qquad \qquad + 4\frac{\norm{D_{\theta} \VEC B(\theta) w}^2}{\lambda_{\min}(\Sigma_W)}  \sigma_{\tilde u}^2 N T,
    \end{align*}
    where the first inequality follows by applying Cauchy-Schwarz and submultiplicativity to bound $\norm{\E_{\theta} \sum_{n=1}^N \sum_{t=0}^{T-1} U_{t,n} U_{t,n}^\top } \leq 2\norm{F}^2\norm{\E_{\theta} \sum_{n=1}^N \sum_{t=0}^{T-1} X_{t,n} X_{t,n}^\top }+2\norm{\E_{\theta} \sum_{n=1}^N \sum_{t=0}^{T-1} \tilde U_{t,n} \tilde U_{t,n}^\top }$. The second inequality follows by using \eqref{eq:budgeteq} to bound $\norm{\E_{\theta} \sum_{n=1}^N \sum_{t=0}^{T-1} \tilde U_{t,n} \tilde U_{t,n}^\top }$. 
    To bound the quantity  $\norm{\E_{\theta} \sum_{n=1}^N \sum_{t=0}^{T-1} X_{t,n} X_{t,n}^\top }$, observe that by \Cref{lem: matrix cauchy},
    \begin{align*}
        &\E_{\theta} \sum_{n=1}^N \sum_{t=0}^{T-1} X_{t,n} X_{t,n}^\top \\
        &= \E_{\theta} \sum_{n=1}^N \sum_{t=0}^{T-1} \paren{\sum_{k=0}^{t-1} (A+BF)^{t-1-k} W_{k,n} + (A+BF)^{t-1-k} B \tilde U_{k,n}} \\
        & \qquad \cdot \paren{\sum_{k=0}^{t-1} (A+BF)^{t-1-k} W_{k,n} + (A+BF)^{t-1-k} B \tilde U_{k,n}}^\top \\
        &\preceq 2 \E_{\theta} \sum_{n=1}^N \sum_{t=0}^{T-1} \paren{\sum_{k=0}^{t-1} (A+BF)^{t-1-k} W_{k,n} }\paren{\sum_{k=0}^{t-1} (A+BF)^{t-1-k} W_{k,n}}^\top \\
        &\qquad + 2 \E_{\theta} \sum_{n=1}^N \sum_{t=0}^{T-1} \paren{\sum_{k=0}^{t-1} (A+BF)^{t-1-k} B \tilde U_{k,n} }\paren{\sum_{k=0}^{t-1} (A+BF)^{t-1-k} B \tilde U_{k,n}}^\top.
    \end{align*}
    The first term may be simplified using the properties of the sequence $W_k$:
    \begin{align*}
        &\E_{\theta} \sum_{n=1}^N \sum_{t=0}^{T-1} \paren{\sum_{k=0}^{t-1} (A+BF)^{t-1-k} W_{k,n} }\paren{\sum_{k=0}^{t-1} (A+BF)^{t-1-k} W_{k,n}}^\top \\
        &= N \sum_{t=0}^{T-1} \sum_{k=0}^{t-1} (A+BF)^{k} \Sigma_W \paren{(A+BF)^{k}}^\top \preceq NT \sum_{t=0}^{\infty} (A+BF)^{t} \Sigma_W \paren{(A+BF)^{t}}^\top.
    \end{align*}
    We will massage the second term to bound it using the input energy bound in \eqref{eq:budgeteq}:
    \begin{align*}
        &\norm{\sum_{t=0}^{T-1} \paren{\sum_{k=0}^{t-1} (A+BF)^{t-1-k} B \tilde U_{k,n} }\paren{\sum_{k=0}^{t-1} (A+BF)^{t-1-k} B\tilde U_{k,n}}^\top} \\
        &=  \norm{\sum_{t=0}^{T-1} \sum_{k=0}^{t-1} \sum_{j=0}^{t-1}  \paren{(A+BF)^{t-1-k} B \tilde U_{k,n} }\paren{(A+BF)^{t-1-j} B \tilde U_{j,n}}^\top} \\
        & \leq \sum_{t=0}^{T-1} \sum_{k=0}^{t-1} \sum_{j=0}^{t-1}  \norm{(A+BF)^{t-1-k} B }\norm{(A+BF)^{t-1-j} B} \norm{\tilde U_{j,n}}\norm{\tilde U_{k,n}} \\
        &\leq \sum_{t=0}^{T-1} \sum_{k=0}^{t-1} \sum_{j=0}^{t-1}  \norm{(A+BF)^{t-1-k} B }\norm{(A+BF)^{t-1-j} B} \frac{ \norm{\tilde U_{j,n}}^2 + \norm{\tilde U_{k,n}}^2}{2} \\
        &=\sum_{t=0}^{T-1} \sum_{k=0}^{t-1} \sum_{j=0}^{t-1}  \norm{(A+BF)^{t-1-k} B}\norm{(A+BF)^{t-1-j} B} \norm{\tilde U_{j,n}}^2 \\
        &\leq \sum_{t=0}^{T-1} \sum_{k=0}^{T-1} \sum_{j=0}^{t-1}  \norm{(A+BF)^{t-1-k} B }\norm{(A+BF)^{t-1-j} B} \norm{\tilde U_{j,n}}^2 \\
        &= \paren{\sum_{k=0}^{T-1} \norm{(A+BF)^{t-1-k} B}} \sum_{j=0}^{T-1} \norm{\tilde U_{j,n}}^2 \sum_{t=j+1}^{T-1} \norm{A^{t-1-j} B}  \\
        &\leq \paren{\sum_{k=0}^{T-1} \norm{(A+BF)^{t-1-k} B}}^2 \sum_{j=0}^{T-1} \norm{\tilde U_{j,n}}^2.
    \end{align*}
    Then 
    \begin{align*}
        &\norm{\E_{\theta} \sum_{n=1}^N \sum_{t=0}^{T-1} \paren{\sum_{k=0}^{t-1} (A+BF)^{t-1-k} B \tilde U_{k,n} }\paren{\sum_{k=0}^{t-1} (A+BF)^{t-1-k} B\tilde U_{k,n}}^\top } \\
        &\leq  \sigma_{\tilde u}^2 NT \paren{\sum_{t=0}^{\infty} \norm{(A+BF)^t B}}^2.
    \end{align*}
    Combining these results proves the statement. 
    
\end{proof}

\subsection{Proof of \Cref{thm: finite data bound}}
\label{s: proof of finite data bound}
\begin{proof}
    We must show that for all $\pi \in \Pi^{\mathsf{lin}}$, $\sup_{\theta'\in\calB(\theta,\e)} \mathsf{EC}_T^\pi(\theta') \geq \frac{G}{8 NT \bar L}$. Suppose that for some $\pi \in \Pi^\mathsf{lin}$, $\sup_{\theta' \in \calB(\theta, \e) }\mathsf{EC}_T^\pi(\theta') \leq \frac{G}{8 NT \bar L}$. We have by \Cref{lem: coordinate transformation} that this policy satisfies (under the assumption $T \geq \sup_{\theta'\in\calB(\theta,\e)} \frac{16 \norm{\Sigma_X(\theta')}}{\lambda_{\min}(\Sigma_X(\theta')}$ for case 2) 
    \begin{align*}
        \sup_{\theta'\in\calB(\theta,\e)} \mathsf{EC}_T^\pi(\theta') \geq \frac{\tr\left(\Xi_{\theta,\e}  \E[\dop_\theta \VEC  K(\Theta) V \mathbf{1}_\calG] \E [\dop_\theta  \VEC K(\Theta) V 
         \mathbf{1}_\calG ]^\T \right)}{ \norm{V^\top \paren{\E \I_p(\Theta)+\J(\lambda)}V} }.
    \end{align*}
    The burn-in requirement $TN \geq \frac{\norm{J(\lambda)}}{\bar L}$ enables upper bounding $\norm{V^\top J(\lambda) V}$ by $TN \bar L$.  \Cref{lem: fisher bound} then allows us to upper bound the denominator in \eqref{eq: single policy lb} by $2TN \bar L$. 

    To remove the indicators from the lower bound in \eqref{eq: single policy lb}, we take an infimum over $\tilde \theta, \theta' \in \calB(\theta,\e)$ to lower bound the numerator in \eqref{eq: single policy lb} by $\bfP[\calG]^2 G$. For case 1, we immediately have $\bfP[\calG]^2=  \bfP[\Omega]^2 = 1$. For case 2, we leverage the assumptions that the prior is small and that the burn-in time is satisfied to show that $\bfP[\calG]^2 =\bfP[\calE]^2 \geq \frac{1}{4}$. To do so, we must show that $\sup_{\theta' \in \calB(\theta,\e)} \norm{\hat K(\calZ) - K(\theta')}$ is small with high probability. Note that 
     \begin{align*}
        \sup_{\theta' \in \calB(\theta,\e)} \norm{\hat K(\calZ) - K(\theta')} &\leq  \sup_{\theta_2 \in \calB(\theta,\e)} \inf_{\theta_1 \in \calB(\theta,\e)}  \norm{\hat K(\calZ) - K(\theta_1)} + \norm{K(\theta_1) - K(\theta_2)} \\
        &\leq \inf_{\theta_1 \in \calB(\theta,\e)}  \norm{\hat K(\calZ) - K(\theta_1)} + \sup_{\theta_1, \theta_2 \in \calB(\theta,\e)}  \norm{K(\theta_1) - K(\theta_2)}.
    \end{align*}
    
    By Propositions 1 and 2 of \cite{mania2019certainty}, 
        $\sup_{\theta_1, \theta_2 \in \calB(\theta,\e)}  \norm{K(\theta_1) - K(\theta_2)} \leq c_1 \epsilon$,
    so long as $\e \leq c_2$. 
    Therefore, if $\varepsilon \leq \min \curly{\frac{\alpha}{2 c_1}, c_2}$, then $\sup_{\theta_1, \theta_2 \in \calB(\theta,\e)}  \norm{K(\theta_1) - K(\theta_2)} \leq \alpha/2$. 
    
    For the remaining term, note that for any $\calZ$, $\Theta$ we have $\inf_{\theta_1 \in \calB(\theta,\e)}  \norm{\hat K(\calZ) - K(\theta_1)} \leq \norm{\hat K(\calZ) - K(\Theta)} = \sqrt{\norm{\hat K(\calZ) - K(\Theta)}^2}$. This quantity may in turn be bounded by the excess cost:
    \begin{align*}
        \norm{\hat K(\calZ) - K(\Theta)}^2 \leq  \frac{\tr\paren{(\hat K(\calZ) - K(\Theta))^\top \Psi(\Theta) (\hat K(\calZ)-K(\Theta) \Sigma_{\Theta}^{\hat K(\calZ)}  )}}{ \lambda_{\min}(\Sigma_{\Theta}^{\hat K(\calZ)}) \lambda_{\min}(\Psi(\Theta))} \leq \frac{\mathsf{EC}_T^{\pi(\cdot; \calZ)}(\Theta)}{\lambda_{\min}(\Sigma_W) \lambda_{\min}(R)},
    \end{align*}
    where $\mathsf{EC}_T^{\pi(\cdot; \calZ)}(\Theta)$ is the conditional excess cost given $\calZ$ and $\Theta$:
    \[
        \mathsf{EC}_T^{\pi(\cdot; \calZ)}(\Theta) =\frac{1}{T} \E^\pi \brac{\sum_{t=0}^{T-1}  (U_t- K(\Theta)X_t )^\T \Psi(\Theta)  (U_t-K(\Theta)X_t ) \vert \calZ, \Theta}.
    \]
    By Markov's inequality, we have that
    \begin{align*}
        \bfP[\mathsf{EC}_T^{\pi(\cdot; \calZ)}(\Theta) > \lambda_{\min}(\Sigma_W) \lambda_{\min}(R) \alpha^2/4] \leq \frac{4 \E[\mathsf{EC}_T^{\pi(\cdot; \calZ)}(\Theta)]}{\lambda_{\min}(\Sigma_W) \lambda_{\min}(R) \alpha^2} \leq \frac{4 \sup_{\theta' \in \calB(\theta,\e)} \mathsf{EC}_T^\pi(\theta')}{ \lambda_{\min}(\Sigma_W) \lambda_{\min}(R) \alpha^2}.
    \end{align*}
    By our assumption at the beginning of the proof that $\sup_{\theta' \in \calB(\theta, \e) }\mathsf{EC}_T^\pi(\theta') \leq \frac{G}{8 NT \bar L}$, we see that as long as the burn-in requirement is satisified, 
    \[
        \bfP[\mathsf{EC}_T^{\pi(\cdot; \calZ)}(\Theta) \leq \lambda_{\min}(\Sigma_W) \lambda_{\min}(R) \alpha^2/4] \geq \frac{1}{2}.
    \]
    As a result, $\bfP[\calG]\geq \frac{1}{2}$. 
    
    This in turn implies that for both case 1 and case 2, 
    \begin{align*}
        \sup_{\theta'\in\calB(\theta,\e)} \mathsf{EC}_T^\pi(\theta') \geq \frac{G}{8 TN\bar L}.
    \end{align*}
    Therefore, for all $\pi \in \Pi^{\mathsf{lin}}$, the above lower bound is satisfied, and 
    \begin{align*}
        \mathcal{EC}^{\mathsf{lin}}_T(\theta,\e) = \inf_{\pi \in \Pi^\mathsf{lin}} \sup_{\theta' \in \calB(\theta,\e)} \mathsf{EC}_T^\pi(\theta') \geq \frac{G}{8 TN\bar L}.
    \end{align*}

\end{proof}

\subsection{Proof of \Cref{cor: asymptotic lower bound}}
\label{s: asymptotic lb proof}

\begin{proof}
For the burn-in requirements to be satisfied asymptotically, it is necessary to select a prior $\lambda$ such that 
$\norm{J(\lambda)}$ grows slower than $N$. By our assumption in the corollary statement that $\e = N^{-\alpha}$ for $\alpha \in (0,\frac{1}{2})$, it is sufficient to show that $\norm{J(\lambda)} \leq \frac{c}{\e^2}$ for a constant $c$ which does not depend on $N$. To do so, let $\rho$ be a smooth density supported on $B(0,1)$, and let $\lambda(\tilde \theta)= \frac{1}{\e} \rho\paren{\frac{\tilde \theta - \theta}{\eps}}$. By the chain rule of differentiation and a change of variables, we have that
\[
    J(\lambda) = \frac{1}{\e^2} J(\rho). 
\]
\end{proof}


\subsection{Proof of \Cref{cor: global minimax}}
    \label{s: global minimax pf}
    
     We argue as in the proof of \Cref{cor: asymptotic lower bound} with $V = \bmat{0 \\ 1}$. In this perturbation direction, the lower bound evaluates to $\frac{b^2 P +1}{32}\frac{\partial K}{\partial b}(a,b)^2$. Considering $a = 1-\gamma$ and $b =\gamma$ for $0 < \gamma < 1$ and taking the limit as $\gamma\to 0$ results in the lower bound of $\infty$. 

    In particular, in this setting the solution to the Riccati equation evaluates to 
    \begin{align}
    \label{eq: scalar riccati} 
    P = \frac{ -\sqrt{(1-a^2)^2 - 2(1-a^2)b^2 + b^4} + -\sqrt{(1-a^2)^2 + 2(1+a^2)b^2 + b^4}}{2b^2}.
    \end{align}

    Next note that by the product rule, 
    \begin{align}
        \label{eq: controller derivative}
        \frac{dK}{db}(a,b) = \frac{b^2 P^2a - aP - b \frac{dP}{db}(a,b)a }{(b^2 P + 1)^2}.
    \end{align}
    Using \eqref{eq: scalar riccati}, it can be shown that the denominator of the above quantity converges to $1$. 
    We show that the numerator diverges to $\infty$ as $\gamma \to \infty$. To see that this is so, note that Section C.2.1 of \cite{simchowitz2020naive} may be used to show
    \[
        \frac{dP}{db}(a,b) = \dlyap(a_{cl}, 2 a_{cl} P K),
    \]
    where $a_{cl} = a+bK$. We have that $K = -bPa_{cl}$. Using these results, the numerator of \eqref{eq: controller derivative} may be simplified to 
    \begin{align*}
        b^2 P^2 a - Pa + \frac{2b^2 P^2 a_{cl}^2 a}{1-a_{cl}^2} &= P\paren{b^2 P a - a +2 \frac{b^2 P a_{cl}^2 a}{1-a_{cl}^2}}.
    \end{align*}
    It may be shown using Mathematica that 
    \[
    \lim_{\gamma\to\infty} \paren{b^2 P a - a +2 \frac{b^2 P a_{cl}^2 a}{1-a_{cl}^2}} = -\frac{\sqrt{2}}{2}. 
    \]
    Meanwhile, $P$ diverges to $\infty$.

\section{Proofs from \Cref{s: consequences of lower bound}}
\label{s: proofs of consequences} 

\subsection{Proof of \Cref{prop: dimensional dependence}}
\begin{proof}
        In this case the quantity $G$ in  \Cref{cor: asymptotic lower bound} with $\Gamma = \frac{1}{2}\Sigma_X$ may be lower bounded as 
   \begin{align*}
        &G = \frac{1}{2}\tr\big( (\Sigma_X \otimes \Psi)  \dop_\theta \VEC  K(\theta )V \paren{\dop_\theta  \VEC K(\theta) V }^\T \big) \\
        &\overset{(i)}{=} \frac{1}{2}\sum_{i \in [\dx \du]} \trace\paren{(\Sigma_X \otimes \Psi) d_{v_i} \VEC K(\theta) d_{v_i} \VEC K(\theta)^\top } \\
        & \geq \frac{1}{2}\dx\du \inf_{v \in \boldsymbol \Delta} \trace\big((\Sigma_X \otimes \Psi)  d_{v} \VEC K(\theta) d_{v} \VEC K(\theta)^\top \big) \\
        &\overset{(ii)}{=}  \frac{1}{2}\dx\du \inf_{v \in \boldsymbol \Delta} \trace\paren{\Psi d_{v} K(\theta) \Sigma_X d_{v}  K(\theta)^\top } \\
        &\overset{(iii)}{=} \frac{1}{2}\dx\du \inf_{\Delta : \norm{\bmat{-\Delta K & \Delta}}_F = 1} \trace\big(\Psi^{-1} \Delta^\top  P A_{cl} \Sigma_X A_{cl}^\top P \Delta \big)\\
        &\geq \frac{1}{2}\dx\du \inf_{\Delta : \norm{\bmat{-\Delta K & \Delta}}_F = 1} \trace\paren{\Delta^\top \Delta }   \frac{\lambda_{\min}(P)^2 \lambda_{\min}(\Sigma_X - \Sigma_W)}{\norm{\Psi}},
    \end{align*} 
    where $(i)$ follows from expanding the matrix $V = \bmat{v_1 & \dots & v_{\dx\du}}$ and the fact that $d_v \VEC K(\theta) = \dop_\theta \VEC K(\theta) v$, $(ii)$ follows from the identities $
        \VEC(XYZ) = (Z^\top \otimes X) \VEC(Y)$ and $\trace(\VEC(X) \VEC(Y)^\top) = \trace(XY^\top)$, and $(iii)$ follows from the directional derivative in \eqref{eq: polderman derivative}. The inequality results from the observations that $A_{cl}\Sigma_X A_{cl}^\top = \Sigma_X - \Sigma_W$.  
    We have that $\trace\paren{\Delta^\top \Delta} = \norm{\Delta}_F^2$.  Then using the fact that $1 = \norm{\bmat{-\Delta K & \Delta }}_F^2 \leq \norm{\bmat{-K & I}}^2 \norm{\Delta}_F^2$, we have $\norm{\Delta}_F^2 \geq \frac{1}{\norm{\bmat{-K & I}}^2}$.  
    The quantity $\tilde L$ upper bounds $L(\theta)$ in \Cref{cor: asymptotic lower bound} using the fact that $\norm{\dop_\theta \VEC \bmat{A(\theta) & B(\theta)} v} = \norm{v} = 1$ for all $v \in \boldsymbol \Delta$. 
    
    \end{proof}

\subsection{Proof of \Cref{prop: exponential}}
\begin{proof}
We apply \Cref{cor: asymptotic lower bound}. For the perturbation $V = \VEC \bmat{0 & B/\norm{B}_F}$, $L(\theta)$ reduces to $8 \sigma_{\tilde u}^2$. Additionally, the directional derivative becomes
\ifnum\value{cdc}>0{
\begin{align*}
    d_V K(\theta) &= \frac{1}{\norm{B}_F}\Psi^{-1} (B^\top P A_{cl} + B^\top P B K \\ &  + B^\top \texttt{dlyap}(A_{cl}, \textrm{sym}(A_{cl}^\top P B K)) A_{cl}).
\end{align*}
}\else{
\begin{align*}
    d_V K(\theta) = \frac{1}{\norm{B}_F}(B^\top P B + R)^{-1} (B^\top P A_{cl} + B^\top P B K + B^\top \texttt{dlyap}(A_{cl}, \textrm{sym}(A_{cl}^\top P B K)) A_{cl}).
\end{align*}}\fi
As $R = 1$, we may express $K = -B^\top P A_{cl}$. We also note that $\norm{B}_F = 1$. Therefore,  
\ifnum\value{cdc}>0{
\begin{equation}
\label{eq: integrator derivative}
\begin{aligned}
    d_V K(\theta) &= \Psi^{-1} \big(B^\top P A_{cl} - B^\top P B B^\top P A_{cl}  \\
    &\qquad - 2 B^\top \texttt{dlyap}(A_{cl}, A_{cl}^\top P B B^\top P A_{cl}) A_{cl} \big) \\
    &= \Psi^{-1} \big(-  B^\top \big(P (P_{\dx\dx} - 1) \\
    &\qquad +  2\texttt{dlyap}(A_{cl}, A_{cl}^\top P B B^\top P A_{cl})\big)A_{cl} \big),
\end{aligned}
\end{equation}
}\else{
\begin{equation}
\label{eq: integrator derivative}
\begin{aligned}
    d_V K(\theta) &= (B^\top P B + R)^{-1} (B^\top P A_{cl} - B^\top P B B^\top P A_{cl}  - 2 B^\top \texttt{dlyap}(A_{cl}, A_{cl}^\top P B B^\top P A_{cl}) A_{cl}) \\
    &= (B^\top P B + R)^{-1} (-  B^\top \paren{P (P_{\dx\dx} - 1) +  2\texttt{dlyap}(A_{cl}, A_{cl}^\top P B B^\top P A_{cl})}A_{cl} ),
\end{aligned}
\end{equation}}\fi
where the second equality follows by noting that the quantity $B^\top P B$ in the second term is equal the lower right scalar entry of $P$, denoted by $P_{\dx\dx}$ and then pulling out $-B^\top$ on the left, and $A_{cl}$ on the right.
Now, to lower bound $G$ from \Cref{cor: asymptotic lower bound}, we begin with the bound on $G$ in \eqref{eq: F LB exp},
\ifnum\value{cdc}>0{
\begin{align*}
    G & \geq \tr(\Psi d_V K(\theta) d_V K(\theta)^\top)  \\
    &\overset{(i)}{\geq} \frac{1}{(P_{nn}+1)} || B^\top  \big(P (P_{\dx\dx} - 1) \\ &\qquad+  \texttt{dlyap}(A_{cl}, A_{cl}^\top P B B^\top P A_{cl})\big)A_{cl} ||_F^2 \\
    &\overset{(ii)}{\geq} \frac{(P_{\dx\dx}-1)^2}{(P_{\dx\dx}+1)} \norm{B^\top  P A_{cl}}_F^2  = \frac{(P_{\dx\dx}-1)^2}{(P_{\dx\dx}+1)} \norm{K}^2.
\end{align*}
}\else{
\begin{align*}
    G & \geq \tr((B^\top P B +R) d_V K(\theta) d_V K(\theta)^\top)  \\
    &\overset{(i)}{\geq} \frac{1}{(P_{nn}+1)} \| B^\top  \paren{P (P_{\dx\dx} - 1) +  \texttt{dlyap}(A_{cl}, A_{cl}^\top P B B^\top P A_{cl})}A_{cl} \|_F^2 \\
    &\overset{(ii)}{\geq} \frac{(P_{\dx\dx}-1)^2}{(P_{\dx\dx}+1)} \norm{B^\top  P A_{cl}}_F^2  = \frac{(P_{\dx\dx}-1)^2}{(P_{\dx\dx}+1)} \norm{K}^2.
\end{align*}}\fi
Here, $(i)$ follows by substituting $d_V K(\theta)$ for the expression in \eqref{eq: integrator derivative}, pulling out the remaining $\Psi^{-1} = (P_{\dx\dx}+1)^{-1}$ term out of the trace, and writing the trace as a Frobenius norm. The inequality $(ii)$ follows by the fact that $A_{cl} \paren{P (P_{\dx\dx} - 1) +  \texttt{dlyap}(A_{cl}, A_{cl}^\top P B B^\top P A_{cl})} A_{cl}^\top \succeq A_{cl} P (P_{\dx\dx}-1) A_{cl}^\top$. 
We next show that $\norm{K}$ does not scale inversely with the exponential of the system dimension. In particular, we show $\norm{K} \geq \frac{\rho}{2}$.
To see that this is so, note that
\begin{align*}
    \norm{K} &= \norm{B^\top P A_{cl}} \geq \norm{B^\top A_{cl}} \\
    & = \norm{\bmat{0 & \hdots & 0 & \rho} + K} \geq |K_{\dx} + \rho|. 
\end{align*}
Thus $\norm{K} \geq \max\curly{|K_{\dx}|, |K_{\dx}+\rho|} \geq \rho/2$. Therefore, our excess cost is lower bounded as 
    \begin{align*}
        \liminf_{N \to \infty} \sup_{\theta' \in \mathcal{B}(\theta, N^{-\alpha}) } N \mathsf{EC}_T^\pi(\theta')   \geq \frac{G}{8 T  L(\theta)} \geq \frac{\rho^2}{64T  \sigma_{\tilde u}^2} \frac{(P_{\dx\dx}-1)^2}{(P_{\dx\dx}+1)}
    \end{align*}
    By Lemma E.4 in \cite{tsiamis2022learning}, $P_{\dx\dx} \geq  4^{\dx-2}$. Then for $\dx \geq 3$, the quantity above is lower bounded by $\frac{\rho^2}{256T \sigma_{\tilde u}^2} 4^{\dx-2}$.
\end{proof}

\subsection{Proof of \Cref{prop: system theoretic}}

\begin{proof}
    Begin with the result of \Cref{cor: asymptotic lower bound} so that 
    \[
        \liminf_{N \to \infty} \sup_{\theta' \in \mathcal{B}(\theta, N^{-\alpha}) } N \mathsf{EC}_T^\pi(\theta')   \geq \frac{G}{8 TL(\theta)}.
    \]
    The denominator $L(\theta)$ may be bounded by $\tilde L$, as in \Cref{prop: dimensional dependence}. 
    To bound $G$, begin with the expression in \eqref{eq: G system theoretic}. By the fact that $A_{cl} \Sigma_X A_{cl}^\top = \Sigma_X - \Sigma_W$, this expression is lower bounded by 
    \ifnum\value{cdc}>0{
    \begin{align*}
        &\lambda_{\min}(\Sigma_X - \Sigma_W) \tr(\Psi^{-1} B^\top \dlyap\paren{A_{cl}, P}^2 B) \\
        &= \lambda_{\min}(\Sigma_X- \Sigma_W) \tr(\Psi^{-1} B^\top P^{1/2} P^{-1/2} \\ &\qquad \cdot\dlyap\paren{A_{cl}, P}^2 P^{-1/2} P^{1/2} B).
    \end{align*}
    }\else{
    \begin{align*}
        &\lambda_{\min}(\Sigma_X - \Sigma_W) \tr((B^\top P B+R)^{-1} B^\top \dlyap\paren{A_{cl}, P}^2 B) \\
        &= \lambda_{\min}(\Sigma_X- \Sigma_W) \tr((B^\top P B+R)^{-1} B^\top P^{1/2} P^{-1/2} \dlyap\paren{A_{cl}, P}^2 P^{-1/2} P^{1/2} B).
    \end{align*}}\fi
    Now, let $P^{1/2} B = U \Sigma V^\top$. Then $B^\top P B = V \Sigma^2 V^\top = V \Lambda_{B^\top P B} V$. The trace above then reduces to 
    \begin{align*}
        &\tr(\Sigma^2 (\Sigma^2 + \Lambda_R)^{-1} U^\top P^{-1/2}\dlyap\paren{A_{cl}, P}^2 P^{-1/2}  U) \\
        &\geq \inf_{i \in [\du]} \frac{\Sigma_{ii}^2 \trace(U^\top P^{-1/2}\dlyap\paren{A_{cl}, P}^2 P^{-1/2} U)}{\Sigma_{ii}^2 + \Lambda_{R, ii}}  \\
        &\geq \inf_{i \in [\du]} \frac{\Sigma_{ii}^2 \sum_{j=1}^{du} \lambda_{n-j} (P^{-1/2}\dlyap\paren{A_{cl}, P}^2 P^{-1/2})}{\Sigma_{ii}^2 + \Lambda_{R, ii}}  \\
        &\geq \inf_{i \in [\du]} \frac{\Sigma_{ii}^2 \sum_{j=1}^{du} \lambda_{n-j} (\dlyap\paren{A_{cl}, P}) }{\Sigma_{ii}^2 + \Lambda_{R, ii}} .
    \end{align*}
\end{proof}
}\fi

\end{document}